\newif\ifdraft\draftfalse
\newif\iffull\fulltrue
\newcommand{\LONGVERSION}{\iffull{}appendix\else{}extended version\fi}
\crefname{section}{\S}{\S}
\Crefname{section}{\S}{\S}
\newcommand{\rname}[1]{[\textsc{#1}]}
\let\xinferrule\inferrule
\renewcommand{\inferrule}[3][]{%
  \ifx&#1&
    \xinferrule*{#2}{#3}
  \else
    \xinferrule*[left=\rname{#1}]{#2}{#3}
  \fi}
\def\arcr{\@arraycr}
\theoremstyle{acmdefinition}
\newtheorem*{fact*}{Fact}
\newtheorem*{definition*}{Definition}
\newcommand{\distr}{\mathbb{D}}
\newcommand{\BB}{\mathbb{B}}
\newcommand{\NN}{\mathbb{N}}
\newcommand{\RR}{\mathbb{R}}
\newcommand{\ov}[1]{\overline{#1}}
\newcommand{\coupling}[3]
{{#2} \mathrel{\langle #1 \rangle} {#3}}
\newcommand{\bcoupling}[5]
{{#4} \mathrel{\langle {#1} \rangle_{#2 \leq #3}} {#5}}
\newcommand{\bcouplingsupp}[6]
{{#4} \mathrel{\langle {#1} \rangle_{#2 \leq #3}^{#6}} {#5}}
\DeclareMathOperator{\supp}{supp}
\DeclareMathOperator{\id}{id}
\DeclareMathOperator{\MV}{MV}
\newcommand{\wt}[1]{|{#1}|}
\newcommand{\dunit}[2][]{{{\delta}^{#1}_{#2}}}
\newcommand{\Exp}{\mathbb{E}}
\newcommand{\E}[2]{\Exp_{#1} [{#2}]}
\newcommand{\bernD}{\mathbf{Bern}}
\newcommand{\multD}{\mathbf{Mult}}
\newcommand{\dlet}[3]{\Exp_{{#1} \sim {#2}} [{#3}]}
\newcommand{\dslet}[2]{\Exp_{#1} [{#2}]}
\def\eqdef{\triangleq}
\newcommand{\proj}{\pi}
\newcommand{\kw}[1]{{\ensuremath{\mathsf{#1}}}\xspace}
\newcommand{\kskip}{\kw{skip}}
\newcommand{\kabort}{\kw{abort}}
\newcommand{\kif}{\kw{if}}
\newcommand{\kthen}{\kw{then}}
\newcommand{\kelse}{\kw{else}}
\newcommand{\kwhile}{\kw{while}}
\newcommand{\kdo}{\kw{do}}
\newcommand{\kreturn}{\kw{return}}
\newcommand{\rndarrow}{%
  \stackrel{\raisebox{-.25ex}[.25ex]%
   {\tiny $\mathdollar$}}{\raisebox{-.25ex}[.25ex]{$\leftarrow$}}}
\newcommand{\tobij}{%
  \stackrel{\raisebox{-.1ex}[.25ex]%
   {\scriptsize 1-1}}{\raisebox{-.1ex}[.25ex]{$\longrightarrow$}}}
\newcommand{\false}[0]{\mathop{\perp}}
\newcommand{\iabort}[0]{\kabort}
\newcommand{\iskip}[0]{\kskip}
\newcommand{\iass}[0]{\mathrel{\leftarrow}}
\newcommand{\irnd}[0]{\mathrel{\rndarrow}}
\newcommand{\ifte}[3]{\kif\ {#1}\ \kthen\ {#2}\ \kelse\ {#3}}
\newcommand{\ift}[2]{\kif\ {#1}\ \kthen\ {#2}}
\newcommand{\iwhile}[2]{\kwhile\ {#1}\ \kdo\ {#2}}
\newcommand{\iret}[1]{\kreturn\ {#1}}
\newcommand{\mem}{{\mathcal{M}}}
\newcommand{\vars}{{\mathcal{V}}}
\newcommand{\exprs}{{\mathcal{E}}}
\newcommand{\dexprs}{{\mathcal{D}}}
\def\ms{\mspace{-1.5mu}}
\newcommand{\lmark}{{{}_{\scriptscriptstyle \vartriangleleft}}}
\newcommand{\rmark}{{{}_{\scriptscriptstyle \vartriangleright}}}
\newcommand{\lside}{_{\ms\lmark}}
\newcommand{\rside}{_{\ms\rmark}}
\newcommand{\sem}[1]{{\llbracket {#1} \rrbracket}}
\newcommand{\dsem}[2]{\sem{#2}_{#1}}
\newcommand{\subst}[2]{{#1}\coloneq{#2}}
\newcommand{\ind}[1]{\mathbb{1}[{#1}]}
\newcommand{\pwhile}{\textsc{pWhile}\xspace}
\newcommand{\Sprhl}{\textsc{pRHL}\xspace}
\newcommand{\Saprhl}{\textsc{apRHL}\xspace}
\newcommand{\Spgcl}{\textsc{pGCL}\xspace}
\newcommand{\Sppdl}{\textsc{PPDL}\xspace}
\newcommand{\SYSTEM}{\texorpdfstring{$\mathbb{E}$\textsc{pRHL}}{EpRHL}\xspace}
\newcommand{\Seprhl}{\SYSTEM}
\newcommand{\Sxprhl}{$\times$\textsc{pRHL}\xspace}
\newcommand{\eqsem}[3]{#1 \vdash #2 \equiv #3}
\newcommand{\Peprhl}[5]{\vdash \{ #3 \}\ {#1} \sim_{#5} {#2}\ \{ #4 \}}
\newcommand{\eprhl}[5]{\{ #3 \}\ {#1} \sim_{#5} {#2}\ \{ #4 \}}
\newcommand{\pre}{\Phi}
\newcommand{\post}{\Psi}
\newcommand{\postz}{\Psi'}
\newcommand{\preexp}{\mathfrak{d}}
\newcommand{\postexp}{\mathfrak{d}'}
\newcommand{\postexpz}{\mathfrak{d}''}
\newcommand{\const}[1]{\bm{#1}}
\newcommand{\scale}[1]{\bullet{#1}}
\newcommand{\trans}[1]{+{#1}}
\newcommand{\wpc}{\mathsf{wp}}
\newcommand{\pathdistance}[1]{\text{pd}_{#1}}
\newcommand{\Adj}{\mathrm{Adj}}
\newcommand{\pathcompat}{\text{PathCompat}}
\newcommand{\glauber}{\mathsf{glauber}}
\newcommand{\sgm}{\mathsf{sgm}}
\newcommand{\PopDyn}{\mathsf{popdyn}}
\DeclareMathOperator{\NEIGHBORS}{\mathcal{N}}
\newcommand{\neighbors}[1]{\NEIGHBORS_{#1}}
\DeclareMathOperator{\VALID}{\mathcal{V}}
\newcommand{\valid}[1]{\VALID_{#1}}
\title{Proving Expected Sensitivity of Probabilistic Programs}
\author{Gilles Barthe}
\affiliation{%
  \institution{Imdea Software Institute}
  \city{Madrid}
\country{Spain}}
\author{Thomas Espitau}
\affiliation{%
  \institution{Sorbonne Universités, UPMC}
  \city{Paris}
\country{France}}
\author{Benjamin Gr\'egoire}
\affiliation{%
  \institution{Inria Sophia Antipolis--Méditerranée}
  \city{Nice}
\country{France}}
\author{Justin Hsu}
\affiliation{%
  \institution{University College London}
  \city{London}
\country{UK}}
\author{Pierre-Yves Strub}
\affiliation{%
  \institution{École Polytechnique}
  \city{Paris}
\country{France}}
\keywords{Program sensitivity, relational program logics, Kantorovich distance}
\begin{document}

\begin{abstract}
  \emph{Program sensitivity}, also known as \emph{Lipschitz continuity},
  describes how small changes in a program's input lead to bounded changes in
  the output. We propose an average notion of program sensitivity for
  probabilistic programs---\emph{expected sensitivity}---that averages a
  distance function over a probabilistic coupling of two output
  distributions from two similar inputs. By varying the distance,
  expected sensitivity recovers useful notions of probabilistic function
  sensitivity, including stability of machine learning algorithms
  and convergence of Markov chains.

  Furthermore, expected sensitivity satisfies clean compositional
  properties and is amenable to formal verification. We develop a
  relational program logic called \SYSTEM for proving expected
  sensitivity properties. Our logic features two key ideas. First,
  relational pre-conditions and post-conditions are expressed using
  \emph{distances}, a real-valued generalization
  of typical boolean-valued (relational) assertions. Second, judgments
  are interpreted in terms of \emph{expectation coupling}, a novel,
  quantitative generalization of probabilistic couplings which
  supports compositional reasoning.

  We demonstrate our logic on examples beyond the reach of prior relational
  logics. Our main example formalizes uniform stability of the stochastic
  gradient method.
  Furthermore, we prove rapid mixing for a probabilistic model of population
  dynamics. We also extend our logic with a transitivity principle for
  expectation couplings to capture the \emph{path coupling} proof
  technique by \citet{bubley1997path}, and formalize
  rapid mixing of the Glauber dynamics from
  statistical physics.
\end{abstract}

\maketitle

\section{Introduction}
\emph{Sensitivity} is a
fundamental property in mathematics and computer science, describing
how small changes in inputs can affect outputs. Formally,
the sensitivity of a function $g:A \to B$ is defined relative to two
metrics $\preexp_{A}$ and $\preexp_{B}$ on $A$ and $B$
respectively. We say that $f$ is $\alpha$-sensitive if for every two inputs $x_1$ and
$x_2$, the outputs are a bounded distance apart: $\preexp_{B} (g(x_1),g(x_2)) \leq \alpha \cdot
\preexp_{A}(x_1,x_2)$. Bounded sensitivity plays a central
role in many other fields, motivating 
broad range verification methods for bounding program sensitivity.

We consider \emph{expected} (or average) sensitivity, a natural
generalization of sensitivity for the probabilistic setting, and
develop a program logic for proving expected sensitivity of
probabilistic programs. We work with a mild generalization of
sensitivity, called $f$-\emph{sensitivity}.
Formally, let $\preexp: A \times A\to \RR^+$
and $\postexp:B \times B\to \RR^+$ be two distances, and let $f$ be a non-negative
affine function of the form $z \mapsto \alpha \cdot z + \beta$ with $\alpha,
\beta$ non-negative. We say that $g:A \to B$ is $f$-sensitive iff for for
every two inputs $x_1$ and $x_2$, $\preexp_{B} (g(x_1),g(x_2)) \leq
f(\preexp_{A}(x_1,x_2))$. Taking $f$ to be affine will allow $f$-sensitivity to
compose cleanly in the probabilistic case, while still being expressive enough
to model multiplicative and additive bounds on the output distance in terms of
the input distance.

\subsection{Expected Sensitivity}
Let us now consider the case where $g$ is probabilistic,
i.e., $g:A \to \distr(B)$. Since $g$ produces distributions over $B$
rather than elements of $B$, we have a choice of what output distance to take.
One possibility is to allow arbitrary distances between distributions; however,
such distances can be complex and difficult
to reason about. We consider an alternative approach: lifting a distance $\preexp_B$ on
elements to a distance on distributions by averaging $\preexp_B$ over some
distribution $\mu$ on pairs $B \times B$. For any two nearby inputs of $g$ leading to
output distributions $\mu_1$ and $\mu_2$, we require this distribution $\mu$ to
model $\mu_1$ and $\mu_2$ in a probabilistic sense; namely,
its first and second marginals must be equal to
$\mu_1$ and $\mu_2$. Such a distribution $\mu$ is known
as a \emph{probabilistic coupling} of $\mu_1$ and $\mu_2$ (we refer the reader
to \citet{Lindvall02} and \citet{Thorisson00} for overviews of the rich theory
of probabilistic couplings).

Formally, a probabilistic function $g$ is \emph{expected} $f$-\emph{sensitive}
if for every two inputs $x_1$ and $x_2$, there exists a coupling $\mu$ of
$g(x_1)$ and $g(x_2)$, such that
\begin{equation} \label{eq:intro:def}
  \E {(y_1, y_2) \sim \mu} { \preexp_{B}(y_1, y_2) }
  \leq f(\preexp_{A}(x_1, x_2)) .
\end{equation}
The left-hand side is the \emph{expected value} of the function $\preexp_B$ over
$\mu$ (the average distance between pairs drawn from $\mu$), inspired by the
\emph{Wasserstein} metric, a well-studied distance on
distributions in the theory of optimal transport~\citep{Villani08}. Our notion
of expected sensitivity has several appealing features. First, it is quite
general---we can capture many probabilistic notions of sensitivity by varying
the distance.
%
\begin{example}[Average sensitivity]
When the outputs $(y_1, y_2)$ are numbers, a natural notion of sensitivity
bounds the difference between \emph{average} outputs in terms of the distance between inputs
$(x_1, x_2)$. Taking the distance $\preexp_{B}(y_1, y_2) \triangleq |y_1 -
y_2|$, expected $f$-sensitivity implies
\[
  \left|\E {y_1 \sim \mu_1} {y_1} - \E {y_2 \sim \mu_2} {y_2}\right|
  \leq f(\preexp_{A}(x_1, x_2)).
\]
In other words, the two output distributions $\mu_1$ and $\mu_2$ have similar
averages when the inputs $(x_1, x_2)$ are close. This type of bound can
imply \emph{algorithmic stability}, a useful property for machine learning
algorithms~\citep{BousquetE02}.
\end{example}

\begin{example}[Probabilistic sensitivity]
Suppose that the output distance $\preexp_{B}$ is bounded away from
zero: $\preexp_{B}(y_1, y_2) < 1$ iff $y_1 = y_2$; for instance,
$\preexp_{B}$ could be an integer-valued metric. Then,
expected $f$-sensitivity implies
\[
  \left| \Pr_{y_1 \sim \mu_1} [y_1 \in E] - \Pr_{y_2 \sim \mu_2} [y_2 \in E] \right|
  \leq f(\preexp_{A}(x_1, x_2))
\]
for every subset of outputs $E$. This inequality shows that the distributions
$\mu_1$ and $\mu_2$ are close in a pointwise sense, and can imply that two
sequences of distributions converge to one another.
\end{example}

Another appealing feature of expected sensitivity is closure under composition:
the sequential (Kleisli) composition of an $f$-sensitive function with an
$f'$-sensitive function yields an $f' \circ f$-sensitive function. As we will
see, this property makes expected sensitivity a good target for formal
verification.

\subsection{Expected Sensitivity from Expectation Couplings}
To bound expected distance, it suffices to find a coupling of the output
distributions and show that the expected distance is sufficiently small. In
general, there are multiple probabilistic couplings between any two
distributions, leading to different expected distances.

To better reason about couplings and their expected distances, we develop a
quantitative generalization of probabilistic coupling that captures
\cref{eq:intro:def}; namely, if a
distribution $\mu$ on pairs satisfies the bound for expected
sensitivity, we call $\mu$ an \emph{expectation coupling} of $\mu_1$
and $\mu_2$ with respect to $\preexp_{B}$ and $\delta$, where $\delta=
f(\preexp_{A}(x_1, x_2))$. We show that expectation couplings satisfy several
natural properties, including closure under sequential composition and
transitivity.

\subsection{\SYSTEM: A Program Logic for Expected Sensitivity Bounds}

By leveraging these principles, we can bound expected sensitivity by
compositionally building an expectation coupling between output
distributions from pairs of nearby inputs. Concretely, we develop a relational
program logic \SYSTEM to construct expectation couplings between pairs of
programs. \SYSTEM judgments have the form
\[
  \eprhl{s_1}{s_2}{\pre; \preexp}{\post; \postexp}{f} ,
\]
where $s_1$ and $s_2$ are probabilistic imperative programs---often,
the same program---the pre- and post-conditions $\pre,\post: \mem
\times \mem \to \BB$ are relational assertions over pairs of memories,
$\preexp, \postexp : \mem \times \mem \to \RR^+$ are non-negative
distances on memories, and $f(z) = \alpha \cdot z + \beta$ is a
non-negative affine function with $\alpha,\beta\geq 0$. \SYSTEM
judgments state that for any pair of related input memories $(m_1,
m_2)$ satisfying the pre-condition $\pre$, there exists an expectation
coupling $\mu$ of the output distributions such that
all pairs of output memories $(m_1', m_2')$ in the support of $\mu$
(i.e., with positive probability) satisfy the
post-condition $\post$, and the expected distance is bounded:
\[
  \E {(m_1', m_2') \sim \mu} {\postexp(m_1', m_2')}
  \leq f (\preexp(m_1, m_2)) = \alpha \cdot \preexp(m_1, m_2) + \beta .
\]
We call $f$ a \emph{distance transformer}, as it bounds
the (average) \emph{post-distance} $\postexp$ in terms of the
\emph{pre-distance} $\preexp$. When $s_1$ and $s_2$ are the same program $s$,
for instance, a \SYSTEM judgment establishes
an expected sensitivity bound for $s$.

We give a rich Hoare-style proof system for \SYSTEM, internalizing various
composition properties of expectation couplings. For instance, given two
judgments
\[
  \eprhl{s_1}{s_2}{\pre; \preexp}{\Xi; \postexp}{f}
  \quad\text{and}\quad
  \eprhl{s_1'}{s_2'}{\Xi; \postexp}{\post; \postexpz}{f'} ,
\]
the sequential composition rule in \SYSTEM yields
\[
  \eprhl{s_1;s_1'}{s_2;s_2'}{\pre; \preexp}{\post; \postexpz}{f' \circ f} .
\]
Note that the pre- and post-conditions and the distances compose naturally,
while the distance transformers combine smoothly by function composition. As a
result, we can reason about the sequential composition of two programs by
building an expectation coupling for each.

\subsection{Applications}
We illustrate the expressiveness of our proof system on several novel examples.

\paragraph*{Stability of Stochastic Gradient Method.}
In machine learning, \emph{stability}~\citep{BousquetE02,ElisseeffEP05}
measures how changes in the training set influence the quality of an
algorithm's prediction. A stable algorithm does not
depend too much on the particular training set, so that its performance on the
training set generalizes to new, unseen examples; in other words,
it does not overfit. Recently, \citet{HardtRS16} show a quantitative stability
bound for the Stochastic
Gradient Method (SGM), a widely used optimization algorithm for
training in machine learning.  We verify their result
for several variants of SGM within our logic, contributing to
the expanding field of formal verification for machine
learning algorithms~\citep{KatzBDJK17,HuangKWW17,SelsamLD17}.

\paragraph*{Rapid Mixing for Population Dynamics.}
Randomized algorithms are a useful tool for modeling biological and social
phenomena (see, e.g., \citet{jansen2013analyzing}). They can be used to
analyze population dynamics, both in the infinite population setting where
evolution is \emph{deterministic}, and in the finite population setting where
evolution can be \emph{stochastic}. We formally analyze a variant of the
\emph{RSM} (Replication-Selection-Mutate) model, which captures the
evolution of an unstructured, asexual haploid population (see, e.g.,
\citet{HartlC}). Recently, a series of papers prove rapid mixing of the RSM
model under mild conditions \citep{dixit2012finite,Vishnoi15,PanageasSV16}.
We formally verify rapid mixing in a simplified setting, where the evolution
function is strictly contractive. This example relies on an advanced proof rule
internalizing the \emph{maximal} coupling of two multinomial distributions; in
some sense, the coupling that minimizes the expected distance between samples.

\subsection{Extension: Path Coupling}

Once we have set the core logic, we extend the rules to capture more
advanced reasoning about expectation couplings. We consider the
\emph{path coupling} method due to \citet{bubley1997path}, a
theoretical tool for building couplings on Markov chains. Let $\Phi$ be a binary
relation and
suppose that the state space of the Markov chain is equipped with a
\emph{path metric} $\preexp_\Phi$, i.e., the distance between
two elements is the length of the shortest $\Phi$-path between them.
We say that two states are \emph{adjacent} if their distance is
$1$. The main idea of path coupling is that if we can give a coupling
for the distributions starting from
neighboring states, then we can combine these pieces to give a
coupling for the distributions started from \emph{any} two
states. More concretely, if for every two initial states at distance $1$
under $\preexp_\Phi$ there is an expectation coupling of the output
distributions with expected distance at most $\gamma$, then for every
two initial states at distance $k$ under $\preexp_\Phi$, path
couplings gives an expectation coupling with expected distance at most
$k \cdot \gamma$.

From a logical point of view, path coupling is a transitivity principle for
expectation couplings: given a coupling for inputs related by $\Phi$, we get
a coupling for inputs related by $\Phi^k$. In \SYSTEM, we internalize
this principle by a structural transitivity rule, allowing a family
of relational judgments to be chained together. We formally prove rapid mixing for
a classical example called the Glauber dynamics, a Markov chain for drawing
approximately uniform samplings from the proper colorings of a
graph~\citep{bubley1997path}.

\subsection{Outline and Core Contributions}
After illustrating our approach on a simple example (\cref{sec:motiv}) and
reviewing mathematical preliminaries, we present the following contributions.
\begin{itemize}
  \item A novel abstraction called expectation couplings for reasoning about
    probabilistic sensitivity, supporting natural composition properties
    (\cref{sec:prelim}).
  \item A probabilistic relational program logic \SYSTEM for constructing
    expectation couplings, along with a proof of soundness (\cref{sec:logic}).
  \item A formal proof of uniform stability for two versions of the Stochastic
    Gradient Method, relying on proof rules to perform probabilistic case
    analysis (\cref{sec:sgm}).
  \item A formal proof of rapid mixing for a Markov chain simulating population
    evolution, relying on a proof rule internalizing the \emph{maximal} coupling
    of two multinomial distributions (\cref{sec:population}).
  \item A formal proof of rapid mixing for the Glauber dynamics from statistical
    physics, relying on an advanced proof rule internalizing the path coupling
    principle (\cref{sec:pathcoupling}).  
\end{itemize}
We have implemented our logic in the \textsf{EasyCrypt}~\citep{BartheDGKSS13}, a
general-purpose proof assistant for reasoning about probabilistic programs, and
machine-checked our main examples (\cref{sec:implem}). We conclude by surveying
related work (\cref{sec:rw}) and presenting promising future directions
(\cref{sec:conclusion}).

\section{Stability of Stochastic Gradient Method} \label{sec:motiv}

To give a taste of our approach, let's consider a property from
machine learning. In a typical learning setting, we have a space of possible
\emph{examples} $Z$, a \emph{parameter space} $\RR^d$, and a \emph{loss
  function} $\ell : Z \to \RR^d \to [0, 1]$. An algorithm $A$ takes a finite set
$S \in Z^n$ of \emph{training examples}---assumed to be drawn independently from
some unknown distribution $\mathcal{D}$---and produces parameters $w \in \RR^d$
aiming to minimize
the expected loss of $\ell(-, w)$ on a fresh sample from $\mathcal{D}$.
When the algorithm is randomized, we think of
$A : Z^n \to \distr(\RR^d)$ as mapping the training examples to a distribution
over parameters.

In order to minimize the loss on the true distribution $\mathcal{D}$,
a natural idea is to use parameters that minimize the average error on the
available training set. When the loss function $\ell$
is well-behaved this optimization problem, known as \emph{empirical risk
  minimization}, can be solved efficiently. However there is
no guarantee that these parameters \emph{generalize} to the true
distribution---even if they have low loss on the training set, they may induce
high loss on fresh samples from the true distribution. Roughly speaking, the
algorithm may select parameters that are too specific to the inputs,
\emph{overfitting} to the training set.

To combat overfitting, \citet{BousquetE02} considered a technical property of
the learning algorithm: the algorithm should produce similar outputs when
executed on any two training sets that differ in a single example, so that the
output does not depend too much on any single training example.

\begin{definition*}[\citet{BousquetE02}]
  Let $A : Z^n \to \distr(\RR^d)$ be an algorithm for some loss function
  $\ell : Z \to \RR^d \to [0, 1]$. $A$ is said to be $\epsilon$-\emph{uniformly
  stable} if for all input sets $S, S' \in
  Z^n$ that differ in a single element,\footnote{%
    In other words, $S$ and $S'$ have the same cardinality and their symmetric
    difference contains exactly two elements.}
  we have
  \[
    \E{w \sim A(S)}{\ell(z,w)} - \E{w \sim A(S')}{\ell(z, w)} \leq \epsilon
  \]
  for all $z \in Z$, where $\E{x \sim \mu}{f(x)}$ denotes the expected value of
  $f(x)$ when $x$ is sampled from $\mu$.
\end{definition*}

By the following observation, $\epsilon$-uniform stability follows from
an expected sensitivity condition, taking the distance on the
input space $Z^n$ to be the number of differing elements in $(S, S')$, and the
distance on output parameters to be the difference between losses $\ell$ on any
single example.

\begin{fact*}
  For every pair of training sets $S, S' \in Z^n$ that differ in a single
  element, suppose there exists a joint distribution $\mu(S, S') \in
  \distr(\RR^d \times \RR^d)$ such that $\pi_1(\mu) = A(S)$ and $\pi_2(\mu) =
  A(S')$, where $\pi_1, \pi_2 : \distr(\RR^d \times \RR^d) \to \distr(\RR^d)$
  give the first and second marginals. If
  \[
    \E{(w, w') \sim \mu(S, S')}{ |\ell(z, w) - \ell(z, w')| } \leq \epsilon
  \]
  for every $z \in Z$, then $A$ is $\epsilon$-uniformly stable.
\end{fact*}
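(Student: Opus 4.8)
The plan is to fix an arbitrary example $z \in Z$ together with a pair of training sets $S, S'$ differing in a single element, and to establish the inequality $\E{w \sim A(S)}{\ell(z,w)} - \E{w \sim A(S')}{\ell(z,w)} \leq \epsilon$ directly. The only structural ingredient is the marginal condition on the coupling $\mu = \mu(S,S')$: because $\pi_1(\mu) = A(S)$, the expected loss under $A(S)$ can be rewritten as an expectation over the joint distribution, $\E{w \sim A(S)}{\ell(z,w)} = \E{(w,w') \sim \mu}{\ell(z,w)}$, and symmetrically $\E{w \sim A(S')}{\ell(z,w)} = \E{(w,w') \sim \mu}{\ell(z,w')}$ using $\pi_2(\mu) = A(S')$. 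This is where the coupling hypothesis is actually used: it replaces two unrelated expectations over separate distributions by two expectations over the \emph{same} joint distribution, so that they can be combined.

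Having expressed both terms over $\mu$, I would then apply linearity of expectation to collapse the difference into a single expectation of a difference,
\[
  \E{w \sim A(S)}{\ell(z,w)} - \E{w \sim A(S')}{\ell(z,w)}
    = \E{(w,w') \sim \mu}{\ell(z,w) - \ell(z,w')} .
\]
The integrand is pointwise bounded above by its absolute value, $\ell(z,w) - \ell(z,w') \leq |\ell(z,w) - \ell(z,w')|$, so by monotonicity of expectation the right-hand side is at most $\E{(w,w') \sim \mu}{|\ell(z,w) - \ell(z,w')|}$. The hypothesis of the Fact states precisely that this last quantity is bounded by $\epsilon$ for every $z$, which closes the argument. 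Since $z$, $S$, and $S'$ were arbitrary, this yields $\epsilon$-uniform stability in the sense of the preceding definition.

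I do not expect any genuine obstacle here: the result is essentially a one-line consequence of the marginal property of couplings together with the elementary inequality $x \leq |x|$. The only point requiring mild care is the bookkeeping about integrability needed to justify linearity, which is immediate because $\ell$ is valued in $[0,1]$ and hence bounded. In effect, this Fact is the degenerate, single-pair-of-inputs instance of the general \emph{expected sensitivity implies average sensitivity} reasoning sketched in the Average Sensitivity example, specialized to the Hamming distance on $Z^n$ and recorded here to justify targeting expected sensitivity as a route to uniform stability.
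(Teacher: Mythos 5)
Your proof is correct and follows essentially the same route as the paper: the paper does not spell out a separate proof of this Fact, but its proof of the average-sensitivity proposition (and the closing computation of \cref{sec:motiv}, where $\E{\mu}{\ell(w\lside,z) - \ell(w\rside,z)}$ is split via the marginals into $\E{\mu_1}{\ell(w,z)} - \E{\mu_2}{\ell(w,z)}$) uses exactly your chain of marginal rewriting, linearity, and the pointwise bound $x \leq |x|$. Your observation that the one-sided inequality suffices is consistent with the paper's (one-sided) definition of $\epsilon$-uniform stability, so nothing is missing.
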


The joint distribution $\mu(S, S')$ is an example of a \emph{expectation
coupling} of $A(S)$ and $A(S')$, where $|\ell(z, w) - \ell(z, w')|$ is viewed as
a \emph{distance} on pairs of parameters $(w, w')$. If we take the distance on
training sets $Z^n$ to be the symmetric distance (the number of differing
elements between training sets), $\epsilon$-uniform stability follows from
expected sensitivity of the function $A$. To prove stability, then, we will
establish expected sensitivity by (i) finding an expectation coupling and (ii)
reasoning about the expected value of the distance function. Our logic \SYSTEM
is designed to handle both tasks.

To demonstrate, we will show $\epsilon$-stability
of the \emph{Stochastic Gradient Method} (SGM), following recent work by
\citet{HardtRS16}. SGM is a simple and classical optimization algorithm commonly
used in machine learning. Typically, the parameter space is $\RR^d$ (i.e.,
the algorithm learns $d$ real parameters). SGM maintains parameters
$w$ and iteratively updates $w$ to reduce the loss. Each iteration,
SGM selects a uniformly random example $z$ from the input training set $S$ and
computes the \emph{gradient} vector
$g$ of the function $\ell(z, -) : \RR^d \to [0, 1]$ evaluated at
$w$---this vector indicates the direction to move $w$ to decrease the
loss. Then, SGM updates $w$ to step along $g$. After running $T$ iterations, the
algorithm returns final parameters. We can implement SGM in an
imperative language as follows.
\newcommand{\codesgm}{
\[
  \begin{array}{l}
    w \iass w_0; \\
    t \iass 0; \\
    \iwhile{t < T}{} \\
    \quad i \irnd [n]; \\
    \quad g \iass \nabla \ell(S[i],-) (w); \\
    \quad w \iass w - \alpha_t \cdot g; \\
    \quad t \iass t + 1; \\
    \iret{w}
  \end{array}
\]}
\codesgm
The program first initializes the parameters to some default value $w_0$. Then,
it runs $T$ iterations of the main
loop. The first step in the loop samples a uniformly random index $i$
from $[n] = \{ 0, 1, \dots, n - 1 \}$, while the second step computes the
gradient $g$. We will model the gradient operator $\nabla$ as a higher-order
function with type $(\RR^d \to [0, 1]) \to (\RR^d \to \RR^d)$.\footnote{%
  Strictly speaking, this operation is only well-defined if the input function
  is differentiable; this holds for many loss functions considered in the
machine learning literature.}
The third step in the loop updates $w$ to try to decrease the loss. The step
size $\alpha_t$ determines how far the algorithm moves; it is a real number that
may depend on the iteration $t$.

Our goal is to verify that this program is $\epsilon$-uniformly
stable. At a high level, suppose we have two training sets $S\lside$
and $S\rside$ differing in a single example; we write $\Adj(S\lside, S\rside)$.
Viewing the sets as
lists, we suppose that the two lists have the same length and
$S[i]\lside = S[i]\rside$ for all indices $i$ except for a one
index $j \in [n]$. We then construct an expectation coupling between
the two distributions on output parameters, bounding the expected
distance between the outputs $w\lside$ and $w\rside$. Assuming that
$\ell(z,-)$ is a Lipschitz function, i.e., $|\ell(z, w) - \ell(z,
w')| \leq L \| w - w' \|$ for all $w, w' \in \RR^d$ where $\| \cdot
\|$ is the usual Euclidean distance, bounding the expected distance
between the parameters also bounds the expected losses, implying
uniform stability.

Now, let's see how to carry out this verification in our logic. \SYSTEM is a
relational program logic with judgments of the form
\[
  {\Peprhl {s_1} {s_2} {\pre; \preexp} {\post; \postexp} {f}} .
\]
Here, $s_1, s_2$ are two imperative programs, the formulas $\pre$ and
$\post$ are assertions over pairs of memories $(m_1, m_2) \in \mem
\times \mem$, the distances $\preexp, \postexp$ are maps $\mem \times
\mem \to \RR^+$, and $f : \RR^+ \to \RR^+$ is a non-negative affine
function (i.e., of the form $x \mapsto ax + b$ for $a, b \in
\RR^+$). The judgment above states that for any two initial memories
$(m_1, m_2)$ satisfying the pre-condition $\pre$, there is an
expectation coupling $\mu$ of the output distributions from executing
$s_1, s_2$ on $m_1, m_2$ respectively such that the expected value of
$\postexp$ on the coupling is at most $f(\preexp(m_1, m_2))$ and all
pairs of output memories in the support of $\mu$ satisfy $\post$.

We focus on the loop. Let $s_a$ be the sampling
command and let $s_b$ be the remainder of the loop body.  First, we can show
\begin{equation} \label{eq:sgm-sample}
  \Peprhl{s_a}{s_a}
  {\Phi ; \| w\lside - w\rside \| }
  {i\lside = i\rside ; \| w\lside - w\rside \| }
  {\id} .
\end{equation}
The pre-condition $\Phi$ is shorthand for simpler invariants, including $t\lside
= t\rside$ and $\Adj(S\lside, S\rside)$. The post-condition $i\lside = i\rside$
indicates that the coupling assumes both executions sample the same index $i$.
Finally, the pre- and post-distances indicate that the expected value of $\|
w\lside - w\rside \|$ does not grow---this is clear because $s_a$ does not
modify $w$.

Now, we know that the training sets $S\lside$ and $S\rside$ differ in a single
example, say at index $j$. There are two cases: either we have sampled $i\lside
= i\rside = j$, or we have sampled $i\lside = i\rside \neq j$. In the first
case, we can apply properties of the loss function $\ell$ and gradient operator
$\nabla$ to prove:
\begin{equation} \label{eq:sgm-neq}
  \Peprhl{s_b}{s_b}
  {\Phi \land S[i] \lside \neq S[i]\rside ; \| w\lside - w\rside \| }
  { \Phi ; \| w\lside - w\rside \| }
  {\trans{\gamma}}
\end{equation}
where $\trans{\gamma}$ is the function $x \mapsto x + \gamma$ for some
constant $\gamma$ that depends on $L$ and the $\alpha_t$'s---since the algorithm
selects different examples in the two executions, the
resulting parameters may grow a bit farther apart.  In the second case,
the chosen example $S[i]$ is the same in both executions so we can prove:
\begin{equation} \label{eq:sgm-eq}
  \Peprhl{s_b}{s_b}
  {\Phi \land S[i] \lside = S[i]\rside ; \| w\lside - w\rside \| }
  { \Phi ; \| w\lside - w\rside \| }
  {\id} .
\end{equation}
The identity distance transformer $\id$ indicates that the expected distance
does not increase. To combine these two cases, we
note that the first case happens with probability $1/n$---this is the
probability of sampling index $j$---while the second case happens with
probability $1 - 1/n$. Our logic allows us to blend the distance transformers
when composing $s_a$ and $s_b$, yielding
\begin{equation} \label{eq:sgm-pcase}
  \Peprhl{s_a; s_b}{s_a; s_b}
  { \Phi ; \| w\lside - w\rside \| }
  { \Phi ; \| w\lside - w\rside \| }
  {+\gamma/n} ,
\end{equation}
since $x \mapsto (1/n) \cdot (x + \gamma) + (1 - 1/n) \cdot \id(x) = x + \gamma/n$.

Now that we have a bound on how the distance grows in the body, we can apply the
loop rule. Roughly speaking, for a loop running $T$ iterations, this rule simply
takes the $T$-fold composition $f^T$ of the bounding function $f$; since $f$ is
the linear function $+\gamma/n$, $f^T$ is the linear function  $+T\gamma/n$,
and we have:
\begin{equation} \label{eq:sgm-loop}
  \Peprhl{\sgm}{\sgm}
  { \Phi ; \| w\lside - w\rside \| }
  { \Phi ; \| w\lside - w\rside \| }
  {+T\gamma/n} .
\end{equation}
Assuming that the loss function $\ell(-,z)$ is Lipschitz, $|\ell(w, z) -
\ell(w', z)| \leq L \| w - w' \|$ for some constant $L$ and so
\begin{equation} \label{eq:sgm-lipschitz}
  \Peprhl{\sgm}{\sgm}
  { \Phi ; \| w\lside - w\rside \| }
  { \Phi ; |\ell(w\lside, z) - \ell(w\rside, z)| }
  {+LT \gamma / n}
\end{equation}
for every example $z \in Z$. Since $w\lside$ and $w\rside$ are
initialized to the same value $w_0$, the initial pre-distance is zero so this
judgment gives a coupling $\mu$ of the output distributions such that
\[
  \E{\mu}{| \ell(w\lside,z) - \ell(w\rside,z) |}
  \leq \| w_0 - w_0 \| + LT \gamma / n = LT \gamma / n .
\]
Since the left side is larger than
\[
  \E{\mu}{\ell(w\lside,z) - \ell(w\rside,z)}
  = \E{\mu_1}{\ell(w, z)} - \E{\mu_2}{\ell(w, z)} ,
\]
where $\mu_1$ and $\mu_2$ are the output distributions of $\sgm$, SGM is
$LT\gamma/n$-uniform stable.

\section{Expected Sensitivity} \label{sec:prelim}

Before we present our logic, we first review basic definitions and notations
from probability theory related to expected values and probabilistic couplings.
Then, we introduce our notions of expected sensitivity and expectation coupling.

\subsection{Mathematical preliminaries}

\paragraph*{Linear and Affine Functions.}
We let $\mathcal{A}$ be the set of non-negative affine functions, mapping
$z\mapsto \alpha \cdot z+ \beta$ where $\alpha, \beta\in\RR^+$;
$\mathcal{L} \subseteq \mathcal{A}$ be the set of non-negative linear functions,
mapping $z \mapsto \alpha \cdot z$;
$\mathcal{L}^\geq \subseteq \mathcal{L}$ be the set of non-contractive linear
functions, mapping $z \mapsto \alpha \cdot z$ with $\alpha \geq 1$; and
$\mathcal{C} \subseteq \mathcal{A}$ be the set of non-negative constant
functions, mapping $z \mapsto \beta$.
We will use the metavariables $f$ for $\mathcal{A}$ and bolded letters (e.g.,
$\const{\beta}$) for $\mathcal{C}$.

Non-negative affine functions can be combined in several ways. Let
$f,f'\in \mathcal{A}$ map $z$ to $\alpha \cdot z + \beta$ and $\alpha'
\cdot z + \beta'$ respectively, and let $\gamma\in\RR^+$.
\begin{itemize}
\item \emph{sequential composition:} the function $f'\circ f$ maps $z$
  to $(\alpha \alpha') \cdot z + (\alpha' \beta + \beta')$;
\item \emph{addition:} the function $f+f'$ maps $z$ to $f(z)+f'(z)$;
\item \emph{scaling:} the function $(\gamma \cdot f)$ maps $z$ to
  $\gamma \cdot f(z)$
\item \emph{translation:} the function $f \trans{\gamma}$ maps $z$ to
  $f(z) + \gamma$.
\end{itemize}
We will use shorthand for particularly common functions. For scaling, we write
$\scale{\gamma}$ for the function mapping $z$ to $\gamma \cdot z$. For
translation, we write $\trans{\gamma}$ for the function mapping $z$ to $z +
\gamma$. The identity function will be simply $\id$ (equivalently, $\scale{1}$
or $\trans{0}$).


\paragraph*{Distances.}

A \emph{distance} $\preexp$ is a map $A \times A \to \RR^+$.  We use the term
``distance'' rather loosely---we do not assume any axioms, like reflexivity,
symmetry, triangle inequality, etc. Distances are partially ordered by the
pointwise order inherited from the reals: we write $\preexp \leq \postexp$ if
$\preexp(a_1, a_2) \leq \postexp(a_1, a_2)$ for all $(a_1, a_2) \in A \times A$.

\paragraph*{Distributions.}

Programs in our language will be interpreted in terms of sub-distributions.  A
(discrete) \emph{sub-distribution} over a set $A$ is a map $\mu : A \to \RR^+$
such that its \emph{support}
\[
  \supp(\mu) \eqdef \{ a \in A \mid \mu(a) \neq 0 \}
\]
is discrete and its \emph{weight} $\wt{\mu}\eqdef \sum_{a \in
  \supp(\mu) } \mu(a)$ is well-defined and satisfies $\wt{\mu} \leq
1$.  We let $\distr(A)$ denote the set of discrete sub-distributions
over $A$. Note that $\distr(A)$ is partially ordered using the
pointwise inequality inherited from reals. Similarly, equality of
distributions is defined extensionally: two distributions are equal if
they assign the same value (i.e., \emph{probability}) to each element
in their domain.  \emph{Events} are maps $E:A\to\BB$, where $\BB$
denotes the set of booleans. The probability of an event $E$ w.r.t. a
sub-distribution $\mu$, written $\Pr_\mu [E]$, is defined as
$\sum_{x \mid E(x)} \mu(x)$.

The \emph{expectation} of a function $f : A \to \RR^+$ w.r.t.\ a
sub-distribution $\mu \in \distr(A)$, written $\E {x \sim \mu} {f(x)}$
or $\E \mu f$ for short, is defined as $\sum_{x} \mu(x) \cdot f(x)$
when this sum exists, and $+\infty$ otherwise. Expectations are linear: $\E \mu
{f+g}=\E \mu f+\E \mu g$ and $\E \mu {k\cdot f}=k\cdot \E \mu f$, where addition
and scaling of functions are defined in the usual way.

Discrete sub-distributions support several useful constructions. First, they can
be given a monadic structure.  Let $x \in A$, $\mu \in \distr(A)$ and $M : A \to
\distr(B)$. Then:
\[
  \begin{array}{rcl}
    \mathsf{unit}~x & \triangleq & a \mapsto \ind{x = a} \\
    \mathsf{bind}~\mu~M & \triangleq & b \mapsto \sum_{a \in A} \mu(a) \cdot M(a)(b) .
  \end{array}
\]
Intuitively, $\mathsf{bind}~\mu~M$ is the distribution from first sampling from
$\mu$ and applying $M$ to the sample; in particular, it is a distribution over
$B$. We will write $\delta_x$ for the Dirac distribution $\mathsf{unit}~x$, and
abusing notation, $\E {x \sim \mu} {M}$ and $\E {\mu} {M}$ for
$\mathsf{bind}~\mu~M$. 

Given a distribution $\mu$ over pairs in $A \times B$, we can define the usual
projections $\pi_1 : \distr(A \times B) \to \distr(A)$ and $\pi_2 : \distr(A
\times B) \to \distr(B)$ as
\[
  \pi_1(\mu)(a) \triangleq \sum_{b \in B} \mu(a, b)
  \quad \text{and} \quad
  \pi_2(\mu)(b) \triangleq \sum_{a \in A} \mu(a, b) .
\]
A \emph{probabilistic coupling} is a joint distribution over pairs, such that
its first and second marginals coincide with the first
and second distributions. Formally, two sub-distributions $\mu_a\in\distr(A)$
and $\mu_b\in\distr(B)$ are \emph{coupled} by $\mu\in\distr(A\times B)$, written
$\coupling \mu {\mu_a} {\mu_b}$, if $\pi_1(\mu)=\mu_a$ and $\pi_2(\mu)=\mu_b$.

\subsection{Expected \texorpdfstring{$f$-sensitivity}{f-sensitivity}}
The core concept in our system is a probabilistic version of sensitivity.  Let
$f\in\mathcal{A}$, and let $\preexp_{A}$ and $\preexp_{B}$ be distances on $A$
and $B$.
\begin{definition}
  A probabilistic function $g:A\to\distr(B)$ is
  \emph{expected} $f$-\emph{sensitive} (with respect to $\preexp_A$ and
  $\preexp_B$) if for every $x_1,x_2\in A$, we have the bound
  \[
    \E{(y_1, y_2) \sim \mu}{\preexp_{B}(y_1, y_2)}\leq f(\preexp_{A}(x_1,x_2))
  \]
  for some coupling $\coupling {\mu} {g(x_1)} {g(x_2)}$.  When $f$ maps $z
  \mapsto \alpha \cdot z + \beta$, we sometimes say that $g$ is \emph{expected}
  $(\alpha, \beta)$-\emph{sensitive}.
\end{definition}
By carefully selecting the distances $\preexp_A$ and $\preexp_B$ on the input
and output spaces, we can recover different notions of probabilistic sensitivity
as a consequence of expected $f$-sensitivity. We derive two particularly useful
results here, which we first saw in the introduction.

\begin{proposition}[Average sensitivity] \label{prop:conseq:ex}
  Suppose that $g : A \to \distr(\RR)$ is expected $f$-sensitive with respect to
  distances $\preexp_A$ and $| \cdot |$. Then for any two inputs $a_1, a_2 \in
  A$, we have
  \[
    \left| \E {y_1 \sim g(a_1)} {y_1} - \E {y_2 \sim g(a_2)} {y_2} \right|
    \leq f(\preexp_A(a_1, a_2)) .
  \]
\end{proposition}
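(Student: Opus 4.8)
The plan is to unfold the definition of expected $f$-sensitivity, rewrite the two individual expectations in terms of the coupling's marginals, and then reduce the claim to the elementary triangle inequality $|\E{\mu}{h}| \leq \E{\mu}{|h|}$ applied to $h = y_1 - y_2$.

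First I would apply the hypothesis: since $g$ is expected $f$-sensitive with output distance $|\cdot|$, for the fixed inputs $a_1, a_2$ there is a coupling $\coupling{\mu}{g(a_1)}{g(a_2)}$ satisfying $\E{(y_1, y_2) \sim \mu}{|y_1 - y_2|} \leq f(\preexp_A(a_1, a_2))$. The marginal conditions $\pi_1(\mu) = g(a_1)$ and $\pi_2(\mu) = g(a_2)$ then let me rewrite each one-sided expectation as an expectation over $\mu$: summing the defining series for $\pi_1(\mu)$ over the suppressed second coordinate gives $\E{y_1 \sim g(a_1)}{y_1} = \E{(y_1, y_2) \sim \mu}{y_1}$, and symmetrically $\E{y_2 \sim g(a_2)}{y_2} = \E{(y_1, y_2) \sim \mu}{y_2}$.

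Next, by linearity of expectation I would combine the two marginal expectations into a single expectation of the difference, $\E{\mu}{y_1} - \E{\mu}{y_2} = \E{\mu}{y_1 - y_2}$, and bound its absolute value pointwise: since $|\sum_i p_i x_i| \leq \sum_i p_i |x_i|$, we obtain $|\E{\mu}{y_1 - y_2}| \leq \E{\mu}{|y_1 - y_2|}$. Chaining this with the sensitivity bound yields exactly $|\E{y_1 \sim g(a_1)}{y_1} - \E{y_2 \sim g(a_2)}{y_2}| \leq f(\preexp_A(a_1, a_2))$, as required.

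The only genuine subtlety---and the step I would treat most carefully---is integrability, since the expectation operator is formally defined on non-negative functions whereas $y_1 - y_2$ is signed. The finiteness of $\E{\mu}{|y_1 - y_2|}$ guaranteed by the hypothesis makes $y_1 - y_2$ absolutely $\mu$-summable, so the signed expectation $\E{\mu}{y_1 - y_2}$ and its splitting via linearity are justified; the individual expectations $\E{g(a_1)}{y_1}$ and $\E{g(a_2)}{y_2}$ appearing on the left-hand side are presupposed to exist by the statement itself (and are automatic when the outputs are bounded, as in the stability application).
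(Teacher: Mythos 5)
Your proposal is correct and follows essentially the same route as the paper's proof: use the coupling from expected $f$-sensitivity, rewrite each one-sided expectation via the marginal conditions, combine by linearity, and finish with the triangle inequality $|\E{\mu}{y_1 - y_2}| \leq \E{\mu}{|y_1 - y_2|}$ followed by the sensitivity bound. Your extra remark on absolute summability of $y_1 - y_2$ is a point the paper leaves implicit, but it does not change the argument.
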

\begin{proof}
  Let $a_1, a_2 \in A$ be two inputs. Since $g$ is expected $f$-sensitive, there
  exists a coupling $\coupling {\mu} {g(a_1)} {g(a_2)}$ such that the expected
  distance over $\mu$ is at most $f(\preexp_A(a_1, a_2))$. We can bound:
  \begin{align}
    \left| \E {y_1 \sim g(a_1)} {y_1} - \E {y_2 \sim g(a_2)} {y_2} \right|
    &= \left| \E {(y_1, y_2) \sim \mu} {y_1} - \E {(y_1, y_2) \sim \mu} {y_2} \right|
    \tag{Coupling} \\
    &= \left| \E {(y_1, y_2) \sim \mu} {y_1 - y_2} \right|
    \tag{Linearity} \\
    &\leq \E {(y_1, y_2) \sim \mu} {\left| y_1 - y_2 \right|}
    \tag{Triangle ineq.} \\
    &\leq f(\preexp_A(a_1, a_2))
    \tag{$f$-sensitivity}
  \end{align}
\end{proof}

\begin{proposition}[Probabilistic sensitivity] \label{prop:conseq:tv}
  Suppose that $g : A \to \distr(B)$ is expected $f$-sensitive with respect to
  distances $\preexp_A$ and $\preexp_B$, where $\preexp_B(b_1, b_2) < \beta$
  if and only if $b_1 = b_2$.  Then for any two inputs $a_1, a_2 \in A$, we have
  \[
    \text{TV}(g(a_1), g(a_2)) \leq f(\preexp_A(a_1, a_2)) / \beta ,
  \]
  where the \emph{total variation distance} is defined as
  \[
    \text{TV}(g(a_1), g(a_2))
    \triangleq \max_{E \subseteq B}
    \left| \Pr_{b_1 \sim g(a_1)}[ b_1 \in E ] - \Pr_{b_2 \sim g(a_2)}[ b_2 \in E ] \right| .
  \]
\end{proposition}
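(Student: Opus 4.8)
The plan is to route through a single coupling that witnesses expected $f$-sensitivity, bound the total variation distance by the probability that the two coordinates of the coupling disagree, and then convert that disagreement probability into a bound on the expected distance using the structural assumption on $\preexp_B$. Fix inputs $a_1, a_2 \in A$. By expected $f$-sensitivity there is a coupling $\coupling{\mu}{g(a_1)}{g(a_2)}$ with $\E{(b_1,b_2) \sim \mu}{\preexp_B(b_1,b_2)} \leq f(\preexp_A(a_1,a_2))$; this is the only coupling I will need.

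First I would prove the coupling inequality $\text{TV}(g(a_1), g(a_2)) \leq \Pr_\mu[b_1 \neq b_2]$. For any event $E \subseteq B$, the marginal conditions $\pi_1(\mu) = g(a_1)$ and $\pi_2(\mu) = g(a_2)$ give $\Pr_{b_1 \sim g(a_1)}[b_1 \in E] = \Pr_\mu[b_1 \in E]$ and likewise for $g(a_2)$. Splitting each side according to membership of the other coordinate in $E$,
\[
  \Pr_\mu[b_1 \in E] - \Pr_\mu[b_2 \in E]
  = \Pr_\mu[b_1 \in E \land b_2 \notin E] - \Pr_\mu[b_1 \notin E \land b_2 \in E]
  \leq \Pr_\mu[b_1 \neq b_2],
\]
since the event $\{b_1 \in E \land b_2 \notin E\}$ is contained in $\{b_1 \neq b_2\}$. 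The roles of $b_1$ and $b_2$ are symmetric, so the same bound holds for the opposite difference; hence $|\Pr_\mu[b_1 \in E] - \Pr_\mu[b_2 \in E]| \leq \Pr_\mu[b_1 \neq b_2]$, and taking the maximum over $E$ gives the coupling inequality. Working with this symmetric form, rather than passing to the complement $E^c$, sidesteps any worry about $g(a_1)$ and $g(a_2)$ being sub-distributions of possibly differing weight.

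Second, I would bound the disagreement probability by a Markov-style estimate. The hypothesis ``$\preexp_B(b_1, b_2) < \beta$ iff $b_1 = b_2$'' forces $\beta > 0$ (take $b_1 = b_2$ and use $\preexp_B \geq 0$) and gives $\preexp_B(b_1, b_2) \geq \beta$ whenever $b_1 \neq b_2$; thus pointwise $\preexp_B(b_1, b_2) \geq \beta \cdot \ind{b_1 \neq b_2}$. Monotonicity and linearity of expectation then yield
\[
  \beta \cdot \Pr_\mu[b_1 \neq b_2]
  \leq \E{(b_1,b_2) \sim \mu}{\preexp_B(b_1, b_2)}
  \leq f(\preexp_A(a_1, a_2)),
\]
and dividing by $\beta$ and chaining with the coupling inequality gives $\text{TV}(g(a_1), g(a_2)) \leq f(\preexp_A(a_1,a_2))/\beta$, as required. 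The argument is essentially routine, so I do not expect a serious obstacle: the only points needing care are keeping the coupling inequality valid in the sub-distribution setting and noticing that the single hypothesis on $\preexp_B$ quietly encodes both $\beta > 0$ and the off-diagonal lower bound $\preexp_B \geq \beta$.
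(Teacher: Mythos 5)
Your proof is correct and follows essentially the same route as the paper's: the same witnessing coupling, the same pointwise bound $\preexp_B(b_1,b_2) \geq \beta \cdot \ind{b_1 \neq b_2}$ turned into a Markov-style estimate, and the classical coupling inequality $\text{TV} \leq \Pr_\mu[b_1 \neq b_2]$ to finish. The only difference is that the paper cites that last inequality to the literature (Lindvall) while you prove it inline---and your symmetric splitting argument is indeed the right way to keep it valid for sub-distributions, a point the paper leaves implicit.
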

\begin{proof}
  Let $a_1, a_2 \in A$ be two inputs. Since $g$ is expected $f$-sensitive, there
  exists a coupling $\coupling {\mu} {g(a_1)} {g(a_2)}$ such that the expected
  distance $\preexp_B$ over $\mu$ is at most $f(\preexp_A(a_1, a_2))$. We can
  bound:
  \begin{align}
    \Pr_{(b_1, b_2) \sim \mu} [ b_1 \neq b_2 ]
    &= \E {(b_1, b_2) \sim \mu} { \ind{b_1 \neq b_2} }
    \notag
    \\
    &\leq \E {(b_1, b_2) \sim \mu} { \preexp_B(b_1, b_2) / \beta }
    \tag{$b_1 \neq b_2 \to \preexp_B \geq \beta$}
    \\
    &\leq f(\preexp_A(a_1, a_2)) / \beta .
    \tag{Linearity, $f$-sensitivity}
  \end{align}
  By a classical theorem about couplings (see, e.g., \citet{Lindvall02}), the
  total-variation distance is at most the probability on the first line.
\end{proof}

Expected $f$-sensitive functions are closed under sequential composition.

\begin{proposition} \label{prop:f-lip-comp}
Let $f,f'\in\mathcal{A}$ be non-negative affine functions, and let
$\preexp_{A}$, $\preexp_{B}$ and $\preexp_C$ be distances on $A$, $B$
and $C$ respectively. Assume that $g:A\to\distr(B)$ is expected
$f$-sensitive and $h:B\to\distr(C)$ is expected $f'$-sensitive. Then
the (monadic) composition of $g$ and $h$ is expected $f' \circ f$-sensitive.
\end{proposition}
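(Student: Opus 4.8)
The plan is to build the composite coupling by stacking the couplings supplied by the two sensitivity hypotheses, using the monadic $\mathsf{bind}$ on distributions over pairs. Fix inputs $x_1, x_2 \in A$ and write $f(z) = \alpha z + \beta$ and $f'(z) = \alpha' z + \beta'$ with all coefficients non-negative. First I would invoke expected $f$-sensitivity of $g$ to obtain a coupling $\coupling{\mu}{g(x_1)}{g(x_2)}$ with $\E{(y_1,y_2)\sim\mu}{\preexp_B(y_1,y_2)} \le f(\preexp_A(x_1,x_2))$. Then, for each pair $(y_1, y_2)$ in the support of $\mu$, expected $f'$-sensitivity of $h$ yields a coupling $\coupling{\mu_{y_1,y_2}}{h(y_1)}{h(y_2)}$ with $\E{(z_1,z_2)\sim\mu_{y_1,y_2}}{\preexp_C(z_1,z_2)} \le f'(\preexp_B(y_1,y_2))$. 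I then set $\nu \eqdef \mathsf{bind}~\mu~(\lambda (y_1,y_2).\, \mu_{y_1,y_2})$, the distribution over $C \times C$ that samples $(y_1,y_2)$ from $\mu$ and then $(z_1,z_2)$ from $\mu_{y_1,y_2}$.

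Two things then need checking. First, that $\nu$ is genuinely a coupling of the monadic composites $\mathsf{bind}~g(x_1)~h$ and $\mathsf{bind}~g(x_2)~h$. For the first marginal I would compute $\pi_1(\nu)(z_1) = \sum_{y_1,y_2} \mu(y_1,y_2)\,\pi_1(\mu_{y_1,y_2})(z_1) = \sum_{y_1,y_2}\mu(y_1,y_2)\,h(y_1)(z_1)$, and then collapse the sum over $y_2$ using $\pi_1(\mu) = g(x_1)$ to recover $\mathsf{bind}~g(x_1)~h$; the second marginal is symmetric. Second, and this is the crux, I would bound the expected post-distance. Pushing the expectation through the $\mathsf{bind}$ (an application of Tonelli, legitimate because $\preexp_C \ge 0$ and everything is discrete) gives $\E{(z_1,z_2)\sim\nu}{\preexp_C(z_1,z_2)} = \E{(y_1,y_2)\sim\mu}{\E{(z_1,z_2)\sim\mu_{y_1,y_2}}{\preexp_C(z_1,z_2)}}$, and the inner expectation is at most $f'(\preexp_B(y_1,y_2))$ by the second hypothesis, so by monotonicity of the outer expectation $\E{\nu}{\preexp_C} \le \E{(y_1,y_2)\sim\mu}{f'(\preexp_B(y_1,y_2))}$.

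It remains to simplify $\E{(y_1,y_2)\sim\mu}{f'(\preexp_B(y_1,y_2))}$, and here is where affinity and non-negativity are essential. Because $f'$ is affine, $\E{\mu}{f' \circ \preexp_B} = \alpha' \E{\mu}{\preexp_B} + \beta'\,\wt{\mu}$; since $\beta' \ge 0$ and $\wt{\mu} \le 1$ for a sub-distribution, this is at most $\alpha' \E{\mu}{\preexp_B} + \beta' = f'(\E{\mu}{\preexp_B})$. Finally, because $\alpha' \ge 0$ makes $f'$ non-decreasing, I can apply the first hypothesis $\E{\mu}{\preexp_B} \le f(\preexp_A(x_1,x_2))$ inside $f'$ to conclude $\E{\nu}{\preexp_C} \le f'(f(\preexp_A(x_1,x_2))) = (f'\circ f)(\preexp_A(x_1,x_2))$, which is exactly expected $f'\circ f$-sensitivity of the composite.

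I expect the only genuinely delicate point to be the affine-simplification step: it is precisely here that the restriction to affine $f, f'$ (rather than arbitrary convex or concave transformers) pays off, since for a non-affine $f'$ Jensen's inequality would move the bound in the wrong direction. The sub-distribution subtlety—that the constant term is scaled by $\wt{\mu} \le 1$ rather than being exactly $\beta'$—also relies on $\beta' \ge 0$, so the non-negativity assumptions built into the definition of $\mathcal{A}$ are used twice. The marginal computation and the Tonelli exchange are routine once the composite coupling $\nu$ is set up correctly.
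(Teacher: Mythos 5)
Your proposal is correct and follows essentially the same route as the paper's proof: couple via $g$'s sensitivity, couple fiberwise via $h$'s sensitivity, compose monadically, check marginals, and push the expectation through the bind before applying affinity and monotonicity of $f'$. If anything, you are slightly more careful than the paper at the affine-simplification step, making explicit that $\E{\mu}{f'\circ\preexp_B} = \alpha'\,\E{\mu}{\preexp_B} + \beta'\,\wt{\mu} \le f'(\E{\mu}{\preexp_B})$ uses both $\wt{\mu}\le 1$ and $\beta'\ge 0$, a point the paper compresses into the annotation ``Linearity, $f'$ affine.''
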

\begin{proof}
  Let $a_1, a_2 \in A$ be any pair of inputs. Since $g$
  is expected $f$-sensitive, there is a coupling $\coupling {\mu} {g(a_1)}
  {g(a_2)}$ such that
  \begin{equation} \label{eq:f-lip-comp-d1}
    \E {\mu} {\preexp_B} \leq f(\preexp_A(a_1, a_2)) .
  \end{equation}
  Similarly for
  every $b_1, b_2 \in B$, there is a coupling $\coupling {M(b_1, b_2)} {h(b_1)}
  {h(b_2)}$ such that
  \begin{equation} \label{eq:f-lip-comp-d2}
    \E {M(b_1, b_2)} {\preexp_C} \leq f'(\preexp_B(b_1, b_2)) ,
  \end{equation}
  since $h$ is $f'$-sensitive.

  Define the distribution $\mu' \triangleq \E {\mu} {M}$. It is
  straightforward to check the marginals $\pi_1(\mu')(a_1) = \E {g(a_1)} {h}$
  and $\pi_2(\mu')(a_2) = \E {g(a_2)} {h}$. We can bound the expected distance:
  \begin{align}
    \E {\mu'} {\preexp_C}
    &= \sum_{c_1, c_2} \preexp_C(c_1, c_2) \cdot \sum_{b_1, b_2} \mu(b_1, b_2) \cdot M(b_1, b_2)(c_1, c_2)
    \notag \\
    &= \sum_{b_1, b_2} \mu(b_1, b_2) \sum_{c_1, c_2} \preexp_C(c_1, c_2) \cdot M(b_1, b_2)(c_1, c_2)
    \notag \\
    &\leq \sum_{b_1, b_2} \mu(b_1, b_2) f'(\preexp_b(b_1, b_2))
    \tag{\cref{eq:f-lip-comp-d2}} \\
    &\leq f'\left( \sum_{b_1, b_2} \mu(b_1, b_2) \cdot \preexp_b(b_1, b_2)) \right)
    \tag{Linearity, $f'$ affine} \\
    &\leq f'\left( f(\preexp_A(a_1, a_2)) \right)
    \tag{\cref{eq:f-lip-comp-d1}, $f'$ non-decreasing} \\
    &= f' \circ f(\preexp_A(a_1, a_2))
    \notag .
  \end{align}
\end{proof}
Taking the pre- and post-distances to be the same yields another useful consequence.
\begin{proposition}
Let $\preexp$ be a distance over $A$ and let $f\in\mathcal{A}$. Let
$g:A\to \distr(A)$ be an expected $f$-sensitive function. Then for
every $T\in\mathbb{N}$, the $T$-fold (monadic) composition $g^T$ of $g$
is expected $f^T$-sensitive, i.e.\, for every $x_1,x_2\in A$, there
exists a coupling $\coupling {\mu} {g^{T}(x_1)} {g^{T}(x_2)}$ such that
\[
  \E \mu {\preexp} \leq f^T(\preexp(x_1,x_2)) .
\]
\end{proposition}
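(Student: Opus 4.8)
The plan is to proceed by induction on $T$, using \cref{prop:f-lip-comp} as the engine for the inductive step. The key observation is that the present statement is just the specialization of sequential composition to the case where the input, intermediate, and output spaces all coincide with $A$, all three distances are taken to be the single distance $\preexp$, and both functions being composed are $g$ itself. Since $\mathcal{A}$ is closed under composition (as recorded in the preliminaries), $f^T\in\mathcal{A}$ for every $T$, so \cref{prop:f-lip-comp} is applicable at each stage.

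For the base case I would take $T = 0$, where $g^0$ is the monadic unit $x \mapsto \delta_x$ and $f^0 = \id$. Given $x_1, x_2 \in A$, the Dirac distribution $\delta_{(x_1,x_2)}$ on the pair is a coupling of $g^0(x_1) = \delta_{x_1}$ and $g^0(x_2) = \delta_{x_2}$, and its expected distance equals $\preexp(x_1, x_2) = \id(\preexp(x_1, x_2))$, as required. (If $\mathbb{N}$ is taken to begin at $1$, the base case $T = 1$ is even more immediate, since $g^1 = g$ is expected $f^1 = f$-sensitive by hypothesis.)

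For the inductive step, assume $g^T$ is expected $f^T$-sensitive. Because Kleisli composition for the sub-distribution monad is associative, $g^{T+1}$ decomposes as the monadic composition of $g^T$ (the first function, expected $f^T$-sensitive by the induction hypothesis) followed by $g$ (the second function, expected $f$-sensitive by assumption). Applying \cref{prop:f-lip-comp} to these two functions yields that $g^{T+1}$ is expected $f \circ f^T$-sensitive, and since $f \circ f^T = f^{T+1}$ this closes the induction and produces, for every $x_1, x_2$, a coupling $\coupling {\mu} {g^{T}(x_1)} {g^{T}(x_2)}$ with $\E \mu {\preexp} \leq f^T(\preexp(x_1,x_2))$.

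The argument is essentially routine, so I do not anticipate a serious obstacle; the only points needing a little care are verifying the base case through the diagonal coupling and confirming that the $(T+1)$-fold composition genuinely splits as a composition of $g^T$ and $g$, which is precisely associativity of monadic composition. Everything quantitative is absorbed into the single invocation of \cref{prop:f-lip-comp} per step.
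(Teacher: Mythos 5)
Your proof is correct and is essentially the paper's own argument: the paper gives no separate proof of this proposition, presenting it as an immediate consequence of \cref{prop:f-lip-comp} (``Taking the pre- and post-distances to be the same yields another useful consequence''), which is exactly the induction on $T$ you spell out, with $g^{T+1}$ split as $g^T$ followed by $g$ and the quantitative content absorbed into one application of \cref{prop:f-lip-comp} per step. Your explicit handling of the base case via the Dirac coupling $\delta_{(x_1,x_2)}$ and the appeal to associativity of Kleisli composition merely fill in details the paper leaves implicit.
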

This proposition can be seen as a variant of the Banach
fixed point theorem. Informally, under some
reasonable conditions on $\preexp$, contractive probabilistic maps
$g:A\to\distr(A)$ have a unique stationary distribution, where
a probabilistic map is \emph{contractive} if it is expected $f$-sensitive for a
map $f$ of the form $z\mapsto\alpha\cdot z$ with $\alpha<1$.

\subsection{Continuity from Expectation Couplings}
Expected $f$-sensitivity is a property of a probabilistic function. It will be
useful to factor out the condition on distributions.  To this end, we introduce
\emph{expectation couplings} a quantitative extension of probabilistic
couplings where an average distance over the coupling is bounded.
\begin{definition}[Expectation couplings]
  Let $\preexp: A \times B \to \RR^+$ be a distance and let
  $\delta \in \RR^+$ be a constant. Moreover, let $\mu_a \in
  \distr(A)$, $\mu_b \in \distr(B)$ and $\mu \in \distr(A \times
  B)$. Then $\mu$ is an $(\preexp, \delta)$-\emph{expectation
    coupling} (or simply, an \emph{expectation coupling}) for $\mu_a$
  and $\mu_b$ if $\coupling \mu {\mu_a} {\mu_b}$ and $\E \mu \preexp
  \le \delta$.

We write $\bcouplingsupp {\mu} {\preexp} {\delta} {\mu_a} {\mu_b}
{\pre}$ when $\mu$ is an expectation coupling with support
$\supp(\mu)$ contained in a binary relation $\pre \subseteq A \times
B$. We omit $\pre$ when it is the trivial (always true) relation.
\end{definition}       
Expectation couplings are closely linked to expected $f$-sensitivity.
\begin{proposition}
A probabilistic function $g:A\to\distr(B)$ is \emph{expected}
$f$-\emph{sensitive} (with respect to $\preexp_A$ and $\preexp_B$) if for every
$x_1,x_2\in A$, there exists $\mu$ such that $\bcouplingsupp {\mu} {\preexp}
{\delta} {g(x_1)} {g(x_2)}{}$, where $\delta=f(\preexp_{A}(x_1,x_2))$.
\end{proposition}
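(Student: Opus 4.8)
The plan is to prove this by directly unfolding the two definitions in play, since the statement is essentially a reformulation of expected $f$-sensitivity in the language of expectation couplings. First I would recall that, by definition, $g$ is expected $f$-sensitive with respect to $\preexp_A$ and $\preexp_B$ exactly when, for every pair of inputs $x_1, x_2 \in A$, there is a coupling $\coupling{\mu}{g(x_1)}{g(x_2)}$ satisfying $\E{(y_1, y_2) \sim \mu}{\preexp_B(y_1, y_2)} \le f(\preexp_A(x_1, x_2))$.

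Next I would unfold the hypothesis. By the definition of expectation coupling, the assertion $\bcouplingsupp{\mu}{\preexp_B}{\delta}{g(x_1)}{g(x_2)}{}$ states precisely that $\coupling{\mu}{g(x_1)}{g(x_2)}$ together with the expected-distance bound $\E{\mu}{\preexp_B} \le \delta$. Substituting the prescribed value $\delta = f(\preexp_A(x_1, x_2))$ turns this into exactly the condition appearing in the definition of expected $f$-sensitivity. Since the hypothesis supplies such a $\mu$ for each pair $x_1, x_2$, the conclusion follows immediately; in fact the same unfolding yields the converse, so the ``if'' is really an ``iff''.

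The only point demanding any care---and the closest thing to an obstacle in an otherwise definitional argument---is the bookkeeping: confirming that the distance carried by the expectation coupling is the output distance $\preexp_B$ (as fixed by the phrase ``with respect to $\preexp_A$ and $\preexp_B$''), and that the support annotation is instantiated with the trivial relation so that it imposes no extra constraint. Once these identifications are in place, the two statements coincide verbatim and there is nothing further to verify.
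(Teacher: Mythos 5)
Your proof is correct and matches the paper's treatment: the paper states this proposition without proof, treating it as an immediate unfolding of the definitions of expectation coupling and expected $f$-sensitivity, which is precisely what you carry out. Your bookkeeping observations---that the coupling's distance is the output distance $\preexp_B$ and that the omitted support annotation is the trivial relation---correctly resolve the only notational subtleties, and your remark that the implication is in fact an equivalence is also right.
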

Much like expected $f$-sensitive functions, 
expectation couplings are closed under sequential composition: given
an expectation coupling between two distributions $\mu_a$ and $\mu_b$,
two functions $M_a : A \to \distr(A')$ and $M_b : B \to \distr(B')$
and a function $M$ mapping pairs of samples in $(a, b) \in A \times B$
to an expectation coupling of $M_a(a)$ and $M_b(b)$, we have an
expectation coupling of the two distributions from sampling $\mu_a$
and $\mu_b$ and running $M_a$ and $M_b$, respectively.
\begin{proposition}[Composition of expectation couplings]
  \label{prop:seqcomp}
  Let $\pre \subseteq A\times B$, $\preexp: A \times B \to \RR^+$,
  $\post \subseteq A\times B$, $\postexp: A \times B \to \RR^+$,
  $\delta \in \RR^+$, and $f\in\mathcal{A}$. Let
  $\mu_a\in\distr(A)$, $M_a:A\to\distr(A')$, and let $\mu'_a =
  \E {\mu_a} {M_a}$. Let $\mu_b\in\distr(B)$,
  $M_b:B\to\distr(B')$, and set $\mu'_b = \E {\mu_b} {M_b}$. Suppose we
  have functions $\mu\in\distr(A\times B)$ and $M:(A\times B) \to
  \distr(A'\times B')$ such that:
  \begin{enumerate}
  \item $\bcouplingsupp \mu \preexp \delta {\mu_a} {\mu_b} {\pre}$ and
  \item $\bcouplingsupp {M(a,b)} \postexp {f(\preexp(a,b))}  {M_a(a)}
    {M_b(b)} {\post}$ for every $(a,b) \in \pre$.
  \end{enumerate}
  Then $\bcouplingsupp {\mu'} \postexp {f(\delta)} {\mu'_a} {\mu'_b} {\post}$,
  where $\mu'$ is the monadic composition $\E {(a, b) \sim \mu} {M(a, b)}$.
\end{proposition}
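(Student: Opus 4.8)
The plan is to verify the three conditions packaged in the conclusion $\bcouplingsupp {\mu'} \postexp {f(\delta)} {\mu'_a} {\mu'_b} {\post}$: that $\mu'$ is a genuine coupling of $\mu'_a$ and $\mu'_b$ (correct marginals), that $\supp(\mu') \subseteq \post$, and that $\E {\mu'} \postexp \leq f(\delta)$. Throughout, the crucial leverage is the support clause $\supp(\mu) \subseteq \pre$ from hypothesis (1), which guarantees that whenever $\mu(a,b) \neq 0$ we may invoke the per-pair hypothesis (2) at $(a,b)$.

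For the marginals, I would unfold $\mu'(a',b') = \sum_{a,b} \mu(a,b)\cdot M(a,b)(a',b')$ and compute $\pi_1(\mu')(a') = \sum_{a,b}\mu(a,b)\sum_{b'}M(a,b)(a',b')$, freely exchanging the non-negative sums by Tonelli. For each $(a,b) \in \supp(\mu) \subseteq \pre$, hypothesis (2) says $M(a,b)$ couples $M_a(a)$ and $M_b(b)$, so the inner sum collapses to $\pi_1(M(a,b))(a') = M_a(a)(a')$; summing against $\mu(a,b)$ and using $\pi_1(\mu) = \mu_a$ yields $\sum_a \mu_a(a) M_a(a)(a') = \mu'_a(a')$. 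The computation for $\pi_2$ is symmetric. For the support, if $(a',b') \in \supp(\mu')$ then some $(a,b)$ has $\mu(a,b)\neq 0$ and $M(a,b)(a',b')\neq 0$; the former places $(a,b)$ in $\pre$, so hypothesis (2) applies, and $\supp(M(a,b))\subseteq\post$ forces $(a',b')\in\post$.

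For the expected-distance bound, I would exchange sums again to write $\E {\mu'} \postexp = \E {(a,b)\sim\mu} {\, \E {M(a,b)} \postexp \,}$, the expectation-coupling analogue of the reindexing in \cref{prop:f-lip-comp}. On $\supp(\mu)\subseteq\pre$ the inner expectation is bounded by $f(\preexp(a,b))$ via hypothesis (2), so $\E {\mu'} \postexp \leq \E {(a,b)\sim\mu} {f(\preexp(a,b))}$. Writing $f$ as $z\mapsto \alpha z + \beta$ and using linearity of expectation, the right-hand side equals $\alpha\cdot\E \mu \preexp + \beta\cdot\wt{\mu}$; since $\E \mu \preexp \leq \delta$ with $\alpha\geq 0$ and $\wt{\mu}\leq 1$ with $\beta\geq 0$, this is at most $\alpha\delta + \beta = f(\delta)$, as required.

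The one point deserving care---the same subtlety glossed as ``$f'$ affine'' in \cref{prop:f-lip-comp}---is that $\mu$ is a \emph{sub}-distribution, so pushing the expectation through the affine map $f$ gives the constant term weight $\beta\cdot\wt{\mu}$ rather than $\beta$. The inequality $\beta\cdot\wt{\mu}\leq\beta$ relies essentially on $\wt{\mu}\leq 1$ together with $\beta\geq 0$, which is precisely where non-negativity of the affine coefficients is used. Everything else is routine rearrangement of non-negative series justified by Tonelli, with the support hypothesis $\supp(\mu)\subseteq\pre$ doing the real work of licensing the pointwise applications of hypothesis (2).
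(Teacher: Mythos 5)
Your proof is correct and follows essentially the same route as the paper's own argument: check marginals, support, and the expected-distance bound by unfolding the monadic composition, restricting all sums to $\supp(\mu) \subseteq \pre$ so that hypothesis (2) applies pointwise, and then pushing the affine map $f$ through the expectation. Your explicit treatment of the sub-distribution weight ($\beta \cdot \wt{\mu} \leq \beta$ because $\wt{\mu} \leq 1$ and $\beta \geq 0$) is exactly the step the paper compresses into the labels ``Linearity of expectation'' and ``Monotonicity of $f$'', so if anything you are slightly more careful at the one delicate point.
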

\begin{proof}[Proof sketch]
  By unfolding definitions and checking the support, marginal, and expected
  distance properties. The support and marginal conditions follow by the support
  and marginal conditions for the premises, while the expected distance
  condition follows by an argument similar to \cref{prop:f-lip-comp}. We defer
  the details to the \LONGVERSION.
\end{proof}

\section{Program Logic} \label{sec:logic}

As we have seen, expectation couplings can be composed together and the
existence of an expectation coupling implies expected sensitivity. Accordingly,
we can give a program logic to reason about expectation couplings in a
structured way.

\subsection{Programming Language}

We base our development on \pwhile, a core language with deterministic
assignments, probabilistic assignments, conditionals, and loops. The syntax of
statements is defined by the grammar:
\begin{align*}
    s &::= x \iass e
           \mid x \irnd g 
           \mid s; s \mid \iskip
        \mid \ifte{e}{s}{s}
        \mid \iwhile{e}{s}
\end{align*}
where $x$, $e$, and $g$ range over variables $\vars$, expressions
$\exprs$ and distribution expressions $\dexprs$ respectively.
$\exprs$ is defined inductively from $\vars$ and operators, while
$\dexprs$ consists of parametrized distributions---for instance, the
uniform distribution ${[n]}$ over the set $\{ 0,\ldots, n-1\}$ or the
Bernoulli distribution $\bernD(p)$, where the numeric parameter $p \in [0, 1]$
is the probability of returning true. We
will write $\ift{e}{s}$ as shorthand for $\ifte{e}{s}{\iskip}$.
We implicitly assume that programs are well-typed w.r.t.\, a standard typing
discipline; for instance, the guard expressions of conditionals and loops are
booleans, operations on expressions are applied to arguments of the correct
type, etc.

Following the seminal work of \citet{Kozen79}, probabilistic programs can be
given a monadic denotational semantics, taking a memory as input and producing a
sub-distribution on output memories. To avoid measure-theoretic technicalities,
we limit our focus to discrete sub-distributions.
Memories are type-preserving maps from variables to values---formally, we
define an interpretation for each type and require that a variable of type $T$
is mapped to an element of the interpretation of $T$. We let $\mem$ denote the
set of memories.  Then, the semantics $\dsem{m}{e}$ of a (well-typed) expression
$e$ is defined in the usual way as an element of the interpretation of the type
of $e$, and parametrized by a memory $m$. The interpretation of distribution
expressions is defined and denoted likewise.

Now, the semantics $\dsem{m}{s}$ of a statement $s$ w.r.t.\ to some initial
memory $m$ is the sub-distribution over states defined by the clauses of
\cref{fig:semantics}.
The most interesting case is for loops, where the interpretation of a
$\kwhile$ loop is the limit of the interpretations of its finite unrollings.
Formally, the $n^{th}$ \emph{truncated iterate} of the loop
$\iwhile{b}{s}$ is defined as
$$\overbrace{\ift{b}{s}; \ldots; \ift{b}{s}}^{n~\mbox{times}};\ift{b}{\iabort}$$
which we represent using the shorthand $(\ift{b}{s})^n_{\mid \neg b}$.  For any
initial memory $m$, applying the truncated iterates yields an
pointwise-increasing and bounded sequence of sub-distributions. The limit of
this sequence is well-defined, and gives the semantics of the $\kwhile$ loop.

\begin{figure*}
\begin{align*}
  \dsem{m}{\iskip} &= \dunit{m} &
  \dsem{m}{x \irnd g} &= \dlet v {\dsem{m}{g}} {\dunit{m[\subst{x}{v}]}} \\
  \dsem{m}{x \iass e} &= \dunit{m[\subst{x}{\dsem{m}{e}}]} &
  \dsem{m}{\ifte{e}{s_1}{s_2}} &=
    \text{if $\dsem{m}{e}$ then $\dsem{m}{s_1}$ else $\dsem{m}{s_2}$} \\
  \dsem{m}{s_1; s_2} &= \dlet {\xi} {\dsem{m}{s_1}} {\dsem{\xi}{s_2}} &
  \dsem{m}{\iwhile{b}{s}} &=
    \lim_{n \to \infty}\ \dsem{m}{(\ift b s)^n_{\mid \neg b}}
\end{align*}
(Note that $\mathbb{E}$ is the monadic bind.)
\caption{\label{fig:semantics} Denotational semantics of programs}
\end{figure*}

\subsection{Proof System}
\SYSTEM is a Hoare-style logic augmented to consider two programs instead of
one (a so-called \emph{relational} program logic). \SYSTEM judgments are of the form
\[\eprhl{s_1}{s_2}{\pre; \preexp}{\post; \postexp}{f}\]
for programs $s_1$, $s_2$, assertions $\pre,\post: \mem \times \mem
\to \BB$, distances $\preexp, \postexp : \mem \times \mem \to
\RR^+$, and a non-negative affine function $f\in\mathcal{A}$. We will refer
to $f$ as a \emph{distance transformer}.
\begin{definition} 
A judgment $\eprhl{s_1}{s_2}{\pre; \preexp}{\post; \postexp}{f}$
is valid if for
every memories $m_1$, $m_2$ s.t. $(m_1, m_2) \models \pre$, there
exists $\mu$ such that
$$\bcouplingsupp{\mu}{\postexp}{f(\preexp(m_1, m_2))}{\dsem{m_1}{s_1}}{
  \dsem{m_2}{s_2}}{\post}$$
\end{definition}

The notion of validity is closely tied to expected $f$-sensitivity. For
instance, if the judgment
\[
  \eprhl{s}{s}{\top; \preexp}{\top; \postexp}{f}
\]
is valid, then the program $s$ interpreted as a function $\dsem{}{s} : \mem \to
\distr(\mem)$ is expected $f$-sensitive with respect to distances $\preexp$ and
$\postexp$. In fact, the pre- and post-conditions $\pre$ and $\post$ can also be
interpreted as distances. If we map $\pre$ to the pre-distance
$\preexp_\pre(m_1, m_2) \triangleq \ind{ (m_1, m_2) \notin \pre }$, and
$\post$ to the post-distance $\preexp_\post(m_1, m_2) \triangleq \ind{
(m_1, m_2) \notin \post }$, then the judgment
\[
  \eprhl{s_1}{s_2}{\top; \preexp_\pre}{\top; \preexp_\post}{\id}
\]
is equivalent to
\[
  \eprhl{s_1}{s_2}{\pre; -}{\post; -}{-}
\]
where dashes stand for arbitrary distances and distance transformers.

Now, we introduce some notation and then present the rules of the logic.
First, note that each boolean expression $e$ naturally yields two assertions
$e\lside$ and $e\rside$, resp. called its left and right injections:
\[\begin{aligned}
  m_1 \models e &\iff m_1, m_2 \models e\lside \\
  m_2 \models e&\iff m_1, m_2 \models e\rside 
\end{aligned} \]
The notation naturally extends to mappings from memories to booleans.
Second, several rules use substitutions. Given a memory $m$, variable
$x$ and expression $e$ such that the types of $x$ and $e$ agree, we
let $m[\subst{x}{e}]$ denote the unique memory $m'$ such that $m(y) =
m'(y)$ if $y\neq x$ and $m'(x)=\dsem{m}{e}$. Then, given a variable
$x$ (resp. $x'$), an expression $e$ (resp. $e'$), and an assertion
$\pre$, we define the assertion
$\pre[\subst{x\lside,x'\rside}{e\lside,e'\rside}]$ by the clause
\[
  \pre[\subst{x\lside,x'\rside}{e\lside,e'\rside}](m_1,m_2)
    \eqdef \pre(m_1[\subst{x}{e}],m_2[\subst{x'}{e'}]) .
\]
Substitution of distances is defined similarly. One can also define
one-sided substitutions, for instance $\pre[\subst{x\lside}{e\lside}]$.

We now turn to the rules of the proof system in
\cref{fig:rules}. The rules can be divided into two groups: two-sided
rules relate programs with the same structure, while structural rules apply to
two programs of any shape. The full logic \SYSTEM also features one-sided rules
for relating a program with a fixed shape to a program of unknown shape; later
we will show that many of these rules are derivable. We briefly comment on each
of the rules, starting with the two-sided rules.

The [\textsc{Assg}] rule is similar to the usual rule for assignments, and
substitutes the assigned expressions into the pre-condition and pre-distance.

The [\textsc{Rand}] rule is a bit more subtle. Informally, the rule selects
a coupling, given as a bijection between supports, between the
two sampled distributions in the left and right program.

The [\textsc{SeqCase}] rule combines sequential composition with a case analysis
on properties satisfied by intermediate memories after executing
$s_1$ and $s_2$. Informally, the rule considers events $e_1, \ldots, e_n$
such that $\post$ entails $\bigvee_i {e_i}\lside$. If for every $i$ we can
relate the programs $s'_1$ and $s'_2$ with distance transformer $f_i$,
pre-condition $\post \land {e_i}\lside ;\postexp$ and post-condition
$\postz;\postexpz$, we can conclude that $s_1;s'_1$ and $s_2;s'_2$ are related
under distance transformer $f$, where $f$ upper bounds the functions $f_i$
weighted by the probability of each case.

The [\textsc{While}] rule considers two loops
synchronously, where the loop bodies preserve the invariant
$\post$. The rule additionally requires that both loops
perform exactly $n$ steps, and that there exists a variant $i$
initially set to $n$ and decreasing by 1 at each iteration. Assuming
that $f_k$ denotes the distance transformer corresponding to the
$(n-k)$th iteration, i.e., the iteration where the variant $i$ is equal
to $k$, the distance transformer for the \kwhile loops is the
function composition of the distance transformers: $f_1 \circ \dots \circ f_n$.

The remaining rules are structural rules.
The [\textsc{Conseq}] rule weakens the post-conditions, strengthens the
pre-conditions, and relaxes the distance bounds.

The [\textsc{Struct}] rule
replaces programs by equivalent programs. \Cref{fig:equiv} gives rules for
proving two programs $s, s'$ equivalent under some relational assertion $\pre$;
the judgments are of the form $\eqsem{\pre}{s}{s'}$. We keep the notion of
structural equivalence as simple as possible.

The [\textsc{Frame-D}] rule generalizes the
typical frame rule, to preserve distances. 
Assuming that the distance $\postexpz$ is not modified by
the statements of the judgments and $f$ is a non-contractive linear function
(i.e., such that $x\leq f(x)$ for all $x$), validity is preserved when adding
$\postexpz$ to the pre-and post-distances of the judgment.
Formally, $\MV(s)$ denotes the set of modified variables of $s$ and
the notation $\postexp' \# \MV(s_1),\MV(s_2)$ states that for all
memories $m_1$ and $m_1'$ that coincide on the non-modified variables
of $s_1$, and all memories $m_2$ and $m_2'$ that coincide on the non-modified
variables of $s_2$, we have $\postexp'(m_1,m_2)=\postexp'(m'_1,m'_2)$.

\begin{theorem}[Soundness] \label{thm:soundness}
For every derivable judgment
$\Peprhl{s_1}{s_2}{\pre;\preexp}{\post;\postexp}{f}$ and
initial memories $m_1$ and $m_2$ such that $(m_1, m_2) \models \pre$, there
exists $\mu$ such that
$$\bcouplingsupp{\mu}{\postexp}{f(\preexp(m_1, m_2))}{\dsem{m_1}{s_1}}{
  \dsem{m_2}{s_2}}{\post} .$$
\end{theorem}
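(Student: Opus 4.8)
The plan is to proceed by induction on the derivation of the judgment $\Peprhl{s_1}{s_2}{\pre;\preexp}{\post;\postexp}{f}$, showing that each rule of \cref{fig:rules} preserves semantic validity: whenever the premises are valid in the sense of the definition immediately preceding the theorem, so is the conclusion. Since validity is exactly the statement to be proved, establishing the inductive step for every primitive rule yields the theorem; derivable (e.g.\ one-sided) rules then require no separate treatment. The work in each case is to exhibit, or transform, the witnessing expectation coupling $\mu$.

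The axiomatic and structural cases are largely bookkeeping. For [\textsc{Assg}] both sides denote Dirac distributions, so the unique coupling is the Dirac on the pair of updated memories, and the substitution into the pre-condition and pre-distance makes the bound hold by definition. For [\textsc{Rand}] the chosen bijection between the supports of the two sampled distributions pushes forward to a joint distribution with the correct marginals, and the expected-distance bound is checked directly. The [\textsc{Conseq}] rule only weakens its conclusion, so the same witness survives after monotone adjustment of the bounds; [\textsc{Struct}] replaces programs by semantically equivalent ones, so $\mu$ transports unchanged; and [\textsc{Frame-D}] uses that the added distance $\postexpz$ depends only on unmodified variables---hence is constant across the support of any coupling produced by $s_1,s_2$---together with the non-contractive hypothesis $x \leq f(x)$ to preserve the bound.

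For [\textsc{SeqCase}] the real content has already been isolated in \cref{prop:seqcomp}. I would first obtain from the induction hypothesis an expectation coupling $\mu$ for $s_1,s_2$ with support in $\post$, then partition $\supp(\mu)$ according to which event $e_i$ holds (using that $\post$ entails $\bigvee_i e_i\lside$). On each piece the induction hypothesis for the matching premise supplies an expectation coupling for $s_1',s_2'$ with transformer $f_i$, and \cref{prop:seqcomp} glues these into a coupling for the composite whose expected distance is the $\mu$-average of the $f_i(\preexp(\cdot))$; the weighting hypothesis bounds this by $f$.

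The main obstacle is the [\textsc{While}] rule, because the loop semantics is the limit $\lim_n \dsem{m}{(\ift{b}{s})^n_{\mid\neg b}}$ rather than a finite object. The saving feature is the variant $i$ initialized to $n$ and decremented each iteration, which forces both loops to execute \emph{exactly} $n$ times before the guard fails; consequently the truncated-iterate sequence stabilizes at index $n$, so the limit coincides with the semantics of the finite unrolling $(\ift{b}{s})^n$ and the guarding $\ift{b}{\iabort}$ contributes nothing once the guard is false. I would therefore reduce the loop to this $n$-fold sequential composition and apply \cref{prop:seqcomp} iteratively, matching the $k$-th application to the iteration where the variant equals $k$ and its transformer $f_k$, so the bounds telescope to the composite $f_1 \circ \cdots \circ f_n$ asserted by the rule. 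The delicate points are (i) justifying rigorously that the semantic limit is attained at the finite stage $n$, which rests on the variant argument, and (ii) threading the invariant $\post$ through every iteration so that the support hypothesis of each application of \cref{prop:seqcomp} is met.
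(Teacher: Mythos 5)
Your proposal is correct and takes essentially the same approach as the paper: the paper's proof likewise proceeds by induction on the derivation, treats \rname{Assg}, \rname{Rand}, \rname{Conseq}, \rname{Struct}, and \rname{Frame-D} as direct coupling constructions or transports, discharges \rname{SeqCase} by averaging the per-case couplings (via an index choice function, since the $e_i$ need not be disjoint) and invoking \cref{prop:seqcomp}, and handles \rname{While} through the variant forcing exactly $n$ coupled iterations. The only organizational difference is in the loop case: the paper performs an induction on $n$, peeling one iteration at a time using the semantic identity $\dsem{m}{s;\iwhile{e}{s}} = \dsem{m}{\iwhile{e}{s}}$ when the guard holds, whereas you unroll all $n$ iterations at once after arguing that the truncated-iterate sequence stabilizes at stage $n$---both reductions rest on the same variant-plus-invariant argument and the same composition lemma.
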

\begin{proof}
  By induction on the derivation. We defer the details to the \LONGVERSION.
\end{proof}

\begin{figure*}
  \begin{mathpar}
    

\inferrule[Assg]
  {~~}
  { \Peprhl
       {x_1 \iass e_1}{x_2 \iass e_2}
       {\post[\subst{{x_1}\lside}{e_1}, \subst{{x_2}\rside}{e_2}];
         \postexp[\subst{{x_1}\lside}{e_1}, \subst{{x_2}\rside}{e_2}]
       }{\post;\postexp}
       {\id} }

\inferrule[Rand]
          { 
            h: \supp(g_1) \tobij \supp(g_2) \\
            \preexp \eqdef \E {v \sim g_1}
      {\postexp[\subst{{x_1}\lside}{v},\subst{{x_2}\rside}{h(v)}]} \\
    \forall v \in \supp(g_1) .\, g_1(v) = g_2(h(v))}
  { \Peprhl {x_1 \irnd g_1}{x_2 \irnd g_2}
      {\forall v \in \supp(g_1) .\,
        \post [\subst{{x_1}\lside}{v}, \subst{{x_2}\rside}{h(v)}];
         \preexp}
      {\post;\postexp} {\id} }

\inferrule[SeqCase]
 { \forall m_1, m_2 \models \pre .\,
     (\textstyle\sum_{i \in I} {\textstyle\Pr}_{\dsem{m_1}{s_1}} [e_i] \cdot f_i)\circ f_0 \leq f \\
    \models \post \implies {\textstyle\bigvee}_{i \in I} {e_i}\lside
    \\\\
    \Peprhl{s_1}{s_2}{\pre ; \preexp}{\post ; \postexp}{f_0} \\
    \forall i \in I.\,
     \Peprhl {s'_1} {s'_2}
       {\post \land {e_i}\lside; \postexp}{\postz; \postexpz}{f_i} 
     }
 { \Peprhl{s_1; s'_1}{s_2; s'_2}
     {\pre ; \preexp}{\postz; \postexpz}
     {f} }

  \inferrule[While]
  { \models \post \implies e\lside = e\rside \land (i\lside \leq 0 \iff \neg e\lside)
    \\
    \forall 0 < k \leq n .\,
       \Peprhl {s_1}{s_2}
         {\post \land {e_1}\lside 
                \land i\lside =k ; \preexp_{k}}
              {\post\land i\lside = k-1 ; \preexp_{k - 1}}{f_k} }
  { \Peprhl {\iwhile{e_1}{s_1}} {\iwhile{e_2}{s_2}}
    {\post\land i\lside =n; \preexp_n} {\post \land i\lside =0; \preexp_0}
    { f_1\circ \cdots  \circ f_n} }

\inferrule[Conseq]
 { \Peprhl{s_1}{s_2}{\pre; \preexp}{\post; \postexp}{f} \\
   \models \pre' \implies \pre \\
   \models \post \implies \post' \\
   \models \pre' \implies f(\preexp) \leq f'(\preexp'') \\
   \models \post \implies \postexp'' \leq
   \postexp }
 { \Peprhl{s_1}{s_2}{\pre';\preexp''}{\post'; \postexp''}{f'} }

\inferrule[Struct]
          { \Peprhl{s_1}{s_2}{\pre;\preexp}{\post; \postexp}{f} \\
            \eqsem{\pre_1}{s_1}{s_1'} \\ \eqsem{\pre_2}{s_2}{s_2'} \\
    \forall (m_1,m_2) \models \pre .\, \pre_1(m_1) \land \pre_2(m_2) 
    }
  { \Peprhl{s_1'}{s_2'}{\pre; \preexp}{\post; \postexp}{f} }

  \inferrule[Frame-D]
  { \Peprhl{s_1}{s_2}{\pre ; \preexp}{\post ; \postexp}{f}
    \\\\
    f\in\mathcal{L}^\geq \\ \postexp' \# \MV(s_1),\MV(s_2) \\ \models \pre \implies \postexp' \leq f(\postexp') }
  { \Peprhl{s_1}{s_2}{\pre ; \preexp + \postexp'}{\post ; \postexp + \postexp' }{f} }
\end{mathpar}

\caption{\label{fig:rules} Selected proof rules}
\end{figure*}

\begin{figure*}

\begin{mathpar}
 \inferrule[Seq]
  { \Peprhl{s_1}{s_2}{\pre;\preexp}{\Xi; \postexp}{f} \\
    \Peprhl{s'_1}{s'_2}{\Xi; \postexp}{\post; \postexpz}{f'}}
  { \Peprhl{s_1;s'_1}{s_2;s'_2}{\pre; \preexp}{\post; \postexpz}{f' \circ f} }

  \inferrule[Case]
 { \Peprhl{s_1}{s_2}{\pre\land e\lside; \preexp}{\post; \postexp}{f} \\
   \Peprhl{s_1}{s_2}{\pre\land \neg e\lside; \preexp}{\post; \postexp}{f} }
 { \Peprhl{s_1}{s_2}{\pre; \preexp}{\post; \postexp}{f} }

 \inferrule[Cond]{\models \pre \implies {e_1}\lside={e_2}\rside \\
   \Peprhl{s_1}{s_2}{\pre\land {e_1}\lside; \preexp}{\post; \postexp}{f} \\
   \Peprhl{s'_1}{s'_2}{\pre\land \neg {e_1}\lside; \preexp}{\post; \postexp}{f} }
 { \Peprhl{\ifte{e_1}{s_1}{s'_1}}{\ifte{e_2}{s_2}{s'_2}}{\pre; \preexp}{\post; \postexp}{f}}

\inferrule[Assg-L]
  {~~}
  { \Peprhl
       {x_1 \iass e_1}{\iskip}
       {\post[\subst{{x_1}\lside}{e_1}];
         \postexp[\subst{{x_1}\lside}{e_1}]
       }{\post;\postexp}
       {\id} } 
\end{mathpar}

\caption{\label{fig:derived:rules} Selected derived rules}
\end{figure*}

\subsection{Derived Rules and Weakest Pre-condition}

\Cref{fig:derived:rules} presents some useful derived rules of our logic,
including rules for standard sequential composition and conditionals, and
one-sided rules.

The [\textsc{Seq}] rule for sequential composition simply composes the
two product programs in sequence. This rule reflects the compositional
property of couplings. It can be derived from the rule
[\textsc{SeqCase}] by taking $e_1$ to be true.

The [\textsc{Cond}] rule for conditional statements requires that the
two guards of the left and right programs are equivalent under the
pre-condition, and that corresponding branches can be related.

The [\textsc{Case}] rule allows proving a judgment by case analysis;
specifically, the validity of a judgment can be established from the
validity of two judgments, one where the boolean-valued pre-condition
is strengthened with $e$ and the other where the pre-condition is
strengthened with $\neg e$.

The [\textsc{Assg-L}] is the left one-sided rule for assignment; it
can be derived from the assignment rule using structural equivalence.
The full version of the logic also has similar one-sided rules for other constructs, notably
random assignments and conditionals. Using one sided-rules, one can also
define a relational weakest pre-condition calculus $\wpc$, taking as
inputs two loop-free and deterministic programs, a post-condition, and
a distance, and returning a pre-condition and a distance.  
\begin{proposition}
Let $(\pre'',\preexp'')=\wpc(s_1,s_2,\post,\postexp)$. Assume
$\pre\implies\pre''$ and $\preexp(m_1,m_2)\leq\preexp''(m_1,m_2)$ for
every $(m_1,m_2)\models\pre$. Then
$\Peprhl{s_1}{s_2}{\pre;\preexp}{\post; \postexp}{\id}$.
\end{proposition}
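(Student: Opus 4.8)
The plan is to reduce the proposition to a \emph{correctness lemma} for the calculus $\wpc$, proved by structural induction, followed by one use of the consequence rule. First I would establish the following canonical derivability: for every pair of loop-free deterministic programs $s_1,s_2$, every post-condition $\post$, and every post-distance $\postexp$, writing $(\pre'',\preexp'') = \wpc(s_1,s_2,\post,\postexp)$, the judgment
\[
  \Peprhl{s_1}{s_2}{\pre'';\preexp''}{\post;\postexp}{\id}
\]
is derivable in \SYSTEM. Granting this lemma, the proposition follows by a single application of [\textsc{Conseq}] to the canonical judgment: its side conditions are discharged by the two hypotheses of the proposition---$\pre \implies \pre''$ yields the pre-condition strengthening, and the pointwise inequality relating $\preexp$ and $\preexp''$ on $\pre$ yields the distance side condition---while the post-condition $\post$, the post-distance $\postexp$, and the identity transformer are carried over unchanged.

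The bulk of the work is therefore the induction establishing the canonical lemma, in which $\wpc$ is defined by recursion so that its output is exactly the pre-condition and pre-distance demanded by the matching proof rule. For the base cases, $\iskip$ on both sides returns $(\post,\postexp)$ unchanged, while a pair of assignments is handled by [\textsc{Assg}], whose conclusion already exhibits the substituted pre-condition and pre-distance; the one-sided rule [\textsc{Assg-L}] and its mirror let me peel a single assignment off one side when the two programs are not aligned. For sequential composition I would recurse on the tails, pass the resulting intermediate assertion and distance back as the post-condition of the heads, and glue the two canonical judgments with [\textsc{Seq}] (using $\id \circ \id = \id$). Conditionals are discharged either by [\textsc{Cond}], once $\wpc$ has recorded equality of the two guards into the returned pre-condition, or by the case-analysis rule [\textsc{Case}], accumulating the disjunction of the branch pre-conditions and the maximum of the branch pre-distances. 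In each case the matching rule fires directly on the induction hypotheses for the sub-programs.

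The hard part will be the conditional case and, more generally, keeping the two-sided discipline honest: because $\wpc$ consumes two independent programs whose control flow need not be syntactically aligned, I must ensure that the accumulated pre-condition genuinely forces the two executions to agree on every branch (so that [\textsc{Cond}] applies, or the split of [\textsc{Case}] is exhaustive) and that the returned pre-distance aggregates the per-branch distances rather than a single one. Determinism and loop-freeness are what make this manageable: each program denotes a point mass, so every coupling produced along the way is itself a point mass on the pair of outputs, the identity transformer suffices throughout with no probabilistic averaging, and the recursion terminates. Once this guard-alignment bookkeeping is in place, the remaining obligations are routine substitution checks, and the concluding [\textsc{Conseq}] step finishes the proof.
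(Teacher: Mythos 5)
Your overall route is exactly the one the paper intends: the proposition is stated without an explicit proof, immediately after the remark that $\wpc$ is defined using the one-sided rules, and the expected argument is precisely your two steps---a structural induction showing that the canonical judgment $\Peprhl{s_1}{s_2}{\pre'';\preexp''}{\post;\postexp}{\id}$ is derivable (assignments via \rname{Assg} and \rname{Assg-L} and its mirror, sequencing via \rname{Seq} with $\id \circ \id = \id$, conditionals via \rname{Cond}/\rname{Case} or the one-sided conditional rules the full logic provides), followed by a single application of \rname{Conseq}. Determinism and loop-freeness play exactly the role you assign them: every denotation is a point mass, so the canonical coupling is a point mass on the pair of outputs and the expected post-distance is computed exactly by substitution, with no probabilistic averaging. (A small remark: in the \rname{Case} branch of your induction, taking the maximum of the branch pre-distances yields a sound but not weakest pre-distance; for mere derivability of the proposition this is harmless, but a $\wpc$ worthy of the name would return the guard-indexed case combination.)

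One concrete caution about the step you wave through as ``the pointwise inequality relating $\preexp$ and $\preexp''$ on $\pre$ yields the distance side condition.'' The side condition of \rname{Conseq}, instantiated here, is $\models \pre \implies \id(\preexp'') \leq \id(\preexp)$: the pre-distance of the \emph{premise} (the $\wpc$ output) must be dominated by the pre-distance of the \emph{conclusion}. That is the reverse of the inequality as printed in the proposition ($\preexp \leq \preexp''$ on $\pre$), and read literally the printed statement is false: take $s_1 = s_2 = \iskip$, so that $\wpc$ returns $(\post, \postexp)$, and take $\preexp \equiv 0$ with $\postexp$ not identically zero; the hypotheses hold trivially, yet the concluded judgment would force $\postexp(m_1,m_2) \leq 0$, since for deterministic programs the coupling is a point mass and the expected post-distance equals $\preexp''(m_1,m_2)$ exactly. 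So the inequality in the statement must be read as $\preexp''(m_1,m_2) \leq \preexp(m_1,m_2)$ for $(m_1,m_2) \models \pre$, and your \rname{Conseq} step needs that orientation. Since your write-up never commits to a direction, make it explicit; with $\preexp'' \leq \preexp$ on $\pre$, everything you describe goes through.
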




\begin{figure}
\begin{mathpar}
\inferrule{~}{\eqsem{\pre}{s}{s}}

\inferrule{\eqsem{\pre}{s_1}{s_2}}{\eqsem{\pre}{s_2}{s_1}}

\inferrule{~}{\eqsem{\pre}{{x}\irnd {\dunit{x}}}{\iskip}}

\inferrule{\models \pre \implies x = e}{\eqsem{\pre}{{x}\iass {e}}{\iskip}}

\inferrule{~}{\eqsem{\pre}{s;\iskip}{s}}

\inferrule{~}{\eqsem{\pre}{\iskip;s}{s}}

\inferrule{\eqsem{\pre}{s_1}{s'_1}}
          {\eqsem{\pre}{s_1;s_2}{s'_1;s_2}}

\inferrule{\eqsem{\top}{s_2}{s'_2}}
          {\eqsem{\pre}{s_1;s_2}{s_1;s'_2}}

\inferrule{\models \pre \implies e}{\eqsem{\pre}{\ifte{e}{s}{s'}}{s}}

\inferrule{\models \pre \implies \neg e}{\eqsem{\pre}{\ifte{e}{s}{s'}}{s'}}

\inferrule
  {\eqsem{\pre\land e}{s_1}{s_2} \\
   \eqsem{\pre\land \neg e}{s'_1}{s'_2}}
  {\eqsem{\pre}{\ifte{e}{s_1}{s'_1}}{\ifte{e}{s_2}{s'_2}}}



\end{mathpar}

\caption{\label{fig:equiv} Equivalence rules}
\end{figure}

\section{Uniform Stability of Stochastic Gradient Method, Revisited} \label{sec:sgm}

Now that we have described the logic, let's return to the Stochastic
Gradient Method we first saw in \cref{sec:motiv}. Recall that the loss
function has type $\ell : Z \to \RR^d \to [0, 1]$. We consider two
versions: one where the loss function $\ell(z,-)$ is convex, and one
where $\ell(z,-)$ may be non-convex. The algorithm is the same in both
cases, but the stability properties require different proofs. For
convenience, we reproduce the code $\sgm$:
\codesgm
We will assume that $\ell(z,-)$ is $L$-\emph{Lipschitz} for all $z$: for all $w,
w' \in \RR^d$, we can bound $|\ell(z, w) - \ell(z, w')| \leq L \| w - w' \|$
where $\|\cdot\|$ is the usual Euclidean norm on $\RR^d$:
\[
  \| x \| \triangleq \left( \sum_{i = 1}^d x_i^2 \right)^{1/2}
\]
Furthermore, we will assume that the loss function is $\beta$-\emph{smooth}: the
gradient $\nabla \ell(z, -) : \RR^d \to \RR^d$ must be $\beta$-Lipschitz.

\subsection{SGM with Convex Loss} \label{ex:convex-sgm}

Suppose that the loss $\ell(z,-)$ is a \emph{convex} function for every $z$,
i.e., we have: $\langle (\nabla \ell(z, -))(w) - (\nabla \ell(z, -))(w'), w - w'
\rangle \geq 0$ where $\langle x, y \rangle$ is the inner product between two
vectors $x, y \in \RR^d$:
\[
  \langle x, y \rangle \triangleq \sum_{i = 1}^d x_i \cdot y_i .  
\]
When the step sizes satisfy $0 \leq \alpha_t \leq 2/\beta$, we can prove uniform
stability of SGM in this case by following the strategy outlined in
\cref{sec:motiv}. We refer back to the judgments there, briefly describing how
to apply the rules (for lack of space, we defer some details to the
\LONGVERSION).  Let $s_a$ be the sampling command, and $s_b$ be the rest of the
loop body. We will prove the following judgment:
\[
  \Peprhl{\sgm}{\sgm}
  { \Phi ; \| w\lside - w\rside \| }
  { \Phi ; |\ell(w\lside, z) - \ell(w\rside, z)| }
  {+\gamma} ,
\]
where $\Phi \triangleq \Adj(S\lside, S\rside) \land (w_0)\lside = (w_0)\rside
\land t\lside = t\rside$ and
\[
  \gamma \triangleq \frac{2L^2}{n} \sum_{t = 0}^{T - 1} \alpha_t .
\]
By soundness (\cref{thm:soundness}), this will imply that SGM is
$\gamma$-uniformly stable.

As before, we will first establish a simpler judgment:
\[
  \Peprhl{\sgm}{\sgm}
  { \Phi ; \| w\lside - w\rside \| }
  { \Phi ; \| w\lside - w\rside \| }
  {+\gamma/L} .
\]
As we proceed through the proof, we will indicate the corresponding step from
the outline in \cref{sec:motiv}.
Let $j$ be the index such that the $S[j]\lside \neq S[j]\rside$; this is the
index of the differing example. First, we couple the samplings in $s_a$ with the
identity coupling, using rule \rname{Rand} with $h = \id$
(\cref{eq:sgm-sample}). Next, we perform a case analysis on whether we sample
the differing vertex or not. We can define guards $e_= \triangleq i = j$ and
$e_{\neq} \triangleq i \neq j$, and then apply the probabilistic case rule
\rname{SeqCase}. In the case $e_=$, we use the Lipschitz property of
$\ell(z,-)$ and some properties of the norm $\| \cdot \|$ to prove
\[
  \Peprhl{s_b}{s_b}
  { \Phi \land e_= ; \| w\lside - w\rside \| }
  { \Phi ; \| w\lside - w\rside \| }
  {+ 2\alpha_t L} ;
\]
this corresponds to \cref{eq:sgm-neq}. In the case $e_{\neq}$, we know that the
examples are the same in both runs. So, we use the Lipschitz property,
smoothness, and convexity of $\ell(z, -)$ to prove:
\[
  \Peprhl{s_b}{s_b}
  { \Phi \land e_{\neq} ; \| w\lside - w\rside \| }
  { \Phi ; \| w\lside - w\rside \| }
  {\id} ;
\]
this corresponds to \cref{eq:sgm-eq}. Applying \rname{SeqCase}, noting that the
probability of $e_{\neq}$ is $1 - 1/n$ and the probability of $e_=$ is $1/n$, we
can bound the expected distance for the loop body (\cref{eq:sgm-pcase}).
Applying the rule \rname{While}, we can bound the distance for the whole
loop (\cref{eq:sgm-loop}). Finally, we use the Lipschitz property of
$\ell(z, -)$ and the rule \rname{Conseq} to prove the desired judgment.

\subsection{SGM with Non-Convex Loss} \label{ex:nonconvex-sgm}

When the loss function is non-convex, the previous proof no longer
goes through.  However, we can still verify the uniform stability
bound by \citet{HardtRS16}.
Technically, they prove uniform stability by dividing the
proof into two pieces. First they show that with sufficiently high
probability, the algorithm does not select the differing example
before a carefully chosen time $t_0$. In particular, with high probability
the parameters $w\lside$ and $w\rside$ are equal up to iteration
$t_0$. Then, they prove a uniform stability bound for SGM started at
iteration $t_0$, assuming $w\lside = w\rside$; if the
step size $\alpha_t$ is taken to be rapidly decreasing, SGM will be already be
contracting by iteration $t_0$.

This proof can also be carried out in \SYSTEM, with some extensions. First, we
split the SGM program into two loops: iterations before $t_0$, and iterations
after $t_0$. The probability of $w\lside \neq w\rside$ is is precisely
the expected value of the indicator function $\ind{w\lside \neq w\rside}$, which
is $1$ if the parameters are not equal and $0$ otherwise. Thus, we can bound the
probability for the first loop by bounding this expected value in \SYSTEM.
For the second loop, we can proceed much like we did for standard SGM: assume
that the parameters are initially equal, and then bound the expected
distance on parameters.

The most difficult part is gluing these two pieces together.  Roughly, we want
to perform a case analysis on $w\lside = w\rside$ but this event depends on both
sides---the existing probabilistic case rule \rname{SeqCase} does not suffice.
However, we can give an advanced probabilistic case rule \rname{SeqCase-A} that
does the trick. We defer the details to the \LONGVERSION.

\section{Population Dynamics} \label{sec:population} 
Our second example comes from the field of evolutionary biology. Consider an
infinite population separated into $m\in\NN$ classes of organisms. The population
at time $t$ is described by a probability vector $\vec{x}_t = (x_1, \ldots,
x_m)$, where $x_i$ represents the fraction of the population belonging to
class $i$. In the Replication-Selection-Mutate (RSM) model, the evolution is
described by a function
$f$---called the \emph{step function}---which updates the probability vectors.
More precisely, the population at time $t+1$ is given as the average of $N \in
\NN$ samples according to the distribution $f(\vec{x}_t)$. A central question is
whether this process mixes rapidly: starting from two possibly different
population distributions, how fast do the populations converge?

We will verify a probabilistic property that is the main result needed to show
rapid mixing: there is a coupling of the population distributions such that the
expected distance between the two populations decreases exponentially quickly.
Concretely, we take the norm $\| \vec{x} \|_1 \triangleq \sum_{i = 1}^m |x_i|$.
Let the simplex $\Delta_m$ be the set of non-negative vectors with norm $1$:
\[
  \Delta_m \triangleq \{  \vec{x} \in \RR^m \mid x_i \geq 0,  \| \vec{x} \|_1 = 1 \}
\]
Elements of $\Delta_m$ can be viewed as probability distributions over
the classes $\{ 1, \dots, m \}$; this is how we will encode the
distribution of species.

In the RSM model, the population evolution is governed by two vectors: the true
class frequencies, and the current empirical frequencies. In each timesteps, we
apply a function $\mathit{step} : \Delta_m \to \Delta_m$ to the empirical
frequencies to get the updated true frequencies; we will assume that the step
function is contractive, i.e., it is $L$-Lipschitz
\[
  \| \mathit{step}(\vec{x}) - \mathit{step}(\vec{y}) \|_1
  \leq L \cdot \| \vec{x} - \vec{y} \|_1
\]
for $L < 1$. Then, we draw $N$ samples from the distribution given by the true
frequencies and update the empirical frequencies. We can model the evolutionary
process as a simple probabilistic program $\PopDyn(T)$ which repeats $T$
iterations of the evolutionary step:
\[
  \begin{array}{l}
    \vec{x} \iass x_0; t \iass 0; \\
    \iwhile{t < T}{} \\
    \quad \vec{p} \iass \mathit{step}(\vec{x});\\
    \quad \vec{x} \iass \vec{0}; j \iass 0; \\
    \quad \iwhile{j < N}{} \\
    \quad\quad \vec{z} \irnd \multD(\vec{p}); \\
    \quad\quad \vec{x} \iass \vec{x} + (1/N) \cdot \vec{z} ; \\
    \quad\quad j \iass j + 1 ; \\
    \quad t \iass t + 1
  \end{array}
\]
The vector $\vec{x}$ stores the current empirical frequencies (the distribution
of each class in our current population), while the vector $\vec{p}$ represents
the true frequencies for the current step.

\begin{figure*}
  \[
    \inferrule[Mult-Max]
    { }
    { \Peprhl
      {\vec{x}\lside \irnd \multD(\vec{p}\lside)}
      {\vec{x}\rside \irnd \multD(\vec{p}\rside)}
      { \top; \| \vec{p}\lside - \vec{p}\rside \|_1 }
      { \vec{x}\lside, \vec{x}\rside \in \{ 0, 1 \}^m ; \| \vec{x}\lside - \vec{x}\rside \|_1 }
      {\id} }
  \]
  \caption{\label{fig:mult-opt} Maximal coupling rule for multinomial}
\end{figure*}

 
In the sampling instruction, $\multD(\vec{p})$ represents the
\emph{multinomial distribution} with parameters $\vec{p}$; this distribution can
be thought of as generalizing a Bernoulli (biased coin toss) distribution to $m$
outcomes, where each outcome has some probability $p_i$ and $\sum_i p_i =
1$. We represent samples from the multinomial distribution as binary vectors in
$\Delta_m$: with probability $p_i$, the sampled vector has the $i$th entry set
to $1$ and all other entries $0$.

To analyze the sampling instruction, we introduce the rule \rname{Mult-Max} in
\cref{fig:mult-opt}. This rule encodes the \emph{maximal coupling}---a standard
coupling construction that minimizes the probability of returning different
samples---of two multinomial distributions; in the \LONGVERSION, we show that this
rule is sound. The post-condition $\vec{x}\lside, \vec{x}\rside \in \{ 0, 1
\}^m$ states that the samples are always binary vectors of length $m$, while the
distances indicate that the expected distance between the sampled vectors $\|
\vec{x}\lside - \vec{x}\rside \|_1$ is at most the distance between the
parameters $\| \vec{p}\lside - \vec{p}\rside \|_1$.

Given two possibly different initial frequencies $(x_0)\lside, (x_0)\rside \in
\Delta_m$, we want to show that the resulting distributions on empirical
frequencies from $\PopDyn(T)$ converge as $T$ increases. We will construct an
expectation coupling where the expected distance between the empirical
distributions $x\lside$ and $x\rside$ decays exponentially in the number of
steps $T$; by \cref{prop:conseq:tv}, this implies that the total-variation
distance between
the distributions of $x\lside$ and $x\rside$ decreases exponentially quickly.
Formally, we prove the following judgement:
\begin{equation} \label{eq:popdyn-whole}
  \Peprhl{\PopDyn(T)}{\PopDyn(T)}
  {\pre; \| (\vec{x_0})\lside - (\vec{x_0})\rside \|_1}
  {\pre; \| \vec{x}\lside - \vec{x}\rside \|_1}
  {\scale{L^T}}
\end{equation}
where
\[
  \pre \triangleq
  \| \vec{x} \lside - \vec{x} \rside \|_1 < 1/N
  \implies \vec{x} \lside = \vec{x} \rside .
\]
$\pre$ is an invariant throughout because every entry of
$\vec{x} \lside$ and $\vec{x} \rside$ is an integer multiple of $1/N$.

To prove the inner judgment, let $s_{\mathit{out}}$ and $s_{\mathit{in}}$ be the
outer and inner loops, and let $w_{\mathit{out}}$ and $w_{\mathit{in}}$ be their
loop bodies. We proceed in two steps. In the inner loop, we want
\begin{equation} \label{eq:popdyn:s-in}
  \Peprhl
  { s_{\mathit{in}} }
  { s_{\mathit{in}} }
  { \pre ; \| \vec{p}\lside - \vec{p}\rside \|_1 }
  { \pre ; \| \vec{x}\lside - \vec{x}\rside \|_1 }
  {\id}
\end{equation}
hiding invariants asserting $j$ and $t$ are equal in both runs. By the loop rule
\rname{While}, it suffices to prove
\begin{equation} \label{eq:popdyn:w-in}
  \Peprhl
  { w_{\mathit{in}} }
  { w_{\mathit{in}} }
  { e\lside = k \land \pre ; \preexp_k }
  { e\lside = k - 1 \land \pre ; \preexp_{k - 1} }
  {\id}
\end{equation}
for each $0 < k \leq N$, where $\preexp_k \triangleq \| x\lside - x\rside \|_1 +
(k/N) \cdot \| p\lside - p\rside \|_1$ and the decreasing variant is $e
\triangleq N - j$.  Let the sampling command be $w'_{\mathit{in}}$, and the
remainder of the loop body be $w''_{\mathit{in}}$. By applying the multinomial
rule \rname{Mult-Max} and using the rule of consequence to scale the distances
by $1/N$, we have
\[
  \Peprhl
  {w'_{\mathit{in}}}
  {w'_{\mathit{in}}}
  { \pre; (1/N) \cdot \| \vec{p}\lside - \vec{p}\rside \|_1 }
  { \pre ; (1/N) \cdot \| \vec{z}\lside - \vec{z}\rside \|_1 }
  {\id} .
\]
Since the sampling command does not modify the vectors $\vec{x}, \vec{p}$, we
can add the distance $\preexp_{k-1}$ to the pre-and the post-conditions
by the frame rule \rname{Frame-D} (noting that the distance transformer $\id$ is
non-contractive). Since $\preexp_k = \preexp_{k - 1} + (1/N) \cdot \|
\vec{p}\lside - \vec{p}\rside \|_1$ by definition, we have
\begin{equation} \label{eq:popdyn:wprime-in}
  \Peprhl
  {w'_{\mathit{in}}}
  {w'_{\mathit{in}}}
  { \pre; \preexp_k }
  { \pre ; \preexp_{k-1} + (1/N) \cdot \| \vec{z}\lside - \vec{z}\rside \|_1 }
  {\id} .
\end{equation}
For the deterministic commands $w''_{\mathit{in}}$, the assignment rule \rname{Assg}
gives
\begin{equation} \label{eq:popdyn:w-det-in}
  \Peprhl
  {w''_{\mathit{in}}}
  {w''_{\mathit{in}}}
  { \pre; \preexp_{k-1}[\subst{\vec{x}}{(\vec{x} + (1/N) \cdot \vec{z})}]}
  { \pre ; \preexp_{k-1} }
  {\id} ,
\end{equation}
where the substitution is made on the respective sides.  Applying the rule of
consequence with the triangle inequality in the pre-condition, we can combine
this judgment (\cref{eq:popdyn:w-det-in}) with the judgment for
$w'_{\mathit{in}}$ (\cref{eq:popdyn:wprime-in}) to verify the inner loop body
(\cref{eq:popdyn:w-in}). The rule \rname{While} gives the desired judgment for
the inner loop $s_{\mathit{in}}$ (\cref{eq:popdyn:s-in}).

Turning to the outer loop, we first prove a judgment for the loop bodies:
\[
  \Peprhl
  {w_{\mathit{out}}}
  {w_{\mathit{out}}}
  { \pre ; \| \vec{x}\lside - \vec{x}\rside \|_1 }
  { \pre ; \| \vec{x}\lside - \vec{x}\rside \|_1  }
  {\scale{L}} .
\]
By the sequence and assignment rules and the judgment for the inner loop
(\cref{eq:popdyn:s-in}), we have
\[
  \Peprhl
  {w_{\mathit{out}}}
  {w_{\mathit{out}}}
  { \pre ; \| \mathit{step}(\vec{x}\lside) - \mathit{step}(\vec{x}\rside) \|_1 }
  { \pre ; \| \vec{x}\lside - \vec{x}\rside \|_1  }
  {\id} .
\]
Applying the fact that $\mathit{step}$ is $L$-Lipschitz, the rule of consequence
gives
\[
  \Peprhl
  {w_{\mathit{out}}}
  {w_{\mathit{out}}}
  { \pre ; \| \vec{x}\lside - \vec{x}\rside \|_1 }
  { \pre ; \| \vec{x}\lside - \vec{x}\rside \|_1 }
  {\scale{L}}
\]
for the outer loop body. We can then apply the rule \rname{While} to conclude
the desired judgment for the whole program (\cref{eq:popdyn-whole}).

This judgment shows that the distributions of $\vec{x}$ in the two runs converge
exponentially quickly. More precisely, let $\nu\lside, \nu\rside$ be the
distributions of $\vec{x}$ after two executions of $\PopDyn(T)$ from
initial frequencies $(x_0)\lside, (x_0)\rside \in \Delta_m$.
\cref{eq:popdyn-whole} implies that there is an expectation coupling
\[
  \bcouplingsupp {\nu} {\| \cdot \|_1} {\delta} {\nu\lside} {\nu\rside} {\pre} ,
\]
where $\delta = L^T \cdot \| (x_0)\lside - (x_0)\rside \|_1$. All pairs of
vectors $(v_1, v_2)$ in the support of $\nu$ with $v_1 \neq v_2$ are at distance
at least $1/N$ by the support condition $\Phi$, so \cref{prop:conseq:tv} implies
\[
  \text{TV}(\nu\lside, \nu\rside) \leq N \cdot L^T .
\]
Since $L < 1$, the distributions converge exponentially fast as $T$ increases.

\section{Path Coupling and Graph Coloring} \label{sec:pathcoupling}
Path coupling is a powerful method for proving rapid mixing of Markov
chains~\citep{bubley1997path}. We review the central claim of path coupling from
the perspective of expected sensitivity. Then, we define an extension of our
program logic that incorporates the central idea of path coupling. Finally, we
apply of our logic to verify a classical example using the path coupling method.

\subsection{Path Coupling and Local Expected Sensitivity}
So far, we have assumed very little structure on our distances; essentially they
may be arbitrary non-negative functions from $A \times A$ to the real numbers.
Commonly used distances tend to have more structure. For integer-valued
distances, we can define a weakening of sensitivity that only
considers pairs of inputs at distance $1$, rather than arbitrary
pairs of inputs. We call the resulting property \emph{local} expected sensitivity.
\begin{definition}
Let $\preexp_A$ be an integer-valued distance over $A$ and $\preexp_B$
be a distance over $B$. Moreover, let $f\in\mathcal{L}$. We say that a
probabilistic function $g:A\to\distr(B)$ is \emph{locally expected}
$f$-\emph{sensitive} (with respect to $\preexp_A$ and $\preexp_B$) if
for every $x_1,x_2\in A$ such that $\preexp_A(x_1,x_2)=1$, we have
\[
  \E{(y_1, y_2) \sim \mu}{\preexp_{B}(y_1, y_2)} \leq f(\preexp_A(x_1, x_2)) = f(1)
\]
for some coupling $\coupling {\mu} {g(x_1)} {g(x_2)}$. 
\end{definition}
In general, local expected $f$-sensitivity is weaker than expected
$f$-sensitivity. However, both notions coincide under some mild
conditions on the distances. We introduce a pair of conditions:
\begin{align}
  &\begin{cases}
    \forall x,y.\, \preexp(x, y)= 0 \implies x = y \\
    \forall x,y.\, \preexp(x, y)=n+1 \implies \exists z.\, \preexp(x,z)=1\land \preexp(z,y)=n
  \end{cases}
  \tag{P} \label{cond:path} \\
  &\begin{cases}
    \forall x.\, \preexp(x, x)=0  \\
    \forall x, y, z .\, \preexp(x, z) \leq \preexp(x, y) + \preexp(y, z)
  \end{cases}
  \tag{H} \label{cond:hemi}
\end{align}
In condition~\eqref{cond:path}, $\preexp$ is an
integer-valued distance. The first condition is standard for metrics. The second
condition is more interesting: if two points are at distance $2$ or greater, we
can find a strictly intermediate point. We will soon see an important class of
distances---\emph{path metrics}---that satisfy these conditions
(\cref{def:pathmetric}).  Condition~\eqref{cond:hemi} is more standard: the
distance $\preexp$ should assign distance $0$ to two equal points, and satisfy
the triangle inequality.  Every metric satisfies these properties; in general,
such a distance is called a \emph{hemimetric}.

When the pre-distance satisfies~\eqref{cond:path} and the post-distance
satisfies~\eqref{cond:hemi}, local expected sensitivity is equivalent to
expected sensitivity for linear distance transformers.

\begin{proposition}\label{prop:local:sens}
  Let $\preexp_A$ be an integer-valued distance over $A$ satisfying
  \eqref{cond:path}, and let $\preexp_B$ be a distance over $B$ satisfying
  \eqref{cond:hemi}.  Let $f\in\mathcal{L}$ and $g:A\to\distr(B)$.  Then $g$ is
  locally expected $f$-sensitive iff it is expected $f$-sensitive (both with
  respect to $\preexp_A$ and $\preexp_B$).
\end{proposition}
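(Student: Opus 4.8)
The plan is to prove the two implications separately. The direction from expected $f$-sensitivity to local expected $f$-sensitivity is immediate: expected $f$-sensitivity supplies a suitable coupling at \emph{every} pair of inputs, so in particular at pairs with $\preexp_A(x_1, x_2) = 1$, which is exactly what local expected $f$-sensitivity demands. The interesting direction is local $\Rightarrow$ expected, which is precisely the path coupling argument: we must assemble couplings between adjacent inputs into a single coupling between arbitrarily distant inputs.

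First I would fix $x_1, x_2 \in A$ and set $n \eqdef \preexp_A(x_1, x_2) \in \NN$. Using the second clause of condition~\eqref{cond:path}, I would build by induction on $n$ a \emph{path} $x_1 = z_0, z_1, \dots, z_n = x_2$ with $\preexp_A(z_i, z_{i+1}) = 1$ for each $i$; the base case $n = 0$ uses the first clause of~\eqref{cond:path}, which forces $x_1 = x_2$ (and then the diagonal coupling has expected distance $0 = f(0)$ by the first clause of~\eqref{cond:hemi}). For each adjacent pair I invoke local expected $f$-sensitivity to obtain a coupling $\mu_i$ of $g(z_i)$ and $g(z_{i+1})$ with $\E{(y, y') \sim \mu_i}{\preexp_B(y, y')} \leq f(1)$.

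The core step is to glue the couplings $\mu_0, \dots, \mu_{n-1}$ into a single distribution $\lambda$ over $(n{+}1)$-tuples $(y_0, \dots, y_n)$ whose $(i, i{+}1)$-marginal is exactly $\mu_i$. Since everything is discrete, I would do this by the standard composition of couplings: given the partial gluing $\lambda$ whose last coordinate $y_i$ has marginal $g(z_i)$, and the coupling $\mu_i$ whose first marginal is also $g(z_i)$, I extend by the conditional distribution, setting the weight of $(y_0,\dots,y_{i+1})$ to $\lambda(y_0,\dots,y_i) \cdot \mu_i(y_i, y_{i+1}) / g(z_i)(y_i)$ when $g(z_i)(y_i) \neq 0$ (and $0$ otherwise). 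A routine check using $\pi_2(\mu_{i-1}) = g(z_i) = \pi_1(\mu_i)$ shows that the pairwise marginals are preserved. Projecting $\lambda$ onto its first and last coordinates yields $\mu$ with $\pi_1(\mu) = g(x_1)$ and $\pi_2(\mu) = g(x_2)$, so $\coupling{\mu}{g(x_1)}{g(x_2)}$.

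It remains to bound the expected distance. The triangle inequality from condition~\eqref{cond:hemi}, applied along the path, gives $\preexp_B(y_0, y_n) \leq \sum_{i=0}^{n-1} \preexp_B(y_i, y_{i+1})$ pointwise; taking expectations over $\lambda$ (using linearity and that the $(i,i{+}1)$-marginal of $\lambda$ is $\mu_i$) yields
\[
  \E{\mu}{\preexp_B} = \E{\lambda}{\preexp_B(y_0, y_n)} \leq \sum_{i=0}^{n-1} \E{\mu_i}{\preexp_B} \leq n \cdot f(1) .
\]
Here is where linearity of $f$ is essential: writing $f(z) = \alpha \cdot z$, we get $n \cdot f(1) = \alpha n = f(n) = f(\preexp_A(x_1, x_2))$, exactly the expected $f$-sensitivity bound. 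An affine $f$ with nonzero offset would over-count that offset $n$ times, which is why the statement is restricted to $f \in \mathcal{L}$. I expect the main obstacle to be the gluing construction: verifying that the iterated conditional composition is well-defined for \emph{sub}-distributions and that its pairwise marginals really are the $\mu_i$, so that the triangle-inequality bound telescopes correctly.
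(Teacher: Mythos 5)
Your proof is correct and takes essentially the same route as the paper's: both directions are handled identically, with the hard direction decomposing the pair into unit steps via condition \eqref{cond:path}, gluing the step couplings through the shared marginal $g(z_i)$ using exactly the same conditional-product formula, and bounding the expected distance via the triangle inequality of \eqref{cond:hemi} together with linearity of $f$ (including the same observation about why an affine offset would break the bound). The only cosmetic difference is that the paper argues by induction on $\preexp_A(x_1,x_2)$, composing two couplings at a time and marginalizing immediately, whereas you glue all $n$ step couplings at once into a distribution over $(n{+}1)$-tuples and telescope the triangle inequality---the same construction, unrolled.
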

\begin{proof}
  The reverse direction is immediate.
  The forward implication is proved by induction on
  $\preexp_A(x_1,x_2)$. For the base case, where $\preexp_A(x_1,x_2)=0$, we have
  $x_1=x_2$ and hence $g(x_1)=g(x_2)$. Letting $\mu$ be the identity
  coupling for $g(x_1)$ and $g(x_2)$, we have $\E \mu {\preexp_B} =
  \sum_{y\in B} \preexp_B(y,y)=0$ since $\preexp_B(y,y)=0$ for every $y$,
  establishing the base case. For the
  inductive step, assume that $\preexp_A(x_1,x_2)=n+1$. Then there exists
  $x'$ such that $\preexp_A(x_1,x')=1$ and $\preexp_A(x',x_2)=n$. By induction,
  there exist two expectation couplings $\mu_1$ and $\mu_n$ satisfying the
  distance conditions
  \[
    \E {\mu_1} {\preexp_B}\leq f(\preexp_B(x_1,x'))
    \quad\text{and}\quad
    \E {\mu_n} {\preexp_B}\leq f(\preexp_B(x',x_2)) .
  \]
  Define $\mu$ as
  \[
    \mu(x,y) \triangleq \sum_{z \in A} \frac{\mu_1(x,z) \cdot \mu_n(z,y)}{g(x')(z)} ,
  \]
  where we treat terms with zero in the denominator as $0$; note that since
  $\mu_1$
  and $\mu_n$ satisfy the marginal conditions, we have $\pi_2(\mu_1) =
  \pi_1(\mu_n) = g(x')$, so $g(x')(z) = 0$ implies that $\mu_1(x,z) = \mu_n(z,
  y) = 0$, so the numerator is also zero in these cases.
  
  Now, the marginal conditions $\pi_1(\mu) = g(x_1)$ and $\pi_2(\mu) = g(x_2)$
  follow from the marginal conditions for $\mu_1$ and $\mu_n$. The distance
  condition $\E {\mu} {\preexp_B} \leq f(\preexp_A(x_1,x_2))$ is a bit more involved:
  \begin{align}
    \E {\mu} {\preexp_B}
    &= \sum_{x,y} \mu(x,y) ~\preexp_B(x,y) \notag \\
    &= \sum_{x,y} \sum_{z} \left(\frac{\mu_1(x,z) ~\mu_n(z,y)}{ g(x')(z) }\right)~\preexp_B(x,y)
    \notag \\
    &\leq \sum_{x,y,z} \left(\frac{\mu_1(x,z) ~\mu_n(z,y)}{ g(x')(z) }\right)~\preexp_B(x,z)
        + \sum_{x,y,z} \left(\frac{\mu_1(x,z) ~\mu_n(z,y)}{ g(x')(z) }\right)~\preexp_B(z,y)
    \tag{triangle ineq.} \\
    &= \sum_{y,z} \left( \sum_x \frac{\mu_1(x,z)}{g(x')(z)} \right) ~\mu_n(z,y)~\preexp_B(z,y)
    + \sum_{x,z} \left( \sum_y \frac{\mu_n(z,y)}{g(x')(z)} \right)~ \mu_1(x,z) ~ \preexp_B(x,z)
    \notag \\
    &= \sum_{x,z} \mu_1(x,z) ~ \preexp_B(x,z) + \sum_{y,z} \mu_n(z,y)~\preexp_B(z,y)
    \tag{marginals} \\
    &= \E {\mu_1} {\preexp_B} + \E {\mu_n} {\preexp_B}  \notag \\
    & \leq f(\preexp_A(x_1,x')) + f(\preexp_A(x',x_2))
    \tag{distances} \\
    & = f(\preexp_A(x_1,x') + \preexp_A(x',x_2))
    \tag{$f$ linear} \\
    & = f(\preexp_A(x_1,x_2)) \notag .
  \end{align}
  Thus, we have an expectation coupling $\bcouplingsupp {\mu} {\preexp_B}
  {\delta} {g(x_1)} {g(x_2)} {}$ for $\delta = f(\preexp_A(x_1, x_2))$. This
  completes the inductive step, so $g$ is expected $f$-sensitive.
\end{proof}

One important application of our result is for path metrics.

\begin{definition}[Path metric] \label{def:pathmetric}
  Let $\pre$ be a binary relation over $A$, and let $\pre^*$ denote its
  transitive closure and $\pre^n$ denote the union of its $n$-fold compositions
  for $n \geq 1$. Assume that for every $a,a'\in A$, we have $(a,a') \in
  \pre^*$. The \emph{path metric} of $\pre$ is the distance
\[
  \pathdistance{\pre}(a,a')= \min_n \{ (a, a') \in \pre^n \} 
\]
Note that the set is non-empty by assumption, and hence the minimum
is finite.
\end{definition}
Path metrics evidently satisfy condition \eqref{cond:path}. Since they are also
metrics, they also satisfy condition \eqref{cond:hemi}.  The fundamental theorem
of path coupling is then stated---in our terminology---as follows.
\begin{corollary}[\citet{bubley1997path}]
Let $\preexp=\pathdistance{\pre}$ for a binary relation $\pre$
over $A$. Let $g:A\to \distr(A)$ be a locally expected $f$-sensitive
function, where $f\in\mathcal{L}$. Then for every $T\in\mathbb{N}$,
the $T$-fold (monadic) composition $g^T$ of $g$ is expected
$f^T$-sensitive, i.e.\, for every $x_1,x_2\in A$, there exists a
coupling $\coupling {\mu} {g^{T}(x_1)} {g^{T}(x_2)}$ such that

\[
  \E \mu {\preexp} \leq f^T (\preexp(x_1,x_2)) .
\]
\end{corollary}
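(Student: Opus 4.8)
The plan is to derive this corollary directly by chaining the two preceding propositions, after checking that a path metric satisfies all their hypotheses. First I would observe that $\preexp = \pathdistance{\pre}$ is an integer-valued distance: by \cref{def:pathmetric}, $\preexp(a,a')$ is the least $n$ with $(a,a') \in \pre^n$, and this minimum is finite by the connectivity assumption. As noted just before the statement, $\preexp$ satisfies condition~\eqref{cond:path}: the first clause holds since $\preexp(a,a') = 0$ forces $a = a'$, and the second holds because a shortest $\pre$-path of length $n+1$ splits after one edge into a single $\pre$-step to some intermediate $z$ followed by a shortest path of length $n$, giving $\preexp(a,z) = 1$ and $\preexp(z,a') = n$. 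Being a genuine metric, $\preexp$ also satisfies condition~\eqref{cond:hemi}.

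With the hypotheses verified, I would instantiate \cref{prop:local:sens} with $\preexp_A = \preexp_B = \preexp$. Since $f \in \mathcal{L}$ and $g$ is assumed locally expected $f$-sensitive with respect to $\preexp$, the proposition upgrades this to (global) expected $f$-sensitivity of $g$ with respect to $\preexp$ and $\preexp$. This is the key step, and the one place where the path structure is actually used.

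Finally, since $\mathcal{L} \subseteq \mathcal{A}$, the now-established expected $f$-sensitivity of $g$ holds for an affine distance transformer, so I would apply the iteration proposition stated just after \cref{prop:f-lip-comp} (our variant of the Banach fixed point theorem, obtained by iterating sequential composition with $f' = f$). It yields, for every $T \in \NN$, that the $T$-fold monadic composition $g^T$ is expected $f^T$-sensitive with respect to $\preexp$; that is, for all $x_1, x_2 \in A$ there exists $\coupling{\mu}{g^T(x_1)}{g^T(x_2)}$ with $\E \mu {\preexp} \leq f^T(\preexp(x_1,x_2))$, which is exactly the claim.

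There is no genuine obstacle here beyond bookkeeping of hypotheses: all of the mathematical content—bridging neighboring inputs to arbitrary inputs—is already packaged in \cref{prop:local:sens}, whose inductive construction glues the two local couplings $\mu_1$ and $\mu_n$ through the shared intermediate marginal $g(x')$. The only subtlety I would flag is that \cref{prop:local:sens} asks the \emph{pre}-distance to satisfy~\eqref{cond:path} and the \emph{post}-distance to satisfy~\eqref{cond:hemi}, whereas here a single path metric $\preexp$ must play both roles simultaneously; confirming that path metrics meet \emph{both} conditions is therefore what makes the corollary immediate.
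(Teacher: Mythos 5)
Your proposal is correct and follows exactly the paper's own route: the authors likewise prove the corollary by first noting that path metrics satisfy conditions~\eqref{cond:path} and~\eqref{cond:hemi}, then invoking \cref{prop:local:sens} to upgrade local expected $f$-sensitivity to expected $f$-sensitivity, and finally applying the $T$-fold composition proposition to obtain expected $f^T$-sensitivity. Your closing remark---that the single path metric must serve as both the pre-distance (needing~\eqref{cond:path}) and the post-distance (needing~\eqref{cond:hemi})---is a fair articulation of the bookkeeping the paper leaves implicit in its one-line proof.
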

\begin{proof}
The proof follows from the equivalence between local expected
sensitivity and sensitivity, and the composition theorem of expected
sensitive functions.
\end{proof}

\subsection{Program Logic}
We formulate a proof rule inspired from
local expected sensitivity, in \cref{fig:pathcoupling}. Let us first consider
the premises of the rule. The first three conditions are inherited from
\cref{prop:local:sens}: the distance transformer $f$ is linear, the pre-distance
$\preexp$ is $\NN$-valued, and the post-distance satisfies condition
\eqref{cond:hemi}. The new two conditions deal with
the pre- and post-conditions, respectively. First, the pre-condition $\pre$ and
the pre-distance $\preexp$ satisfy the following condition:
\begin{align*}
  \pathcompat(\pre, \preexp) \triangleq{}&
  \forall m_1, m_2, n \in \NN.\, \pre(m_1, m_2) \land \preexp(m_1, m_2) = n + 1
  \\
  &\implies \exists m'.\, \preexp(m_1, m') = 1 \land \preexp(m', m_2) = n \land
  \pre(m_1, m') \land \pre(m', m_2) .
\end{align*}
This condition implies that $\preexp$ satisfies condition \eqref{cond:path}
(needed for \cref{prop:local:sens}), but it is stronger: when the distance
$\preexp$ is at least $1$, we can find some memory $m'$ such that the
pre-condition can also be split into $\pre(m_1, m')$ and $\pre(m', m_2)$.  We
call this condition \emph{path compatibility}; intuitively, it states that the
pre-condition is compatible with the path structure on the pre-distance.
Likewise, the post-condition $\post$ must be transitively closed; the
transitivity rule represents a finite sequence of judgments with
post-condition $\post$.

The main premises cover two cases: either the initial memories
are at distance $0$, or they are at distance $1$. Given these two judgments, the
conclusion gives a judgment for two input memories at \emph{any} distance. In
this way, the rule \rname{Trans} models a transitivity principle for expectation
couplings.

\begin{theorem}[Soundness] \label{thm:soundness:trans}
The rule \rname{Trans} is sound: for every instance of the rule concluding
$\eprhl{s_1}{s_2}{\pre;\preexp}{\post;\postexp}{f}$ and initial memories
satisfying $(m_1, m_2) \models \pre$, there exists $\mu$ such that
$$\bcouplingsupp{\mu}{\postexp}{f(\preexp(m_1, m_2))}{\dsem{m_1}{s_1}}{
  \dsem{m_2}{s_2}}{\post} .$$
\end{theorem}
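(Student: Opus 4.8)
The plan is to replay the proof of \cref{prop:local:sens} almost verbatim, lifting it from the single abstract function $g$ to the two program denotations $\dsem{m}{s_1}$ and $\dsem{m}{s_2}$, while threading the pre- and post-conditions through the construction. Since $\preexp$ is $\NN$-valued (one of the three inherited premises), I would fix a pair $(m_1, m_2) \models \pre$ and argue by induction on $n = \preexp(m_1, m_2)$. For the base case $n = 0$, I would invoke the distance-$0$ premise of \rname{Trans} directly: it supplies a coupling $\mu$ of $\dsem{m_1}{s_1}$ and $\dsem{m_2}{s_2}$ with support in $\post$ and $\E \mu {\postexp} \le f(0)$, and since $f \in \mathcal{L}$ is linear we have $f(0) = 0$, so this is precisely the required $(\postexp, f(\preexp(m_1, m_2)))$-expectation coupling.

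For the inductive step with $\preexp(m_1, m_2) = n + 1$, I would first apply $\pathcompat(\pre, \preexp)$ to obtain an intermediate memory $m'$ with $\preexp(m_1, m') = 1$, $\preexp(m', m_2) = n$, and both $\pre(m_1, m')$ and $\pre(m', m_2)$. The distance-$1$ premise applied at $(m_1, m')$ then yields a coupling $\mu_1$ of $\dsem{m_1}{s_1}$ and $\dsem{m'}{s_2}$ with $\E {\mu_1} {\postexp} \le f(1)$ and support in $\post$, while the induction hypothesis applied at $(m', m_2)$ yields a coupling $\mu_n$ of $\dsem{m'}{s_1}$ and $\dsem{m_2}{s_2}$ with $\E {\mu_n} {\postexp} \le f(n)$ and support in $\post$. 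I would then glue these with the same quotient construction as in \cref{prop:local:sens},
\[
  \mu(x, y) \triangleq \sum_z \frac{\mu_1(x, z) \cdot \mu_n(z, y)}{D(z)} ,
\]
where $D$ is the shared middle marginal and terms with zero denominator are read as $0$. The support condition $\supp(\mu) \subseteq \post$ follows because $\post$ is transitively closed: any $(x, y)$ in the support factors through some $z$ with $(x, z), (z, y) \in \post$, hence $(x, y) \in \post$. The expected-distance bound follows from the triangle inequality for $\postexp$ (condition \eqref{cond:hemi}) together with additivity of the linear map $f$, giving $\E \mu {\postexp} \le \E {\mu_1}{\postexp} + \E {\mu_n}{\postexp} \le f(1) + f(n) = f(n + 1) = f(\preexp(m_1, m_2))$.

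The main obstacle is the gluing step, and specifically the place where the two-program presentation departs from the single-function setting of \cref{prop:local:sens}. Computing the marginals of $\mu$ shows that $\pi_1(\mu) = \dsem{m_1}{s_1}$ forces $D = \pi_1(\mu_n) = \dsem{m'}{s_1}$, whereas $\pi_2(\mu) = \dsem{m_2}{s_2}$ forces $D = \pi_2(\mu_1) = \dsem{m'}{s_2}$; these are simultaneously satisfiable only when $\dsem{m'}{s_1} = \dsem{m'}{s_2}$. I would therefore rely on \rname{Trans} ensuring that the two programs agree semantically on the intermediate memories produced along a $\preexp$-path — exactly the situation for path coupling, where a single Markov kernel is iterated and the roles of $s_1$ and $s_2$ coincide. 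Once this identification of middle marginals is in place, the remaining verifications (linearity forcing $f(0) = 0$ and $f(1) + f(n) = f(n + 1)$, the triangle inequality from \eqref{cond:hemi}, and the transitive closure of $\post$) are routine and already present in \cref{prop:local:sens}.
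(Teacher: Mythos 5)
Your proposal is correct and matches the paper's proof essentially verbatim: the appendix proof of \cref{thm:soundness:trans} proceeds by the same induction on $\preexp(m_1, m_2)$, uses $\pathcompat$ to extract the intermediate memory $m'$, glues $\mu_1$ and $\mu_n$ with the identical quotient construction (dropping zero-denominator terms), and discharges the support and distance conditions via transitive closure of $\post$ and linearity of $f$, exactly as you outline. The middle-marginal obstacle you flag is resolved just as you suggest---the rule \rname{Trans} in \cref{fig:pathcoupling} is stated for a single program $s$ on both sides, so $\pi_2(\mu_1) = \pi_1(\mu_n) = \dsem{m'}{s}$ holds by construction, which is precisely the identification the paper's proof uses implicitly.
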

\begin{proof}
  By a similar argument as \cref{prop:local:sens}, with careful handling for the
  pre- and post-conditions. We defer details to the \LONGVERSION.
\end{proof}

\begin{figure}
\[
  \inferrule[Trans] {
    f \in \mathcal{L} \\
    \preexp : \NN \\
    \postexp \text{ satisfies } \eqref{cond:hemi}
    \\\\
    \models \pre \implies \pathcompat(\preexp, \pre) \\
    \models \post^* \implies \post
    \\\\
    \Peprhl{s}{s}{\pre\land\preexp=0;-}{\post;\postexp}{0} \\
    \Peprhl{s}{s}{\pre\land\preexp=1; -}{\post; \postexp}{\const{f(1)}}} {
    \Peprhl{s}{s}{\pre;\preexp}{\post; \postexp}{f} }
  \]
  \caption{Transitivity rule}
\label{fig:pathcoupling}
\end{figure}

\subsection{Example: Glauber Dynamics}
The \emph{Glauber dynamics} is a randomized algorithm for
approximating uniform samples from the valid colorings of a finite
graph. It is a prime example of an algorithm where rapid mixing can be
established using the path coupling method~\citep{bubley1997path}.

Before detailing this example, we recall some basic definitions and notations.
Consider a graph $G$ with a finite set of \emph{vertices} $V$ and a symmetric relation
$E \subseteq V \times V$ representing the \emph{edges}, and let $C$ be a finite set of
\emph{colors}. A \emph{coloring} of $G$ is a map $w : V \to C$; a coloring is
\emph{valid} if neighboring vertices receive different colors: if $(a, b) \in
E$, then $w(a) \neq w(b)$. We write $w(V')$ for the set of colors at a set of
vertices $V' \subseteq V$.

For a graph $G$ and a fixed set of colors $C$, there may be multiple (or perhaps
no) valid colorings. \citet{DBLP:journals/rsa/Jerrum95} proposed a simple Markov
chain for sampling a uniformly random coloring.  Beginning at any coloring $w$,
it draws a uniform vertex $v$ and a uniform color $c$, and then changes the
color of $v$ to $c$ in $w$ if no neighbor of $v$ is colored $c$. The Glauber
dynamics repeats this process for $T$ steps $T$ and returns the final coloring.
We can model this process with the following program $\glauber(T)$:
\[
  \begin{array}{l}
    i \iass 0; \\
    \iwhile{i < T}{} \\
    \quad {v}\irnd {V}; \\
    \quad {c}\irnd {C}; \\
    \quad \ift{\valid{G}(w, v, c)}{w \iass w[v \mapsto c]}; \\
    \quad i \iass i+1; \\
    \iret{w}
  \end{array}
\]
The guard $\valid{G}(w, v, c)$ is true when the vertex $v$ in coloring $w$ can
be colored $c$. \citet{DBLP:journals/rsa/Jerrum95} proved that the distribution
over outputs for this process converges rapidly to the uniform distribution on
valid colorings of $G$ as we take more and more steps, provided we start with a
valid coloring. While the original proof was quite technical,
\citet{bubley1997path} gave a much simpler proof of the convergence by
\emph{path coupling}.

Roughly, suppose that for every two colorings that differ in exactly \emph{one}
vertex coloring, we can couple the distributions obtained by executing one step
of the transition function of the Markov process (i.e., the loop body above)
such that the expected distance (how many vertices are colored differently) is
at most $\beta$. Then, the path coupling machinery gives a coupling of the
processes started from any two colorings, and concludes that after $T$ steps the
expected distance between two executions started with colorings at distance $k$
is upper bounded by $\beta^T \cdot k$.

In \SYSTEM, this final property corresponds to the following judgment:
\[
  \Peprhl {\glauber(T)}{\glauber(T)} {\pre_G;\pathdistance{\Adj}}
          {\top; \pathdistance{\Adj}} {\scale{\beta^T}}
\]
Above, $\Adj$ holds on two states iff the colorings (stored in
the variable $w$) differ in the color of a single vertex, and $\postexp \eqdef
\pathdistance{\Adj}$ counts the number of vertices with $w\lside(v) \neq
w\rside(v)$. The pre-condition $\pre_G$ captures properties of the graph; in
particular, $\pre_G$ states that $\Delta$ is the maximal degree in
$G$, i.e., each vertex in $G$ has at most $\Delta$ neighbors. Finally, $\beta$
is a constant determined by the graph and the number of colors; in certain
parameter ranges, $\beta$ is
strictly less than $1$ and the Markov chain converges quickly from any initial
state.

Now, we present the proof. Since the graph $G$ is not modified in the program,
we keep $\pre_G$ as an implicit invariant throughout. We begin with the loop
body $s$. We apply the rule \rname{Trans} with pre- and post-condition $\Phi,
\Psi \triangleq i\lside = i\rside$, distances $\preexp, \postexp \triangleq
\pathdistance{\Adj}$, and $f \triangleq \scale{\beta}$. The side-conditions are
clear: $f$ is linear, the $\postexp$ satisfies condition \eqref{cond:hemi}, the
pre-condition $\pre$ is compatible with the path distance $\preexp$, and the
post-condition $\post$ is transitively closed.  The first main premise
\[
  \Peprhl {s} {s}
  {\Phi \land \pathdistance{\Adj} = 0; -}
  {\Psi; \pathdistance{\Adj}}
  {0}
\]
is easy to show: the initial states have $w\lside = w\rside$, so simply coupling
using the identity bijection in \rname{Rand} preserves $w\lside = w\rside$ and
keeps the states at distance $0$. The second main premise
\[
  \Peprhl {s} {s}
  {\Phi \land \pathdistance{\Adj} = 1; -}
  {\Psi; \pathdistance{\Adj}}
  {\const{\beta}} ,
\]
is more complicated. Note that $\pathdistance{\Adj} = 1$ is equivalent to
$\Adj$: the two initial colorings $w\lside$ and $w\rside$ must differ in the
color at a single vertex. So, $\Adj$ implies the invariant
\[
  \Xi(a, b, v_\delta) \triangleq \forall z \in V.\,
  \begin{cases}
    z = v_\delta \implies a = w\lside(z) \land b = w\rside(z) \\
    z \neq v_\delta \implies w\lside(z) = w\rside(z)
  \end{cases}
\]
for some differing vertex $v_\delta$, which is colored as $a \triangleq
w\lside(v_\delta)$ and $b \triangleq w\rside(v_\delta)$ in the two respective
colorings. By the \rname{Case} rule, it suffices to show
\[
  \Peprhl {s} {s}
  {\Phi \land \Xi(a, b, v_\delta); -}
  {\Psi; \pathdistance{\Adj}}
  {\const{\beta}}
\]
for every $a, b \in C$ and $v_\delta \in V$.

We apply the \rname{SeqCase} rule with 
$s_{\mathit{samp}}$, consisting of the two random samplings in the loop body,
and $s_{\mathit{rest}}$, consisting of the
conditional statement and the updates.  For the first judgment, we first couple
the vertex samplings with the identity coupling so that $v\lside = v\rside$,
using the rule \rname{Rand} with $h = \id$. This gives:
\[
  \Peprhl {v \irnd V} {v \irnd V}
  {\Phi \land \Xi(a, b, v_\delta); - }
  {\Phi \land \Xi(a, b, v_\delta) \land v\lside = v\rside; \pathdistance{\Adj}}
  {\const{\beta}} .
\]
Next, we can perform a case analysis on $v\lside$ using the rule \rname{Case}.
If $v\lside$ is not a neighbor of $v_\delta$, then we couple samplings so that
$c\lside = c\rside$ with \rname{Rand} with $h = \id$. Otherwise, we couple
$c\lside = \pi^{ab}(c\rside)$, where $\pi^{ab}$ swaps $a$ and $b$ and leaves all
other colors unchanged. This gives
\[
  \Peprhl {s_{\mathit{samp}}} {s_{\mathit{samp}}}
  {\Phi \land \Xi(a, b, v_\delta); - } {\Theta ; \pathdistance{\Adj}}
  {\const{\beta}} ,
\]
where
\[
  \Theta \triangleq \Phi \land \Xi(a, b, v_\delta) \land v\lside = v\rside
  \land \begin{cases}
    v\lside \in \neighbors{G}(v_\delta) \implies c\lside = \pi^{ab}(c\rside) \\
    v\lside \notin \neighbors{G}(v_\delta) \implies c\lside = c\rside .
  \end{cases}
\]
Continuing with \rname{SeqCase}, we distinguish the following three mutually
exclusive cases with probabilities $q_b$, $q_g$, and $q_n$, depending on how the
distance changes under the coupling:
\begin{itemize}
  \item In the \emph{bad case}, the distance may grow to $2$. Taking the guard
    \[
      e_b \triangleq v \in \neighbors{G}(v_\delta) \land c = b ,
    \]
    the assignment and consequence rules give
    \[
      \Peprhl {s_{\mathit{rest}}} {s_{\mathit{rest}}}
      {\Theta \land {e_b}\lside ; \pathdistance{\Adj}} { \Psi ; \pathdistance{\Adj} } {\scale{2}} .
    \]
    (In fact, this judgment can be proved without the guard ${e_b}\lside$ in the
    pre-condition since the path distance increases from $1$ to at most $2$, but
    we will need to bound the probability of the guard being true in order to
    apply \rname{SeqCase}.) The probability of this case is at most $\Delta/|V|
    |C|$ since we must select a neighbor of $v_\delta$ and the color $b$ in the
    first side, so $q_b \leq \Delta / |V| |C|$.
  \item In the \emph{good case}, the distance shrinks to zero. We take the guard
    \[
      e_{g} \triangleq v = v_\delta \land c \notin w(\neighbors{G}(v)) .
    \]
    By applying the assignment and consequence rules, we can prove:
    \[
      \Peprhl {s_{\mathit{rest}}} {s_{\mathit{rest}}}
      {\Theta \land {e_g}\lside ; \pathdistance{\Adj}} { \Psi ; \pathdistance{\Adj}} {\scale{0}} .
    \]
    We will later need a \emph{lower} bound on the probability of this case:
    since we must choose the differing vertex $v_\delta$ and a color different
    from its neighbors, and there are at most $\Delta$ neighbors, $q_g \geq (|C|
    - \Delta) / |C| |V|$.
  \item In the \emph{neutral case}, we take the guard  $e_n \triangleq \neg e_b
    \land \neg e_g$. The assignment rule gives
    \[
      \Peprhl {s_{\mathit{rest}}} {s_{\mathit{rest}}}
      {\Theta \land {e_n}\lside ; \pathdistance{\Adj}} { \Psi ; \pathdistance{\Adj}} {\id} ,
    \]
    showing that the distance remains unchanged.
\end{itemize}
To put everything together, we need to bound the average change in distance.
Since the cases are mutually exclusive and at least one case holds, we know $q_n
= 1 - q_b - q_g$.  Combining the three cases, we need to bound the function $x
\mapsto (q_n + 2 \cdot q_b) \cdot x = (1 - q_g + q_b) \cdot x$. By the upper
bound on $q_b$ and the lower bound on $q_g$, \rname{SeqCase} gives
\[
  \Peprhl
  {s_{\mathit{samp}}; s_{\mathit{rest}}}
  {s_{\mathit{samp}}; s_{\mathit{rest}}}
  {\Phi \land \Xi(a, b, v_\delta); -}
  {\Psi; \pathdistance{\Adj}} {\beta} ,
\]
for every $a, b \in C$ and $v_\delta \in V$,where
\[
  \beta \eqdef 1 - \frac{1}{|V|} + \frac{2 \Delta}{|C| |V|} .
\]
So, we also have
\[
  \Peprhl
  {s_{\mathit{samp}}; s_{\mathit{rest}}}
  {s_{\mathit{samp}}; s_{\mathit{rest}}}
  {\Phi \land \pathdistance{\Adj} = 1; -}
  {\Psi; \pathdistance{\Adj}} {\beta}
\]
and the rule \rname{Trans} gives
\[
  \Peprhl
  {s_{\mathit{samp}}; s_{\mathit{rest}}}
  {s_{\mathit{samp}}; s_{\mathit{rest}}}
  {\Phi; \pathdistance{\Adj}}
  {\Psi; \pathdistance{\Adj}} {\beta} .
\]
Finally, we apply the rule \rname{While} with invariant $\Phi = \Psi$ and the
assignment rule \rname{Assg} to conclude the desired judgment
\[
  \Peprhl {\glauber(T)}{\glauber(T)} {\pre_G; \pathdistance{\Adj}}
          {\top; \pathdistance{\Adj}} {\scale{\beta^T}} .
\]
When the number of colors $|C|$ is strictly larger than $2\Delta$, the constant
$\beta$ is strictly less than $1$ and the Glauber dynamics is rapidly mixing.

\section{Prototype Implementation} \label{sec:implem}
We have developed a prototype implementation of our program logic on
top of \textsf{EasyCrypt}, a general-purpose proof assistant for
reasoning about probabilistic programs, and formalized stability of
the convex version of Stochastic Gradient Method and convergence of
population dynamics and Glauber dynamics.

\begin{itemize}

\item For some rules, we implement stronger versions that are required
  for formalization of the examples. For instance, our implementation
  of the \rname{Conseq} rule supports scaling of distances.


\item The ambient higher-order logic of \textsf{EasyCrypt} is used
  both for specifying distributions and for reasoning about
  their properties. Likewise, the logic is used for defining distances,
  Lipschitz continuity, and affine functions, and for proving their
  basic properties.

\item We axiomatize the gradient operator and postulate its main
  properties. Defining gradients from first principles and proving
  their properties is technically possible, but beyond the scope of
  the paper. Similarly, we axiomatize norms and state relevant
  properties as axioms. A small collection of standard facts are
  assumed.

\end{itemize}
The formalization of the examples is reasonably straightforward. The
formalization of stability for the Stochastic Gradient Method is about
400 lines; about one third is devoted to proving mathematical
facts. The formalization of convergence for the population dynamics about is
150 lines, while formalization of convergence for the Glauber dynamics is about
550 lines.

We have not yet interfaced current the prototype with the rich set of program
transformations supported by \textsf{EasyCrypt}, e.g.\ code motion,
loop unrolling, loop range splitting, which are required for the
non-convex version of Stochastic Gradient Method. Implementing these
features should not pose any difficulty, and is left for future work.

\section{Related Work} \label{sec:rw}

Lipschitz continuity has also been considered extensively in the
setting of program verification: \citet{ChaudhuriGL10} develop a
SMT-based analysis for proving programs robust, in the setting of a
core imperative language; \citet{ReedPierce10} develop a linear type system
for proving sensitivity and differential privacy in a higher-order
language~\citep{GHHNP13,AGGH14,AmorimGHKC17,adafuzz}.

There is also a long tradition of verifying expectation properties of
probabilistic programs; seminal works include \Sppdl~\citep{Kozen85}
and \Spgcl~\citep{Morgan96}. Recently, \citet{KaminskiKMO16} have
developed a method to reason about the expected
running time of probabilistic programs. This line of work is focused on
non-relational properties, such as proving upper bounds on errors,
whereas expected sensitivity is intrinsically relational.

There has also been a significant amount of work on the relational
verification of probabilistic programs. Barthe and collaborators
develop relational program logics for reasoning about the provable
security of cryptographic constructions~\citep{BartheGZ09} and
differential privacy of algorithms~\citep{BartheKOZ12}. \SYSTEM
subsumes the relational program logic considered
by~\citet{BartheGZ09}; indeed, one can prove that the two-sided rules
of \Sprhl are essentially equivalent to the fragment of \SYSTEM where
the pre-distance and post-distance are the zero function. In
contrast, the relational program logic \Saprhl considered
by~\citet{BartheKOZ12} and developed in subsequent
work~\citep{BartheO13,DBLP:conf/lics/BartheGGHS16,DBLP:conf/ccs/BartheFGGHS16,sato2016approximate,BEHSS17,JHThesis}
is not comparable with \SYSTEM. \Saprhl uses a
notion of approximate coupling targeting
differential privacy, while expectation couplings are
designed for average versions of quantitative
relational properties. In particular, \Saprhl considers
\emph{pointwise} notions of distance between distributions without
assuming a distance on the sample space, while \SYSTEM works
with distances on the underlying space, proving fundamentally
different properties.

There have been a few works on more specific
relational expectation properties. For instance, the
standard target property in \emph{masking} implementations in
cryptography is a variant of probabilistic non-interference, known as
\emph{probing security}.  Recent work introduces quantitative masking
strength~\citep{DBLP:journals/tcad/EldibWTS15}, a quantitative
generalization that measures average leakage of the programs.
Similarly, the bounded moment
model~\citep{DBLP:journals/iacr/BartheDFGSS16} is a qualitative,
expectation-based non-interference property for capturing security of
parallel implementations against differential power analyses. Current
verification technology for the bounded moment model is based on a
meta-theorem which reduces security in the bounded moment model to
probing security, and a custom program logic for
proving probing security. It would be interesting to develop a
program logic based on \Seprhl to verify a broader class of parallel
implementations.

For another example, there are formal verification techniques for
verifying \emph{incentive properties} in mechanism design. These
properties are relational, and when the underlying mechanism is
randomized (or when the inputs are randomized), incentive properties
compare the expected payoff of an agent in two
executions. \citet{BartheGGHRS15,BartheGAHRS16} show how to use a
relational type system to verify these properties. While their
approach is also based on couplings, they reason about expectations
only at the top level, as a consequence of a particular coupling. In
particular, it is not possible to compose reasoning about expected
values like in \SYSTEM, and it is also not possible to carry the
analyses required for our examples.

Lastly, \citet{BartheGHS17} use \Sxprhl, a proof-relevant variant of
\Sprhl, to extract a product program for the Glauber dynamics. In a
second step, they analyze the product program to prove rapid mixing;
their analysis is performed directly on the semantics of the product
program. Our system improves upon this
two-step approach in two respects. First, we can internalize the path
coupling principle as a rule in our logic. Second, the probabilistic
reasoning in our system is confined to the side-condition in the
\rname{SeqCase} rule.

\section{Conclusion} \label{sec:conclusion}
We have introduced the notion of expected $f$-sensitivity for
reasoning about algorithmic stability and convergence of
probabilistic processes, and proved some of its basic properties.
Moreover, we have introduced expectation couplings for
reasoning about a broader class of relational expectation properties,
and proposed a relational program logic for proving such properties.
We have illustrated the expressiveness of the logic with recent and
challenging examples from machine learning, evolutionary biology, and
statistical physics.

There are several directions for future work. On the foundational
side, it would be interesting to develop semantic foundations for
advanced fixed point-theorems and convergence criteria that arise in
probabilistic analysis. There are a wealth of results to consider, for
instance, see the survey by \citet{bharucha1976fixed}. On the
practical side, it would be interesting to formalize more advanced
examples featuring relational and probabilistic analysis, like the
recent result by \citet{Shamir16} proving convergence of a
practical variant of the Stochastic Gradient Method, or algorithms for
regret-minimization in learning theory and algorithmic game
theory. Another goal would be to verify more general results about
population dynamics, including the general case
from~\citet{PanageasSV16}.

\begin{acks}                            
  We thank the anonymous reviewers for useful comments on this work. This work
  is partially supported by the
  \grantsponsor{GS100010663}{European Research Council}{http://dx.doi.org/10.13039/100010663}
  under Grant No.~\grantnum{GS100010663}{679127},
  the \grantsponsor{GS100000144}{National Science Foundation CNS}{http://dx.doi.org/10.13039/100000144}
  under Grant No.~\grantnum{GS100000144}{1513694}, and
  the \grantsponsor{GS100000893}{Simons Foundation}{http://dx.doi.org/10.13039/100000893}
  under Grant No.~\grantnum{GS100000893}{360368} to Justin Hsu.
\end{acks}

\bibliographystyle{ACM-Reference-Format}
\bibliography{header,main}


\newcommand{\SortNoop}[1]{}
\begin{thebibliography}{42}


\ifx \showCODEN    \undefined \def \showCODEN     #1{\unskip}     \fi
\ifx \showDOI      \undefined \def \showDOI       #1{#1}\fi
\ifx \showISBNx    \undefined \def \showISBNx     #1{\unskip}     \fi
\ifx \showISBNxiii \undefined \def \showISBNxiii  #1{\unskip}     \fi
\ifx \showISSN     \undefined \def \showISSN      #1{\unskip}     \fi
\ifx \showLCCN     \undefined \def \showLCCN      #1{\unskip}     \fi
\ifx \shownote     \undefined \def \shownote      #1{#1}          \fi
\ifx \showarticletitle \undefined \def \showarticletitle #1{#1}   \fi
\ifx \showURL      \undefined \def \showURL       {\relax}        \fi
\providecommand\bibfield[2]{#2}
\providecommand\bibinfo[2]{#2}
\providecommand\natexlab[1]{#1}
\providecommand\showeprint[2][]{arXiv:#2}

\bibitem[\protect\citeauthoryear{Azevedo~de Amorim, Gaboardi, Gallego~Arias,
  and Hsu}{Azevedo~de Amorim et~al\mbox{.}}{2014}]%
        {AGGH14}
\bibfield{author}{\bibinfo{person}{Arthur Azevedo~de Amorim},
  \bibinfo{person}{Marco Gaboardi}, \bibinfo{person}{Emilio~Jes{\'u}s
  Gallego~Arias}, {and} \bibinfo{person}{Justin Hsu}.}
  \bibinfo{year}{2014}\natexlab{}.
\newblock \showarticletitle{Really natural linear indexed type-checking}. In
  \bibinfo{booktitle}{\emph{Symposium on Implementation and Application of
  Functional Programming Languages (IFL), Boston, Massachusetts}}.
  \bibinfo{publisher}{ACM Press}, \bibinfo{pages}{5:1--5:12}.
\newblock
\urldef\tempurl%
\url{http://arxiv.org/abs/1503.04522}
\showURL{%
\tempurl}


\bibitem[\protect\citeauthoryear{Azevedo~de Amorim, Gaboardi, Hsu, Katsumata,
  and Cherigui}{Azevedo~de Amorim et~al\mbox{.}}{2017}]%
        {AmorimGHKC17}
\bibfield{author}{\bibinfo{person}{Arthur Azevedo~de Amorim},
  \bibinfo{person}{Marco Gaboardi}, \bibinfo{person}{Justin Hsu},
  \bibinfo{person}{{Shin-ya} Katsumata}, {and} \bibinfo{person}{Ikram
  Cherigui}.} \bibinfo{year}{2017}\natexlab{}.
\newblock \showarticletitle{A semantic account of metric preservation}. In
  \bibinfo{booktitle}{\emph{{ACM} {SIGPLAN--SIGACT} {S}ymposium on {P}rinciples
  of {P}rogramming {L}anguages ({POPL}), Paris, France}}.
  \bibinfo{pages}{545--556}.
\newblock


\bibitem[\protect\citeauthoryear{Barthe, Dupressoir, Faust, Gr{\'{e}}goire,
  Standaert, and Strub}{Barthe et~al\mbox{.}}{2016a}]%
        {DBLP:journals/iacr/BartheDFGSS16}
\bibfield{author}{\bibinfo{person}{Gilles Barthe},
  \bibinfo{person}{Fran{\c{c}}ois Dupressoir}, \bibinfo{person}{Sebastian
  Faust}, \bibinfo{person}{Benjamin Gr{\'{e}}goire},
  \bibinfo{person}{Fran{\c{c}}ois{-}Xavier Standaert}, {and}
  \bibinfo{person}{Pierre{-}Yves Strub}.} \bibinfo{year}{2016}\natexlab{a}.
\newblock \showarticletitle{Parallel Implementations of Masking Schemes and the
  Bounded Moment Leakage Model}.
\newblock \bibinfo{journal}{\emph{{IACR} Cryptology ePrint Archive}}
  \bibinfo{volume}{2016} (\bibinfo{year}{2016}), \bibinfo{pages}{912}.
\newblock
\urldef\tempurl%
\url{http://eprint.iacr.org/2016/912}
\showURL{%
\tempurl}


\bibitem[\protect\citeauthoryear{Barthe, Dupressoir, Gr{\'{e}}goire, Kunz,
  Schmidt, and Strub}{Barthe et~al\mbox{.}}{2013}]%
        {BartheDGKSS13}
\bibfield{author}{\bibinfo{person}{Gilles Barthe},
  \bibinfo{person}{Fran{\c{c}}ois Dupressoir}, \bibinfo{person}{Benjamin
  Gr{\'{e}}goire}, \bibinfo{person}{C{\'{e}}sar Kunz},
  \bibinfo{person}{Benedikt Schmidt}, {and} \bibinfo{person}{Pierre{-}Yves
  Strub}.} \bibinfo{year}{2013}\natexlab{}.
\newblock \showarticletitle{EasyCrypt: {A} Tutorial}. In
  \bibinfo{booktitle}{\emph{Foundations of Security Analysis and Design {VII}
  ({FOSAD})}} \emph{(\bibinfo{series}{Lecture Notes in Computer Science})},
  Vol.~\bibinfo{volume}{8604}. \bibinfo{publisher}{Springer-Verlag},
  \bibinfo{pages}{146--166}.
\newblock
\newblock
\shownote{Tutorial Lectures.}


\bibitem[\protect\citeauthoryear{Barthe, Espitau, Hsu, Sato, and Strub}{Barthe
  et~al\mbox{.}}{2017a}]%
        {BEHSS17}
\bibfield{author}{\bibinfo{person}{Gilles Barthe}, \bibinfo{person}{Thomas
  Espitau}, \bibinfo{person}{Justin Hsu}, \bibinfo{person}{Tetsuya Sato}, {and}
  \bibinfo{person}{{Pierre}-{Yves} Strub}.} \bibinfo{year}{2017}\natexlab{a}.
\newblock \showarticletitle{$\star$-Liftings for differential privacy}. In
  \bibinfo{booktitle}{\emph{International Colloquium on Automata, Languages and
  Programming (ICALP), Warsaw, Poland}} \emph{(\bibinfo{series}{Leibniz
  International Proceedings in Informatics})}, Vol.~\bibinfo{volume}{80}.
  \bibinfo{publisher}{Schloss Dagstuhl--Leibniz Center for Informatics},
  \bibinfo{pages}{102:1--102:12}.
\newblock
\urldef\tempurl%
\url{https://arxiv.org/abs/1705.00133}
\showURL{%
\tempurl}


\bibitem[\protect\citeauthoryear{Barthe, Fong, Gaboardi, Gr{\'{e}}goire, Hsu,
  and Strub}{Barthe et~al\mbox{.}}{2016b}]%
        {DBLP:conf/ccs/BartheFGGHS16}
\bibfield{author}{\bibinfo{person}{Gilles Barthe},
  \bibinfo{person}{No{\'{e}}mie Fong}, \bibinfo{person}{Marco Gaboardi},
  \bibinfo{person}{Benjamin Gr{\'{e}}goire}, \bibinfo{person}{Justin Hsu},
  {and} \bibinfo{person}{Pierre{-}Yves Strub}.}
  \bibinfo{year}{2016}\natexlab{b}.
\newblock \showarticletitle{Advanced Probabilistic Couplings for Differential
  Privacy}. In \bibinfo{booktitle}{\emph{{ACM} {SIGSAC} Conference on Computer
  and Communications Security ({CCS}), Vienna, Austria}}.
  \bibinfo{pages}{55--67}.
\newblock
\urldef\tempurl%
\url{https://doi.org/10.1145/2976749.2978391}
\showDOI{\tempurl}


\bibitem[\protect\citeauthoryear{Barthe, Gaboardi, Arias, Hsu, Roth, and
  Strub}{Barthe et~al\mbox{.}}{2015}]%
        {BartheGGHRS15}
\bibfield{author}{\bibinfo{person}{Gilles Barthe}, \bibinfo{person}{Marco
  Gaboardi}, \bibinfo{person}{Emilio Jes{\'{u}}s~Gallego Arias},
  \bibinfo{person}{Justin Hsu}, \bibinfo{person}{Aaron Roth}, {and}
  \bibinfo{person}{Pierre{-}Yves Strub}.} \bibinfo{year}{2015}\natexlab{}.
\newblock \showarticletitle{Higher-Order Approximate Relational Refinement
  Types for Mechanism Design and Differential Privacy}. In
  \bibinfo{booktitle}{\emph{{ACM} {SIGPLAN--SIGACT} {S}ymposium on {P}rinciples
  of {P}rogramming {L}anguages ({POPL}), Mumbai, India}}.
  \bibinfo{pages}{55--68}.
\newblock


\bibitem[\protect\citeauthoryear{Barthe, Gaboardi, Arias, Hsu, Roth, and
  Strub}{Barthe et~al\mbox{.}}{2016c}]%
        {BartheGAHRS16}
\bibfield{author}{\bibinfo{person}{Gilles Barthe}, \bibinfo{person}{Marco
  Gaboardi}, \bibinfo{person}{Emilio Jes{\'{u}}s~Gallego Arias},
  \bibinfo{person}{Justin Hsu}, \bibinfo{person}{Aaron Roth}, {and}
  \bibinfo{person}{Pierre{-}Yves Strub}.} \bibinfo{year}{2016}\natexlab{c}.
\newblock \showarticletitle{Computer-aided verification in mechanism design}.
  In \bibinfo{booktitle}{\emph{Conference on Web and Internet Economics (WINE),
  Montr{\'e}al, Qu{\'e}bec}}.
\newblock
\urldef\tempurl%
\url{http://arxiv.org/abs/1502.04052}
\showURL{%
\tempurl}


\bibitem[\protect\citeauthoryear{Barthe, Gaboardi, Gr{\'{e}}goire, Hsu, and
  Strub}{Barthe et~al\mbox{.}}{2016d}]%
        {DBLP:conf/lics/BartheGGHS16}
\bibfield{author}{\bibinfo{person}{Gilles Barthe}, \bibinfo{person}{Marco
  Gaboardi}, \bibinfo{person}{Benjamin Gr{\'{e}}goire}, \bibinfo{person}{Justin
  Hsu}, {and} \bibinfo{person}{Pierre{-}Yves Strub}.}
  \bibinfo{year}{2016}\natexlab{d}.
\newblock \showarticletitle{Proving Differential Privacy via Probabilistic
  Couplings}. In \bibinfo{booktitle}{\emph{{IEEE} {S}ymposium on {L}ogic in
  {C}omputer {S}cience ({LICS}), New York, New York}}.
  \bibinfo{pages}{749--758}.
\newblock
\urldef\tempurl%
\url{https://doi.org/10.1145/2933575.2934554}
\showDOI{\tempurl}


\bibitem[\protect\citeauthoryear{Barthe, Gr{\'{e}}goire, Hsu, and Strub}{Barthe
  et~al\mbox{.}}{2017b}]%
        {BartheGHS17}
\bibfield{author}{\bibinfo{person}{Gilles Barthe}, \bibinfo{person}{Benjamin
  Gr{\'{e}}goire}, \bibinfo{person}{Justin Hsu}, {and}
  \bibinfo{person}{Pierre{-}Yves Strub}.} \bibinfo{year}{2017}\natexlab{b}.
\newblock \showarticletitle{Coupling proofs are probabilistic product
  programs}. In \bibinfo{booktitle}{\emph{{ACM} {SIGPLAN--SIGACT} {S}ymposium
  on {P}rinciples of {P}rogramming {L}anguages ({POPL}), Paris, France}}.
\newblock
\urldef\tempurl%
\url{http://arxiv.org/abs/1607.03455}
\showURL{%
\tempurl}


\bibitem[\protect\citeauthoryear{Barthe, Gr{\'e}goire, and
  Zanella{-}B{\'e}guelin}{Barthe et~al\mbox{.}}{2009}]%
        {BartheGZ09}
\bibfield{author}{\bibinfo{person}{Gilles Barthe}, \bibinfo{person}{Benjamin
  Gr{\'e}goire}, {and} \bibinfo{person}{Santiago Zanella{-}B{\'e}guelin}.}
  \bibinfo{year}{2009}\natexlab{}.
\newblock \showarticletitle{Formal Certification of Code-Based Cryptographic
  Proofs}. In \bibinfo{booktitle}{\emph{{ACM} {SIGPLAN--SIGACT} {S}ymposium on
  {P}rinciples of {P}rogramming {L}anguages ({POPL}), Savannah, Georgia}}.
  \bibinfo{address}{New York}, \bibinfo{pages}{90--101}.
\newblock
\urldef\tempurl%
\url{http://certicrypt.gforge.inria.fr/2013.Journal.pdf}
\showURL{%
\tempurl}


\bibitem[\protect\citeauthoryear{Barthe, K{\"{o}}pf, Olmedo, and
  B{\'{e}}guelin}{Barthe et~al\mbox{.}}{2012}]%
        {BartheKOZ12}
\bibfield{author}{\bibinfo{person}{Gilles Barthe}, \bibinfo{person}{Boris
  K{\"{o}}pf}, \bibinfo{person}{Federico Olmedo}, {and}
  \bibinfo{person}{Santiago~Zanella B{\'{e}}guelin}.}
  \bibinfo{year}{2012}\natexlab{}.
\newblock \showarticletitle{Probabilistic relational reasoning for differential
  privacy}. In \bibinfo{booktitle}{\emph{{ACM} {SIGPLAN--SIGACT} {S}ymposium on
  {P}rinciples of {P}rogramming {L}anguages ({POPL}), Philadelphia,
  Pennsylvania}}. \bibinfo{pages}{97--110}.
\newblock


\bibitem[\protect\citeauthoryear{Barthe and Olmedo}{Barthe and Olmedo}{2013}]%
        {BartheO13}
\bibfield{author}{\bibinfo{person}{Gilles Barthe} {and}
  \bibinfo{person}{Federico Olmedo}.} \bibinfo{year}{2013}\natexlab{}.
\newblock \showarticletitle{Beyond Differential Privacy: {C}omposition Theorems
  and Relational Logic for $f$-divergences between Probabilistic Programs}. In
  \bibinfo{booktitle}{\emph{International Colloquium on Automata, Languages and
  Programming (ICALP), Riga, Latvia}} \emph{(\bibinfo{series}{Lecture Notes in
  Computer Science})}, Vol.~\bibinfo{volume}{7966}.
  \bibinfo{publisher}{Springer-Verlag}, \bibinfo{pages}{49--60}.
\newblock
\urldef\tempurl%
\url{http://certicrypt.gforge.inria.fr/2013.ICALP.pdf}
\showURL{%
\tempurl}


\bibitem[\protect\citeauthoryear{Bharucha-Reid et~al\mbox{.}}{Bharucha-Reid
  et~al\mbox{.}}{1976}]%
        {bharucha1976fixed}
\bibfield{author}{\bibinfo{person}{AT Bharucha-Reid} {et~al\mbox{.}}}
  \bibinfo{year}{1976}\natexlab{}.
\newblock \showarticletitle{Fixed point theorems in probabilistic analysis}.
\newblock \bibinfo{journal}{\emph{Bull. Amer. Math. Soc.}}
  \bibinfo{volume}{82}, \bibinfo{number}{5} (\bibinfo{year}{1976}),
  \bibinfo{pages}{641--657}.
\newblock


\bibitem[\protect\citeauthoryear{Bousquet and Elisseeff}{Bousquet and
  Elisseeff}{2002}]%
        {BousquetE02}
\bibfield{author}{\bibinfo{person}{Olivier Bousquet} {and}
  \bibinfo{person}{Andr{\'{e}} Elisseeff}.} \bibinfo{year}{2002}\natexlab{}.
\newblock \showarticletitle{Stability and Generalization}.
\newblock \bibinfo{journal}{\emph{Journal of Machine Learning Research}}
  \bibinfo{volume}{2} (\bibinfo{year}{2002}), \bibinfo{pages}{499--526}.
\newblock
\urldef\tempurl%
\url{http://www.jmlr.org/papers/v2/bousquet02a.html}
\showURL{%
\tempurl}


\bibitem[\protect\citeauthoryear{Bubley and Dyer}{Bubley and Dyer}{1997}]%
        {bubley1997path}
\bibfield{author}{\bibinfo{person}{Russ Bubley} {and} \bibinfo{person}{Martin
  Dyer}.} \bibinfo{year}{1997}\natexlab{}.
\newblock \showarticletitle{Path coupling: {A} technique for proving rapid
  mixing in {M}arkov chains}. In \bibinfo{booktitle}{\emph{{IEEE} {S}ymposium
  on {F}oundations of {C}omputer {S}cience (FOCS), Miami Beach, Florida}}.
  \bibinfo{pages}{223--231}.
\newblock


\bibitem[\protect\citeauthoryear{Chaudhuri, Gulwani, and Lublinerman}{Chaudhuri
  et~al\mbox{.}}{2010}]%
        {ChaudhuriGL10}
\bibfield{author}{\bibinfo{person}{Swarat Chaudhuri}, \bibinfo{person}{Sumit
  Gulwani}, {and} \bibinfo{person}{Roberto Lublinerman}.}
  \bibinfo{year}{2010}\natexlab{}.
\newblock \showarticletitle{Continuity analysis of programs}. In
  \bibinfo{booktitle}{\emph{{ACM} {SIGPLAN--SIGACT} {S}ymposium on {P}rinciples
  of {P}rogramming {L}anguages ({POPL}), Madrid, Spain}}.
  \bibinfo{pages}{57--70}.
\newblock


\bibitem[\protect\citeauthoryear{Dixit, Srivastava, and Vishnoi}{Dixit
  et~al\mbox{.}}{2012}]%
        {dixit2012finite}
\bibfield{author}{\bibinfo{person}{Narendra~M Dixit}, \bibinfo{person}{Piyush
  Srivastava}, {and} \bibinfo{person}{Nisheeth~K Vishnoi}.}
  \bibinfo{year}{2012}\natexlab{}.
\newblock \showarticletitle{A finite population model of molecular evolution:
  {T}heory and computation}.
\newblock \bibinfo{journal}{\emph{Journal of Computational Biology}}
  \bibinfo{volume}{19}, \bibinfo{number}{10} (\bibinfo{year}{2012}),
  \bibinfo{pages}{1176--1202}.
\newblock


\bibitem[\protect\citeauthoryear{Eldib, Wang, Taha, and Schaumont}{Eldib
  et~al\mbox{.}}{2015}]%
        {DBLP:journals/tcad/EldibWTS15}
\bibfield{author}{\bibinfo{person}{Hassan Eldib}, \bibinfo{person}{Chao Wang},
  \bibinfo{person}{Mostafa M.~I. Taha}, {and} \bibinfo{person}{Patrick
  Schaumont}.} \bibinfo{year}{2015}\natexlab{}.
\newblock \showarticletitle{Quantitative Masking Strength: {Q}uantifying the
  Power Side-Channel Resistance of Software Code}.
\newblock \bibinfo{journal}{\emph{{IEEE} Transansactions on {CAD} of Integrated
  Circuits and Systems}} \bibinfo{volume}{34}, \bibinfo{number}{10}
  (\bibinfo{year}{2015}), \bibinfo{pages}{1558--1568}.
\newblock
\urldef\tempurl%
\url{https://doi.org/10.1109/TCAD.2015.2424951}
\showDOI{\tempurl}


\bibitem[\protect\citeauthoryear{Elisseeff, Evgeniou, and Pontil}{Elisseeff
  et~al\mbox{.}}{2005}]%
        {ElisseeffEP05}
\bibfield{author}{\bibinfo{person}{Andr{\'{e}} Elisseeff},
  \bibinfo{person}{Theodoros Evgeniou}, {and} \bibinfo{person}{Massimiliano
  Pontil}.} \bibinfo{year}{2005}\natexlab{}.
\newblock \showarticletitle{Stability of Randomized Learning Algorithms}.
\newblock \bibinfo{journal}{\emph{Journal of Machine Learning Research}}
  \bibinfo{volume}{6} (\bibinfo{year}{2005}), \bibinfo{pages}{55--79}.
\newblock
\urldef\tempurl%
\url{http://www.jmlr.org/papers/v6/elisseeff05a.html}
\showURL{%
\tempurl}


\bibitem[\protect\citeauthoryear{Gaboardi, Haeberlen, Hsu, Narayan, and
  Pierce}{Gaboardi et~al\mbox{.}}{2013}]%
        {GHHNP13}
\bibfield{author}{\bibinfo{person}{Marco Gaboardi}, \bibinfo{person}{Andreas
  Haeberlen}, \bibinfo{person}{Justin Hsu}, \bibinfo{person}{Arjun Narayan},
  {and} \bibinfo{person}{Benjamin~C. Pierce}.} \bibinfo{year}{2013}\natexlab{}.
\newblock \showarticletitle{Linear dependent types for differential privacy}.
  In \bibinfo{booktitle}{\emph{{ACM} {SIGPLAN--SIGACT} {S}ymposium on
  {P}rinciples of {P}rogramming {L}anguages ({POPL}), Rome, Italy}}.
  \bibinfo{pages}{357--370}.
\newblock
\urldef\tempurl%
\url{http://dl.acm.org/citation.cfm?id=2429113}
\showURL{%
\tempurl}


\bibitem[\protect\citeauthoryear{Hardt, Recht, and Singer}{Hardt
  et~al\mbox{.}}{2016}]%
        {HardtRS16}
\bibfield{author}{\bibinfo{person}{Moritz Hardt}, \bibinfo{person}{Ben Recht},
  {and} \bibinfo{person}{Yoram Singer}.} \bibinfo{year}{2016}\natexlab{}.
\newblock \showarticletitle{Train faster, generalize better: {S}tability of
  stochastic gradient descent}. In \bibinfo{booktitle}{\emph{{I}nternational
  {C}onference on {M}achine {L}earning (ICML), New York, NY}}
  \emph{(\bibinfo{series}{Journal of Machine Learning Research})},
  Vol.~\bibinfo{volume}{48}. \bibinfo{publisher}{JMLR.org},
  \bibinfo{pages}{1225--1234}.
\newblock
\urldef\tempurl%
\url{http://jmlr.org/proceedings/papers/v48/hardt16.html}
\showURL{%
\tempurl}


\bibitem[\protect\citeauthoryear{Hartl and Clark}{Hartl and Clark}{2006}]%
        {HartlC}
\bibfield{author}{\bibinfo{person}{Daniel~L. Hartl} {and}
  \bibinfo{person}{Andrew~G. Clark}.} \bibinfo{year}{2006}\natexlab{}.
\newblock \bibinfo{booktitle}{\emph{Principles of Population Genetics}
  (\bibinfo{edition}{fourth} ed.)}.
\newblock \bibinfo{publisher}{Sinauer Associates}.
\newblock


\bibitem[\protect\citeauthoryear{Hsu}{Hsu}{2017}]%
        {JHThesis}
\bibfield{author}{\bibinfo{person}{Justin Hsu}.}
  \bibinfo{year}{2017}\natexlab{}.
\newblock \emph{\bibinfo{title}{Probabilistic Couplings for Probabilistic
  Reasoning}}.
\newblock \bibinfo{thesistype}{Ph.D. Dissertation}. \bibinfo{school}{University
  of Pennsylvania}.
\newblock
\showeprint[arxiv]{cs.LO/1710.09951}
\urldef\tempurl%
\url{https://arxiv.org/abs/1710.09951}
\showURL{%
\tempurl}


\bibitem[\protect\citeauthoryear{Huang, Kwiatkowska, Wang, and Wu}{Huang
  et~al\mbox{.}}{2017}]%
        {HuangKWW17}
\bibfield{author}{\bibinfo{person}{Xiaowei Huang}, \bibinfo{person}{Marta
  Kwiatkowska}, \bibinfo{person}{Sen Wang}, {and} \bibinfo{person}{Min Wu}.}
  \bibinfo{year}{2017}\natexlab{}.
\newblock \showarticletitle{Safety Verification of Deep Neural Networks}. In
  \bibinfo{booktitle}{\emph{International Conference on Computer Aided
  Verification (CAV), Heidelberg, Germany}} \emph{(\bibinfo{series}{Lecture
  Notes in Computer Science})}, \bibfield{editor}{\bibinfo{person}{Rupak
  Majumdar} {and} \bibinfo{person}{Viktor Kuncak}} (Eds.),
  Vol.~\bibinfo{volume}{10426}. \bibinfo{publisher}{Springer-Verlag},
  \bibinfo{pages}{3--29}.
\newblock
\urldef\tempurl%
\url{https://doi.org/10.1007/978-3-319-63387-9_1}
\showDOI{\tempurl}


\bibitem[\protect\citeauthoryear{Jansen}{Jansen}{2013}]%
        {jansen2013analyzing}
\bibfield{author}{\bibinfo{person}{Thomas Jansen}.}
  \bibinfo{year}{2013}\natexlab{}.
\newblock \bibinfo{booktitle}{\emph{Analyzing Evolutionary Algorithms: The
  Computer Science Perspective}}.
\newblock \bibinfo{publisher}{Springer-Verlag}.
\newblock
\showISBNx{978-3-642-17338-7}
\urldef\tempurl%
\url{https://doi.org/10.1007/978-3-642-17339-4}
\showDOI{\tempurl}


\bibitem[\protect\citeauthoryear{Jerrum}{Jerrum}{1995}]%
        {DBLP:journals/rsa/Jerrum95}
\bibfield{author}{\bibinfo{person}{Mark Jerrum}.}
  \bibinfo{year}{1995}\natexlab{}.
\newblock \showarticletitle{A Very Simple Algorithm for Estimating the Number
  of $k$-Colorings of a Low-Degree Graph}.
\newblock \bibinfo{journal}{\emph{Random Structures and Algorithms}}
  \bibinfo{volume}{7}, \bibinfo{number}{2} (\bibinfo{year}{1995}),
  \bibinfo{pages}{157--166}.
\newblock
\urldef\tempurl%
\url{https://doi.org/10.1002/rsa.3240070205}
\showDOI{\tempurl}


\bibitem[\protect\citeauthoryear{Kaminski, Katoen, Matheja, and
  Olmedo}{Kaminski et~al\mbox{.}}{2016}]%
        {KaminskiKMO16}
\bibfield{author}{\bibinfo{person}{Benjamin~Lucien Kaminski},
  \bibinfo{person}{Joost{-}Pieter Katoen}, \bibinfo{person}{Christoph Matheja},
  {and} \bibinfo{person}{Federico Olmedo}.} \bibinfo{year}{2016}\natexlab{}.
\newblock \showarticletitle{Weakest Precondition Reasoning for Expected
  Run-Times of Probabilistic Programs}. In \bibinfo{booktitle}{\emph{European
  Symposium on Programming (ESOP), Eindhoven, The Netherlands}}
  \emph{(\bibinfo{series}{Lecture Notes in Computer Science})},
  Vol.~\bibinfo{volume}{9632}. \bibinfo{publisher}{Springer-Verlag},
  \bibinfo{pages}{364--389}.
\newblock
\urldef\tempurl%
\url{https://doi.org/10.1007/978-3-662-49498-1_15}
\showDOI{\tempurl}


\bibitem[\protect\citeauthoryear{Katz, Barrett, Dill, Julian, and
  Kochenderfer}{Katz et~al\mbox{.}}{2017}]%
        {KatzBDJK17}
\bibfield{author}{\bibinfo{person}{Guy Katz}, \bibinfo{person}{Clark~W.
  Barrett}, \bibinfo{person}{David~L. Dill}, \bibinfo{person}{Kyle Julian},
  {and} \bibinfo{person}{Mykel~J. Kochenderfer}.}
  \bibinfo{year}{2017}\natexlab{}.
\newblock \showarticletitle{Reluplex: An Efficient {SMT} Solver for Verifying
  Deep Neural Networks}. In \bibinfo{booktitle}{\emph{International Conference
  on Computer Aided Verification (CAV), Heidelberg, Germany}}
  \emph{(\bibinfo{series}{Lecture Notes in Computer Science})},
  \bibfield{editor}{\bibinfo{person}{Rupak Majumdar} {and}
  \bibinfo{person}{Viktor Kuncak}} (Eds.), Vol.~\bibinfo{volume}{10426}.
  \bibinfo{publisher}{Springer-Verlag}, \bibinfo{pages}{97--117}.
\newblock
\urldef\tempurl%
\url{https://doi.org/10.1007/978-3-319-63387-9_5}
\showDOI{\tempurl}


\bibitem[\protect\citeauthoryear{Kozen}{Kozen}{1979}]%
        {Kozen79}
\bibfield{author}{\bibinfo{person}{Dexter Kozen}.}
  \bibinfo{year}{1979}\natexlab{}.
\newblock \showarticletitle{Semantics of probabilistic programs}. In
  \bibinfo{booktitle}{\emph{{IEEE} {S}ymposium on {F}oundations of {C}omputer
  {S}cience (FOCS), San Juan, Puerto Rico}}. \bibinfo{pages}{101--114}.
\newblock


\bibitem[\protect\citeauthoryear{Kozen}{Kozen}{1985}]%
        {Kozen85}
\bibfield{author}{\bibinfo{person}{Dexter Kozen}.}
  \bibinfo{year}{1985}\natexlab{}.
\newblock \showarticletitle{A Probabilistic {PDL}}.
\newblock \bibinfo{journal}{\emph{J. Comput. System Sci.}}
  \bibinfo{volume}{30}, \bibinfo{number}{2} (\bibinfo{year}{1985}),
  \bibinfo{pages}{162--178}.
\newblock


\bibitem[\protect\citeauthoryear{Lindvall}{Lindvall}{2002}]%
        {Lindvall02}
\bibfield{author}{\bibinfo{person}{Torgny Lindvall}.}
  \bibinfo{year}{2002}\natexlab{}.
\newblock \bibinfo{booktitle}{\emph{Lectures on the coupling method}}.
\newblock \bibinfo{publisher}{Courier Corporation}.
\newblock


\bibitem[\protect\citeauthoryear{Morgan, {McIver}, and Seidel}{Morgan
  et~al\mbox{.}}{1996}]%
        {Morgan96}
\bibfield{author}{\bibinfo{person}{Carroll Morgan}, \bibinfo{person}{Annabelle
  {McIver}}, {and} \bibinfo{person}{Karen Seidel}.}
  \bibinfo{year}{1996}\natexlab{}.
\newblock \showarticletitle{Probabilistic Predicate Transformers}.
\newblock \bibinfo{journal}{\emph{ACM Transactions on Programming Languages and
  Systems}} \bibinfo{volume}{18}, \bibinfo{number}{3} (\bibinfo{year}{1996}),
  \bibinfo{pages}{325--353}.
\newblock


\bibitem[\protect\citeauthoryear{Panageas, Srivastava, and Vishnoi}{Panageas
  et~al\mbox{.}}{2016}]%
        {PanageasSV16}
\bibfield{author}{\bibinfo{person}{Ioannis Panageas}, \bibinfo{person}{Piyush
  Srivastava}, {and} \bibinfo{person}{Nisheeth~K. Vishnoi}.}
  \bibinfo{year}{2016}\natexlab{}.
\newblock \showarticletitle{Evolutionary Dynamics in Finite Populations Mix
  Rapidly}. In \bibinfo{booktitle}{\emph{{ACM--SIAM} {S}ymposium on {D}iscrete
  {A}lgorithms (SODA), Arlington, Virginia}}. \bibinfo{pages}{480--497}.
\newblock
\urldef\tempurl%
\url{https://doi.org/10.1137/1.9781611974331.ch36}
\showDOI{\tempurl}


\bibitem[\protect\citeauthoryear{Reed and Pierce}{Reed and Pierce}{2010}]%
        {ReedPierce10}
\bibfield{author}{\bibinfo{person}{Jason Reed} {and}
  \bibinfo{person}{Benjamin~C Pierce}.} \bibinfo{year}{2010}\natexlab{}.
\newblock \showarticletitle{Distance Makes the Types Grow Stronger: {A}
  Calculus for Differential Privacy}. In \bibinfo{booktitle}{\emph{{ACM}
  {SIGPLAN} {I}nternational {C}onference on {F}unctional {P}rogramming
  ({ICFP}), Baltimore, Maryland}}.
\newblock
\urldef\tempurl%
\url{http://dl.acm.org/citation.cfm?id=1863568}
\showURL{%
\tempurl}


\bibitem[\protect\citeauthoryear{Sato}{Sato}{2016}]%
        {sato2016approximate}
\bibfield{author}{\bibinfo{person}{Tetsuya Sato}.}
  \bibinfo{year}{2016}\natexlab{}.
\newblock \showarticletitle{Approximate Relational {H}oare Logic for Continuous
  Random Samplings}. In \bibinfo{booktitle}{\emph{Conference on the
  Mathematical Foundations of Programming Semantics (MFPS), Pittsburgh,
  Pennsylvania}}.
\newblock
\urldef\tempurl%
\url{http://arxiv.org/abs/1603.01445}
\showURL{%
\tempurl}


\bibitem[\protect\citeauthoryear{Selsam, Liang, and Dill}{Selsam
  et~al\mbox{.}}{2017}]%
        {SelsamLD17}
\bibfield{author}{\bibinfo{person}{Daniel Selsam}, \bibinfo{person}{Percy
  Liang}, {and} \bibinfo{person}{David~L. Dill}.}
  \bibinfo{year}{2017}\natexlab{}.
\newblock \showarticletitle{Developing Bug-Free Machine Learning Systems With
  Formal Mathematics}. In \bibinfo{booktitle}{\emph{{I}nternational
  {C}onference on {M}achine {L}earning (ICML), Sydney, Australia}}
  \emph{(\bibinfo{series}{Proceedings of Machine Learning Research})},
  \bibfield{editor}{\bibinfo{person}{Doina Precup} {and}
  \bibinfo{person}{Yee~Whye Teh}} (Eds.), Vol.~\bibinfo{volume}{70}.
  \bibinfo{pages}{3047--3056}.
\newblock
\urldef\tempurl%
\url{http://proceedings.mlr.press/v70/selsam17a.html}
\showURL{%
\tempurl}


\bibitem[\protect\citeauthoryear{Shamir}{Shamir}{2016}]%
        {Shamir16}
\bibfield{author}{\bibinfo{person}{Ohad Shamir}.}
  \bibinfo{year}{2016}\natexlab{}.
\newblock \showarticletitle{Without-Replacement Sampling for Stochastic
  Gradient Methods: Convergence Results and Application to Distributed
  Optimization}.
\newblock \bibinfo{journal}{\emph{CoRR}}  \bibinfo{volume}{abs/1603.00570}
  (\bibinfo{year}{2016}).
\newblock
\urldef\tempurl%
\url{http://arxiv.org/abs/1603.00570}
\showURL{%
\tempurl}


\bibitem[\protect\citeauthoryear{Thorisson}{Thorisson}{2000}]%
        {Thorisson00}
\bibfield{author}{\bibinfo{person}{Hermann Thorisson}.}
  \bibinfo{year}{2000}\natexlab{}.
\newblock \bibinfo{booktitle}{\emph{Coupling, Stationarity, and Regeneration}}.
\newblock \bibinfo{publisher}{Springer-Verlag}.
\newblock


\bibitem[\protect\citeauthoryear{Villani}{Villani}{2008}]%
        {Villani08}
\bibfield{author}{\bibinfo{person}{C{\'e}dric Villani}.}
  \bibinfo{year}{2008}\natexlab{}.
\newblock \bibinfo{booktitle}{\emph{Optimal transport: {O}ld and new}}.
\newblock \bibinfo{publisher}{Springer-Verlag}.
\newblock


\bibitem[\protect\citeauthoryear{Vishnoi}{Vishnoi}{2015}]%
        {Vishnoi15}
\bibfield{author}{\bibinfo{person}{Nisheeth~K. Vishnoi}.}
  \bibinfo{year}{2015}\natexlab{}.
\newblock \showarticletitle{The Speed of Evolution}. In
  \bibinfo{booktitle}{\emph{{ACM--SIAM} {S}ymposium on {D}iscrete {A}lgorithms
  (SODA), San Diego, California}}. \bibinfo{pages}{1590--1601}.
\newblock
\urldef\tempurl%
\url{https://doi.org/10.1137/1.9781611973730.105}
\showDOI{\tempurl}


\bibitem[\protect\citeauthoryear{Winograd-Cort, Haeberlen, Roth, and
  Pierce}{Winograd-Cort et~al\mbox{.}}{2017}]%
        {adafuzz}
\bibfield{author}{\bibinfo{person}{Daniel Winograd-Cort},
  \bibinfo{person}{Andreas Haeberlen}, \bibinfo{person}{Aaron Roth}, {and}
  \bibinfo{person}{Benjamin~C. Pierce}.} \bibinfo{year}{2017}\natexlab{}.
\newblock \showarticletitle{A framework for adaptive differential privacy}. In
  \bibinfo{booktitle}{\emph{{ACM} {SIGPLAN} {I}nternational {C}onference on
  {F}unctional {P}rogramming ({ICFP}), Oxford, England}}.
  \bibinfo{pages}{10:1--10:29}.
\newblock
\urldef\tempurl%
\url{https://dl.acm.org/citation.cfm?id=3110254}
\showURL{%
\tempurl}


\end{thebibliography}

\iffull
\clearpage
\onecolumn

\allowdisplaybreaks
\appendix

\section{Soundness}

First, we can show that the equivalence judgment $\eqsem{\pre}{s_1}{s_2}$ shows
that programs $s_1, s_2$ have equal denotation under any memory satisfying the
(non-relation) pre-condition $\pre$.

\begin{lemma} \label[lemma]{l:struct}
  If $\eqsem{\pre}{s_1}{s_2}$, then for any $m \models \pre$,
  $\dsem{m}{s_1} = \dsem{m}{s_2}$.
\end{lemma}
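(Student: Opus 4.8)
The plan is to proceed by induction on the derivation of $\eqsem{\pre}{s_1}{s_2}$, reading the rules of \cref{fig:equiv} bottom-up and unfolding the denotational clauses of \cref{fig:semantics} together with the monad laws for sub-distributions. I would fix an arbitrary memory $m \models \pre$ and, for each rule, show that the two programs in its conclusion have equal denotation at $m$, using as induction hypothesis that equal denotation holds for the premises at every memory satisfying their respective assertions.

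The axiom cases are direct computations. Reflexivity and symmetry are immediate. For $\eqsem{\pre}{x \irnd \dunit{x}}{\iskip}$, sampling the Dirac $\dunit{m(x)}$ and assigning the result leaves the memory unchanged, so both sides equal $\dunit{m}$; for $\eqsem{\pre}{x \iass e}{\iskip}$, the side condition $\models \pre \implies x = e$ gives $m(x) = \dsem{m}{e}$, so the update $m[\subst{x}{\dsem{m}{e}}]$ is again $m$ and both sides equal $\dunit{m}$. The two $\iskip$-elimination rules $\eqsem{\pre}{s;\iskip}{s}$ and $\eqsem{\pre}{\iskip;s}{s}$ are exactly the right- and left-identity laws of the sub-distribution monad, $\mathsf{bind}~\mu~\mathsf{unit} = \mu$ and $\mathsf{bind}~(\mathsf{unit}~m)~M = M(m)$. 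The two conditional-collapse rules follow because the guard side condition ($\models \pre \implies e$ or $\models \pre \implies \neg e$) fixes which branch the semantics selects at $m$.

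The congruence cases propagate equality through the compound-statement clauses. For the left sequencing congruence, the induction hypothesis gives $\dsem{m}{s_1} = \dsem{m}{s'_1}$, and substituting into $\dsem{m}{s_1;s_2} = \dlet{\xi}{\dsem{m}{s_1}}{\dsem{\xi}{s_2}}$ yields the claim. For the conditional congruence I would split on the value of $\dsem{m}{e}$: if it is $\true$ then $m \models \pre \land e$ and the first premise applies, while if it is $\false$ then $m \models \pre \land \neg e$ and the second applies.

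The step I expect to require the most care is the right sequencing congruence, whose premise $\eqsem{\top}{s_2}{s'_2}$ is deliberately stated under the trivial precondition rather than under $\pre$. The reason is that in $\dsem{m}{s_1;s_2} = \dlet{\xi}{\dsem{m}{s_1}}{\dsem{\xi}{s_2}}$ the intermediate memory $\xi$ ranges over $\supp(\dsem{m}{s_1})$, and these memories need not satisfy $\pre$; hence the induction hypothesis must supply $\dsem{\xi}{s_2} = \dsem{\xi}{s'_2}$ for every such $\xi$, which is precisely what the $\top$ premise grants. Replacing $\dsem{\xi}{s_2}$ by $\dsem{\xi}{s'_2}$ pointwise inside the $\mathsf{bind}$ then produces $\dsem{m}{s_1;s'_2}$. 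The only genuinely non-syntactic observation is that this pointwise substitution is legitimate, i.e.\ that $\mathsf{bind}~\mu~M = \mathsf{bind}~\mu~M'$ whenever $M$ and $M'$ agree on $\supp(\mu)$, and this is immediate from the definition of $\mathsf{bind}$ as a $\mu$-weighted pointwise sum.
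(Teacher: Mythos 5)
Your proof is correct and takes essentially the same route as the paper, whose entire argument is the one-liner ``direct induction on $\eqsem{\pre}{s_1}{s_2}$, using the semantics in \cref{fig:semantics} for the base cases.'' Your case analysis merely fills in the details the paper leaves implicit, and your observation about why the right-sequencing congruence carries the premise $\eqsem{\top}{s_2}{s'_2}$ rather than $\eqsem{\pre}{s_2}{s'_2}$ (so that the induction hypothesis applies at every intermediate memory in $\supp(\dsem{m}{s_1})$, which need not satisfy $\pre$) is precisely the point that makes the induction go through.
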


\begin{proof}
  Direct induction on $\eqsem{\pre}{s_1}{s_2}$, using the semantics in
  \cref{fig:semantics} for the base cases.
\end{proof}

Next, we prove the key lemma showing composition of expectation couplings
(\cref{prop:seqcomp}).

\begin{proof}[Proof of composition of expectation couplings (\cref{prop:seqcomp})]
  We check each of the conditions in turn.
  
  For the support condition,  $\supp(\mu) \subseteq \pre$ by the first premise,
  and for every $(a, b) \in \Phi$ we have $\supp(M(a, b)) \subseteq \post$
  by the second premise, so
  \[
    \supp(\mu') = \supp(\E {\mu} {M}) \subseteq \post .
  \]
  For the marginal condition, we have $\pi_1(\mu) = \mu_a$ and $\pi_2(\mu) =
  \mu_b$ by the first premise, and for every $(a, b) \in \Phi$ we have
  $\pi_1(M(a, b)) = M_a(a)$ and $\pi_2(M(a, b)) = M_b(b)$ by the second
  premise. We can directly calculate the marginals of $\mu'$. For instance, for
  every $a' \in A$ the first marginal is
  \begin{align}
    \pi_1(\mu')(a') &= \pi_1(\E {\mu} {M})(a') \notag \\
    &= \sum_{b' \in B} \sum_{(a, b) \in A \times B}
    \mu(a, b) \cdot M(a, b)(a', b') \notag  \\
    &= \sum_{b' \in B} \sum_{(a, b) \in \pre}
    \mu(a, b) \cdot M(a, b)(a', b') \tag{Support of $\mu$} \\
    &= \sum_{(a, b) \in \pre} \mu(a, b) \cdot \pi_1(M(a, b))(a')
    \notag \\
    &= \sum_{(a, b) \in \pre} \mu(a, b) \cdot M_a(a)(a')
    \tag{Marginal of $M(a, b)$} \\
    &= \sum_{a \in A} M_a(a)(a') \sum_{b \in B} \mu(a, b)
    \tag{Support of $\mu$} \\
    &= \sum_{a \in A} M_a(a)(a') \cdot \mu_a(a)
    \tag{Marginal of $\mu$} \\
    &= \mu_a(a') . \notag
  \end{align}
  The second marginal $\pi_2(\mu') = \mu_b$ is similar.
  
  Finally, we check the distance condition. By the premises, we have
  \begin{align*}
    \E {\mu} {\preexp} &\leq \delta \\
    \E {M(a, b)} {\postexp} &\leq f(\preexp(a, b))
    \quad \text{for every } (a, b) \in \pre .
  \end{align*}
  Then, we can bound
  \begin{align}
    \E {\mu'} {\postexp} &= \sum_{(a', b') \in A \times B} \postexp(a', b') \cdot \mu'(a', b')
    \notag \\
    &= \sum_{(a', b') \in A \times B} \postexp(a', b')
    \sum_{(a, b) \in A \times B} \mu(a, b) \cdot M(a, b)(a', b')
    \notag \\
    &= \sum_{(a, b) \in \pre} \mu(a, b) \E {M(a, b)} {\postexp}
    \tag{Support of $\mu$} \\
    &\leq \sum_{(a, b) \in \pre} \mu(a, b) \cdot f(\preexp(a, b))
    \tag{Expectation of $M(a, b)$} \\
    &= \E {\mu} {f(\preexp)}
    \tag{Support of $\mu$} \\
    &\leq f(\E {\mu} {\preexp} )
    \tag{Linearity of expectation} \\
    &\leq f(\delta)
    \tag{Monotonicity of $f$, expectation of $\mu$} .
  \end{align}
\end{proof}

We now move to the soundness of the logic.

\begin{proof}[Proof of the soundness of the logic]
We prove that each rule is sound.

\begin{description}


\item[\rname{Conseq}]
  Let $m_1, m_2 \models \pre''$. Hence, $m_1, m_2 \models \pre$, and there
  exists $\eta$ such that
  $\bcouplingsupp {\eta} {\postexp} {\delta}
      {\dsem{m_1}{s_1}} {\dsem{m_2}{s_2}} {\post}$.
  We use $\eta$ for the coupling of the conclusion. We already
  know that
  $\forall i \in \{1, 2\} .\, \proj_i(\mu) = \dsem{m_i}{s_i}$
  and that
  $\supp(\mu) \subseteq \post \subseteq \post''$.
  Finally,
  \begin{align*}
    \E \mu {\postexp''}
      & \leq \E \mu {\postexp} & \text{($\Exp$ monotone)} \\
      & \leq f(\preexp(m_1, m_2)) & \text{($\mu$ coupling)} \\
      & \leq f'(\preexp''(m_1, m_2)) & \text{(premise)}. \\
  \end{align*}

\item[\rname{Struct}]
  Immediate consequence of~\cref{l:struct}.

\item[\rname{Assg} \& \rname{Assg-L}]
  Immediate.

\item[\rname{Rand}]
  Let
  $m_1, m_2 \models \forall v \in \supp(g_1) .\,
  \post [\subst{x_1, x_2}{v, h(v)}]$
  and
  $\mu_i \eqdef \E {v \sim g_i} {\dunit{m_i[\subst{x_i}{v}]}}$
  for $i \in \{1, 2\}$.
  Since $h$ is a one to one mapping from $\supp(g_1)$ to $\supp(g_2)$
  that preserves the mass, we have
  $\wt{\mu_1} = \wt{\mu_2}$
  and
  $\mu_2 = \E {v \sim g_1} {\dunit{m_2[\subst{x_2}{h(v)}]}}$.
  Let
  \[\mu \eqdef \E {v \sim g_1}
      {\dunit{(m_1[\subst{x_1}{v}], m_2[\subst{x_2}{h(v)}])}}.\]
  By construction, for $i \in \{1, 2\}$, we have
  $\proj_i(\mu) = \mu_i$.
  Let $\ov{m} \in \supp(\mu)$. By definition, there exists
  $v \in \supp(g_1)$ s.t.
  $\ov{m} = (m_1[\subst{x_1}{v}], m_2[\subst{x_2}{h(v)}])$.
  Hence, $\ov{m} \models \post$ and
  \begin{align*}
    \E {\ov{m} \sim \mu} {\postexp}
      &= \E {v \sim g_1}
           {\E {\ov{m} \sim \dunit{(m_1[\subst{x_1}{v}], m_2[\subst{x_2}{h(v)}])}}
               {\postexp}} \\
      &= \E {v \sim g_1} {\postexp(m_1[\subst{x_1}{v}], m_2[\subst{x_2}{h(v)}])} \\
      &= \E {v \sim g_1} {\postexp[\subst{(x_1)\lside, (x_2)\rside}{v, h(v)}]}.
  \end{align*}

\item[\rname{Seq}] Let $(m_1, m_2) \models \pre$ and, for
  $i \in \{1, 2\}$, let $\mu_i \eqdef \dsem{m_i}{c_i}$ and
  $\eta_i(m) \eqdef \dsem{m}{c'_i}$.
  From the first premise, we know that there exists an $\eta$ such that
  $\bcoupling \eta {\postexp} \delta {\mu_1} {\mu_2}$
  and $\supp(\eta) \models \Xi$, where
  $\delta \eqdef f(\preexp(m_1, m_2))$.
  Likewise, from the second premise, for
  $m \eqdef (m'_1, m'_2) \models \Xi$,
  there exists an $\eta_{m}$ such that
  $\bcoupling {\eta_{m}} {\postexpz} {\delta'(m)} {\eta_1(m)} {\eta_2(m)}$
  and $\supp(\eta_{m}) \models \post$, where
  $\delta'(m) \eqdef f'(\postexp(m))$.

  \medskip

  Let $\mu \eqdef {\dlet m {\eta} {\eta_m \mid \Xi}}$.
  By \cref{prop:seqcomp}, we already know that
  $\coupling \mu {\dslet {\mu_1} {\eta_1}} {\dslet {\mu_2} {\eta_2}}$
  and that $\supp(\mu) \models \post$. We are left to prove
  that $\E {\mu} {\postexpz} \leq (f' \circ f)(\preexp(m_1, m_2))$:
  \begin{align*}
    \E {\mu} {\postexpz}
      &= \dlet m {\eta} {\E {\eta_m} {\postexpz} \mid \Xi} \\
      &\leq \dlet m {\eta} {f'(\postexp(m))}
        & \text{(monotonicity of $\mathbb{E}$)} \\
      &\leq f'(\E {\eta} {\postexp})
        & \text{(Linearity of expectation)} \\
      &\leq f'(f(\preexp(m_1, m_2))).
        & \text{($f'$ is increasing)}
  \end{align*}

\item[\rname{Case}]
  Let $m_1, m_2 \models \pre$. We do a case analysis on
  $\dsem{m_1}{e_1}$ and conclude from we one of the two premises.

\item[\rname{Cond}]
  Immediate consequence of~\rname{Case} and~\rname{Struct}, using the
  synchronicity of both guards.

\item[\rname{SeqCase}]
  For $\ov{m} \models \post \land \exists i .\, {e_i}\lside$, we
  denote by $\iota(\ov{m})$ an index $i$ s.t.
  $\ov{m} \models {e_i}\lside$, and by $\eta_{\ov{m}}$ the coupling from
  \[ \Peprhl{s'_1}{s'_2}
        {\post \land {e_{\iota(\ov{m})}}\lside; \postexp}
        {\postz; \postexpz}{f_i} ,\]
  i.e. $\eta_{\ov{m}}$ is s.t.
  $\bcouplingsupp {\eta_{\ov{m}}} {\postexpz} {\delta_{\ov{m}}}
      {\dsem{\proj_1(\ov{m})}{s'_1}} {\dsem{\proj_2(\ov{m})}{s'_2}} {\post'}$,
  where $\delta_{\ov{m}} \eqdef f_{\iota(\ov{m})}(\postexp(\ov{m}))$.
  Let $m_1, m_2 \models \pre$ and $\mu$ s.t.
  $\bcouplingsupp \mu {\postexp} {\delta}
      {\dsem{m_1}{s_1}} {\dsem{m_2}{s_2}} {\post}$,
  where $\delta \eqdef f_0(\preexp(m_1, m_2))$---such
  a coupling is obtained from the premise
  $\Peprhl{s_1}{s_2}{\pre ; \preexp}{\post ; \postexp}{f_0}$.
  Let $\eta \eqdef \E {\ov{m} \sim \mu} {\eta_{\ov{m}}}$.
  The distribution $\eta$ is well-defined if for any
  $\ov{m} \in \supp(\mu)$,
  $\ov{m} \models \post \land \exists i .\, {e_i}\lside$.
  By definition of $\mu$, we already know that
  $\supp(\mu) \subseteq \post$.
  Moreover, from the premise
  $\post \implies {\textstyle\bigvee}_{i \in I} e_i$,
  we obtain the existence of a $\iota \in I$ s.t.
  $\proj_1(\ov{m}) \models e_\iota$, i.e. such that
  $\ov{m} \models {e_{\iota}}\lside$.
  It is immediate that $\supp(\eta) \subseteq \post'$ since for any
  $\ov{m} \in \supp(\mu)$, by definition of $\eta_{\ov{m}}$, we know
  that $\eta_{\ov{m}} \subseteq \post'$.
  Now, for $i \in \{1, 2\}$, we have:
  \begin{align*}
    \proj_i(\eta)
      &= \proj_i(\E {\ov{m} \sim \mu} {\eta_{\ov{m}}})
       = \E {\ov{m} \sim \mu}
           {\underbrace{\proj_i(\eta_{\ov{m}})}_{%
              \mathclap{\dsem{\proj_i(\ov{m})}{s'_i}}}} \\
      &= \E {m \sim \proj_i(\mu)} {\dsem{m}{s'_i}}
       = \E {m \sim \dsem{m_i}{s_i}} {\dsem{m}{s'_i}} \\
      &= m \mapsto \dsem{m}{s_i; s'_i}.
  \end{align*}

  We are left to prove the bounding property of $\eta$. For $i \in I$,
  we denote by $\ov{p}_i$ the quantity
  ${\textstyle\Pr}_{\ov{m} \sim \mu} [\iota(\ov{m}) = i]$.
  Then,
  \[
    \ov{p}_i
      =    {\textstyle\Pr}_{\ov{m} \sim \mu} [\iota(\ov{m}) = i]
      \leq {\textstyle\Pr}_{\ov{m} \sim \mu} [\dsem{\proj_1(\ov{m})}{e_i}]
      =    {\textstyle\Pr}_{
                \underbrace{\scriptstyle m \sim \proj_1(\mu)}_{%
                  m \sim \dsem{m_1}{s_1}}}
            [ \dsem{m}{e_i} ] .
  \]
  Denote this last quantity by $p_i$.  By the law of total expectation:
  \begin{align*}
    \E \mu  {\postexpz}
      &= \E {\ov{m} \sim \mu} {\E {\mu_{\ov{m}}} \postexpz} \\
      &= \sum_{i \in I} \ov{p}_i \cdot
           \E {\ov{m} \sim \mu} {%
             \E {\mu_{\ov{m}}} {\postexpz} \mid \iota(\ov{m}) = i} \\
      &\leq \sum_{i \in I} p_i \cdot
           \E {\ov{m} \sim \mu} {%
             \E {\mu_{\ov{m}}} {\postexpz} \mid \iota(\ov{m}) = i}.
  \end{align*}

  Now, for $\ov{m} \in \supp(\mu)$ s.t. $\iota(\ov{m}) = i$, we have:
  \begin{align*}
    \E {\mu_{\ov{m}}} \postexpz
      & \leq \delta_{\ov{m}} = f_i (\postexp(\ov{m})).
  \end{align*}
  Hence,
  \begin{align*}
    \E \mu  {\postexpz}
      &\leq \sum_{i \in I} p_i \cdot \E {\ov{m} \sim \mu}
              {f_i (\postexp(\ov{m})) \mid \iota(m) = i} \\
      &\leq \sum_{i \in I} p_i \cdot \E {\ov{m} \sim \mu}
              {f_i (\postexp(\ov{m}))}
       = \sum_{i \in I} p_i f_i(\E {\mu} {\postexp}) \\
      &\leq \sum_{i \in I} p_i \cdot f_i(f_0(\preexp(m_1, m_2)))
       = \ov{f}(\preexp(m_1, m_2))
  \end{align*}
  where the last step is by the premise.

\item[\rname{While}]
  We proceed by induction on $n$.
  For $i \in \{1, 2\}$, let $\ov{s_i} \eqdef \iwhile e {s_i}$.
  For $n \in \NN$, let $\post_n \eqdef \post \land (i\lside = n)$
  and $\ov{f}_n \eqdef f_1 \circ \cdots \circ f_n$.
  If $n=0$, under $m_1, m_2 \models \post$, we have
  $\dsem{m_1}{e} = \dsem{m_2}{e} = \false$.
  Hence, for $i \in \{1, 2\}$,
  $\dsem{m_i}{\ov{s_i}} = \dunit{m_i}$
  and we are in a case similar to \rname{Skip}.
  Otherwise, assume that the rule is valid for $n$.
  From the premises and the induction hypothesis, we have:
  \begin{align*}
    & \Peprhl {s_1} {s_2}
      {\post_{n+1} \land {e_1}\lside; \postexp_{n + 1}}
      {\post_n; \postexp_n}{f_{n+1}} \\
    & \Peprhl {\ov{s_1}} {\ov{s_2}}
      {\post_n; \postexp_n} {\post_0; \postexp_0} {\ov{f}_n}
  \end{align*}

  By reasoning similar to \rname{Seq}, we have
  $\Peprhl {s_1; \ov{s_1}} {s_2; \ov{s_2}}
      {\post_{n+1}; \postexp_{n + 1}} {\post_0; \postexp_0} {\ov{f}_{n+1}}.$
  Now, under $m_1, m_2 \models \post$, we have, for $i \in \{1, 2\}$,
  $\dsem{m_i}{s_i; \ov{s_i}} = \dsem{m_i}{\ov{s_i}}$.
  Hence, by reasoning similar to the one of~\rname{Struct},
  we obtain
  $\Peprhl {\ov{s_1}} {\ov{s_2}}
      {\post_{n+1} \land {e_1}\lside; \postexp_{n + 1}} {\post_0; \postexp_0}
      {\ov{f}_{n+1}}.$
  By $\post_{n+1} \iff (\post_{n+1} \land {e_1}\lside)$,
  we conclude that
  $\Peprhl {\ov{s_1}} {\ov{s_2}}
      {\post_{n+1}; \postexp_{n + 1}} {\post_0; \postexp_0}
      {\ov{f}_{n+1}}.$

\item[\rname{Frame-D}]
  Let $m_1, m_2 \models \pre$. From the premise, there is a coupling $\eta$ s.t.
  $\bcouplingsupp {\eta} {\postexp} {\delta}
     {\dsem{m_1}{s_1}} {\dsem{m_2}{s_2}} {\post}$,
  where $\delta \eqdef f(\preexp(m_1, m_2))$.
  Now, we have
  \begin{align*}
    \E \eta {\postexp + \postexp'}
      &= \E \eta {\postexp} + \E \eta {\postexp'}
       \leq f(\preexp(m_1, m_2)) + \E \eta {\postexp'}.
  \end{align*}

  For $\ov{m}_1, \ov{m}_2 \in \supp(\eta)$, from $\proj_i(\eta) =
  \dsem{m_i}{s_i}$ and $\postexp' \# \MV(s_1), \MV(s_2)$, we have
  $\postexp'(\ov{m}_1, \ov{m}_2) = \postexp'(m_1, m_2)$.
  The last line is because $f$ is a non-contractive linear function, $f \in
  \mathcal{L}^\geq$. Hence,
  \begin{align*}
    \E \eta {\postexp + \postexp'}
      &\leq f(\preexp(m_1, m_2)) +
          \E {\ov{m}_1, \ov{m}_2 \sim \eta} {\postexp'(m_1, m_2)} \\
      &= f(\preexp(m_1, m_2)) + \wt{\eta} \cdot \postexp'(m_1, m_2)
       \leq f(\preexp(m_1, m_2)) + \postexp'(m_1, m_2) \\
      &\leq f(\preexp(m_1, m_2)) + f(\postexp'(m_1, m_2))
       = f(\preexp(m_1, m_2) + \postexp'(m_1, m_2)).
  \end{align*}

  Hence, $\eta$ is a coupling s.t.
  $\bcouplingsupp {\eta} {\postexp + \postexp'}
     {\delta'} {\dsem{m_1}{s_1}} {\dsem{m_2}{s_2}} {\post}$,
  where $\delta' \eqdef f((\preexp + \postexp')(m_1, m_2))$.

\item[\rname{Mult-Max}] A basic result about couplings is that for any two
  distributions $\eta_1, \eta_2$ over the same set, there exists a coupling
  $\eta$ such that:
  \[
  \Pr_{(a_1, a_2) \sim \eta} [ a_1 \neq a_2 ] = \text{TV}(\eta_1, \eta_2) .
  \]
  This coupling is called the \emph{maximal} or \emph{optimal} coupling (see,
  e.g., \citet{Thorisson00}).
  
  To show soundness of the rule, let $(m_1, m_2)$ two memories and, for $i \in
  \{1, 2\}$, let $\mu_i \eqdef \dsem{m_i}{\vec{x}\irnd \multD(\vec{p})}$.  Let
  $\nu_i$ be the distributions $\dsem{m_i}{\multD(\vec{p})}$.
  Let $\mu$ be a coupling of $\mu_1$ and $\mu_2$ such that the projection of
  $\mu$ on the variables $x\lside$ and $x\rside$ is a maximal coupling of
  $\nu_1$ and $\nu_2$; note that the projection of $\mu_1$ onto $x\lside$ is
  $\nu_1$, and the projection of $\mu_2$ onto $x\rside$ is $\nu_2$.
  Now, we can prove the inequality on distances:
  \[
    \E {(m_1', m_2') \sim \mu} {\| \dsem{m_1'}{\vec{x}} - \dsem{m_2'}{\vec{x}} \|_1}
    \leq \| \dsem{m_1}{\vec{p}} - \dsem{m_2}{\vec{p}} \|_1 .
  \]
  By definition we have:
  \begin{align}
    \E {(m_1', m_2') \sim \mu} {\| \dsem{m_1'}{\vec{x}} - \dsem{m_2'}{\vec{x}}\|_1}
    &= \sum_{m_1', m_2'} \mu(m_1', m_2') \cdot \| \dsem{m_1'}{\vec{x}} - \dsem{m_2'}{\vec{x}}\|_1
    \notag \\
    & =2 \sum_{m_1', m_2'} \sum_{a \neq b}
    \ind{ \dsem{m_1'}{\vec{x}} = a} \ind{ \dsem{m_2'}{\vec{x}} = b} \mu(m_1', m_2') 
    \tag{distance is $0$ or $2$} \\
    &= 2 \sum_{a \neq b} \sum_{m_1', m_2'}
    \ind{ \dsem{m_1'}{\vec{x}} = a} \ind{ \dsem{m_2'}{\vec{x}} = b} \mu(m_1', m_2')
    \notag \\
    &= 2 \sum_{a \neq b}  \Pr_{(m_1', m_2') \sim \mu} [ \dsem{m_1'}{\vec{x}} = a, \dsem{m_2'}{\vec{x}} = b ]
    \notag \\
    &= 2\cdot\Pr_{(m_1', m_2') \sim \mu} [ \dsem{m_1'}{\vec{x}} \neq \dsem{m_2'}{\vec{x}} ]
    \notag \\
    &= 2 \text{TV}(\nu_1, \nu_2)
    \tag{maximal coupling} \\
    &= \| \dsem{m_1}{\vec{p}} - \dsem{m_2}{\vec{p}} \|_{1} .
    \notag
  \end{align}

\item[\rname{Trans}]
  We prove by induction on $n \in \NN$ that for every two memories $m_1$ and
  $m_2$ such that $\preexp(m_1, m_2) = n$, there exists a coupling $\mu$ such
  that
  \[
    \bcouplingsupp \mu {\postexp} {f(\preexp(m_1, m_2))}
    {\dsem{m_1}{s}} {\dsem{m_2}{s}} {\post} .
  \]
  For the base case $\preexp(m_1, m_2) = 0$, the inductive hypothesis on the
  premise
  \[
    \Peprhl{s}{s}{\pre \land \preexp=0;-}{\post;\postexp}{0}
  \]
  give the desired coupling.

  For the inductive step $\preexp(m_1, m_2) = n+1$, by path
  compatibility $\pathcompat(\pre, \preexp)$  there exists $m'$ with
  $\preexp(m_1, m') = 1$ and $\preexp(m', m_2) = n$ such that $\pre(m_1, m')$
  and $\pre(m', m_2)$. By induction on $n$, there exists $\mu_n$ such that
  $\bcouplingsupp {\mu_n} {\postexp}
      {\delta_n} {\dsem{m'}{s}} {\dsem{m_2}{s}} {\post^*}$,
  where $\delta_n \eqdef f(\preexp(m', m_2))$.
  By the inductive hypothesis on premise
  \[
  \Peprhl{s}{s}{\pre\land\preexp=1; -}{\post; \postexp}{\const{f(1)}} ,
\]
  there exists $\mu_1$ such that
  $\bcouplingsupp {\mu_1} {\postexp} {f(1)}
     {\dsem{m_1}{s}} {\dsem{m'}{s}} {\post}$,
  Define the coupling
  \begin{align*}
    \mu(m_1, m_2) \triangleq
    \sum_{m} \frac{\mu_1(m_1, m) \cdot \mu_n(m, m_2)}{M(m)}
  \end{align*}
  where $M(m) \eqdef \pi_1(\mu_n)(m) = \pi_2(\mu_1)(m) = \dsem{m'}{s}$ and we
  drop terms with $M = 0$. By induction, $\supp(\mu) \models {\post}^2 \subseteq
  \post^*\subseteq \post$ since $\post$ is transitive. The marginal conditions
  are straightforward: for any $m_1$,
  \begin{align*}
    \pi_1(\mu)(m_1)
      &= \sum_{m_2} \mu (m_1, m_2) \\
      &= \sum_{m} \Bigg(
           \frac{\mu_1(m_1, m)}{\pi_1(\mu_n)(m)} \cdot
           \underbrace{%
             \sum_{m_2} \mu_n(m, m_2)}_{\pi_1(\mu_n)(m)}
         \Bigg) \\
        &= \sum_{m} \mu_1(m_1, m) = \pi_1(\mu_1)(m_1) = \dsem{m_1}{s} .
  \end{align*}
  Similarly, $\pi_2(\mu) = \dsem{m_2}{s}$.  Finally, we show the expected distance
  condition:
  \begin{align*}
    \E {\mu } {\postexp}
      &= \sum_{m_1, m_2} \mu (m_1, m_2) \cdot \postexp(m_1, m_2) \\
      &= \sum_{m_1, m_2, m}
           \frac{\mu_1(m_1, m) \cdot \mu_n(m, m_2)}{M(m)} \cdot \postexp(m_1, m_2) \\
      &\leq \sum_{m_1, m}
              \mu_1(m_1, m) \cdot \postexp(m_1, m)
                \sum_{m_2} \frac{\mu_n(m, m_2)}{M(m)} \\
      &\quad \quad \quad + \sum_{m, m_2}
            \mu_n(m, m_2) \cdot \postexp(m, m_2)
                \sum_{m_1} \frac{\mu_1(m_1, m)}{M(m)} \tag{$\postexp$ satisfies (H)} \\
      &= \E {\mu_1} {\postexp} + \E {\mu_n} {\postexp} \\
      &\leq f(\preexp(m_1, m')) + f(\preexp(m', m_2)) \tag{induction hypotheses} \\
      &= f (\preexp(m_1, m_2))  \tag{$f \in \mathcal{L}$} .
  \end{align*}

\end{description}
\end{proof}

\section{Details for Examples}

\subsection{Convex SGM (\cref{ex:convex-sgm})}

We detail the bounds in the two cases. In the first case, the selected samples
$S[i]\lside$ and $S[i]\rside$ may be different. We need to show:
\[
  \| (w\lside - \alpha_t \cdot (\nabla \ell(S[i], -))(w)\lside)
  - (w\rside - \alpha_t \cdot (\nabla \ell(S[i], -))(w)\rside) \|
  \leq \| w\lside - w\rside \| + 2 \alpha_t L .
\]
We can directly bound:
\begin{align*}
  &\| (w\lside - \alpha_t \cdot (\nabla \ell(S[i], -))(w)\lside)
  - (w\rside - \alpha_t \cdot (\nabla \ell(S[i], -))(w)\rside) \| \\
  &\leq \| w\lside - w\rside \|
  + \alpha_t \| (\nabla \ell(S[i], -))(w)\lside \|
  + \alpha_t \| (\nabla \ell(S[i], -))(w)\rside \| \\
  &\leq \| w\lside - w\rside \| + 2 \alpha_t L
\end{align*}
where the first inequality is by the triangle inequality, and the second
follows since $\ell(z, -)$ is $L$-Lipschitz. Thus, we can take
$f = \trans{2 \alpha_t L}$ in the first case.

The second case boils down to showing
\[
  \| (w\lside - \alpha_t \cdot (\nabla \ell(S[i], -))(w)\lside)
  - (w\rside - \alpha_t \cdot (\nabla \ell(S[i], -))(w)\rside) \|
  \leq \| w\lside - w\rside \| .
\]
when $S[i]\lside = S[i]\rside$. This follows from a calculation similar to the
proof by \citet[Lemma 3.7.2]{HardtRS16}:
\begin{align*}
  &\| (w\lside - \alpha_t \cdot (\nabla \ell(S[i], -))(w)\lside)
  - (w\rside - \alpha_t \cdot (\nabla \ell(S[i], -))(w)\rside) \|^2 \\
  &= \| w\lside - w\rside \|^2
  - 2 \alpha_t \langle (\nabla \ell(S[i], -))(w)\lside - (\nabla \ell(S[i], -))(w)\rside,
  w\lside - w\rside \rangle \\
  &+ \alpha_t^2\| (\nabla \ell(S[i], -))(w)\lside) - (\nabla \ell(S[i], -))(w)\rside \|^2 \\
  &\leq \| w\lside - w\rside \|^2
  - (2 \alpha_t/\beta - \alpha_t^2)
  \| (\nabla \ell(S[i], -))(w)\lside) - (\nabla \ell(S[i], -))(w)\rside \|^2 \\
  &\leq \| w\lside - w\rside \|^2 .
\end{align*}
The first inequality follows since convexity and Lipschitz gradient implies
that
\[
  \langle (\nabla \ell(S[i], -))(w)\lside - (\nabla \ell(S[i], -))(w)\rside,
  w\lside - w\rside \rangle
  \geq \frac{1}{\beta} \| (\nabla \ell(S[i], -))(w)\lside - (\nabla \ell( S[i],
  -))(w)\rside \|^2 .
\]
The second inequality follows from $0 \leq \alpha_t \leq 2/\beta$. Thus, we can
take $f = \id$ in the second case.

\subsection{Non-Convex SGM (\cref{ex:nonconvex-sgm})}

Suppose that the loss function $\ell$ is bounded in $[0, 1]$, possibly
non-convex, but $L$-Lipschitz and with $\beta$-Lipschitz gradient.  Suppose that
we take non-increasing step sizes $0 \leq \alpha_t \leq \sigma/t$ for some
constant $\sigma \geq 0$.  Then, we will prove the following judgment:
\[
  \Peprhl{\sgm}{\sgm}
  { \Adj(S\lside, S\rside); - }
  { \top; | \ell(w\lside, z) - \ell(w\rside, z) | }
  { \const{\epsilon} }
\]
where
\[
  \epsilon \triangleq (2/n) \left\lceil
    \left( \frac{2L^2}{\beta (1 - 1/n)} \right)^{1/(q + 1)} T^{q/(q + 1)}
  \right\rceil .
\]

This example uses an advanced analysis from \citet[Lemma 3.11]{HardtRS16}.  We
can't directly express that result in our logic, but we can inline the proof.
Roughly, the idea is that with large probability, the first bunch of steps don't
see the differing example. By the time we hit the differing example, the step
size has already decayed enough. To model this kind of reasoning, we will use
the program transformation rules to split the loop into iterations before the
critical step, and iterations after the critical step.  Then, we will perform a
probabilistic case in between, casing on whether we have seen the differing
example or not.

To begin, let the critical iteration be
\[
  t_0 \triangleq \left\lceil
    \left( \frac{2L^2}{\beta (1 - 1/n)} \right)^{1/(q + 1)} T^{q/(q + 1)}
  \right\rceil
\]
where $q \triangleq \beta \sigma$.  We can split the loop in $\sgm$ into two:
\[
  \begin{array}{l}
    t \iass 0; \\
    \iwhile{t < T \land t < t_0}{} \\
    \quad i \irnd [n]; \\
    \quad w \iass w - \alpha_t \cdot (\nabla \ell(S[i], -)) (w); \\
    \quad t \iass t + 1; \\
    \iwhile{t < T}{} \\
    \quad i \irnd [n]; \\
    \quad w \iass w - \alpha_t \cdot (\nabla \ell(S[i], -)) (w); \\
    \quad t \iass t + 1; \\
    \iret{w}
  \end{array}
\]
Call the loops $c_<$ and $c_\geq$, with loop bodies $w_<$ and $w_\geq$. In
the first loop, we will bound the probability of $\| w\lside - w\rside \| >
0$. We want to prove the judgment
\[
  \eprhl{w_<}{w_<}
  {t\lside = t\rside ; \ind{w \lside \neq w\rside} }
  {t\lside = t\rside ; \ind{w \lside \neq w\rside} }
  { \trans{1/n} } .
\]
Again, we use the identity coupling when sampling $i$. Then, we case on
whether we hit the differing example or not. In the first case, we hit the
differing example and we need to prove
\[
  \eprhl{w_<}{w_<}
  {t\lside = t\rside ; \ind{w \lside \neq w\rside} }
  {t\lside = t\rside ; \ind{w \lside \neq w\rside} }
  { \trans{1} } .
\]
This boils down to showing:
\[
  \ind{w - \alpha_t \cdot (\nabla \ell(S[i], -)) (w) \lside
    \neq w - \alpha_t \cdot (\nabla \ell(S[i], -)) (w) \rside}
  \leq \ind{w\lside \neq w\rside} + 1
\]
but this is clear since the indicator is in $\{ 0, 1 \}$.

In the second case, we hit the same example and need to prove:
\[
  \eprhl{w_<}{w_<}
  {t\lside = t\rside \land S[i] \lside = S[i] \rside ; \ind{w \lside \neq w\rside} }
  {t\lside = t\rside ; \ind{w \lside \neq w\rside} }
  { \id } .
\]
This boils down to showing:
\[
  \ind{w - \alpha_t \cdot (\nabla \ell(S[i], -)) (w) \lside
    \neq w - \alpha_t \cdot (\nabla \ell(S[i], -)) (w) \rside}
  \leq \ind{w\lside \neq w\rside}
\]
assuming that $S[i] \lside = S[i] \rside$. But this is clear also---if
$w\lside \neq w\rside$ then there is nothing to prove, otherwise if
$w\lside = w\rside$ then the projections are equal.

Putting these two cases together (noting that they happen with probability
$1/n$ and $1 - 1/n$ respectively) and applying the loop rule, we have:
\[
  \eprhl{w_<}{w_<}
  {t\lside = t\rside ; \ind{w \lside \neq w\rside} }
  {t\lside = t\rside ; \ind{w \lside \neq w\rside} }
  { \trans{t_0/n} }
\]
as desired.

Now, we perform a probabilistic case on $w\lside = w\rside$. Suppose $w\lside =
w\rside$. In the second loop, we know that $t\lside = t\rside \geq t_0$.  By
similar reasoning to the previous sections, we have:
\[
  \eprhl{w_\geq}{w_\geq}
  { t\lside = t\rside \land t\lside \geq t_0 ; \| w\lside - w\rside \| }
  { t\lside = t\rside \land t\lside \geq t_0 ; \| w\lside - w\rside \| }
  { f_c }
\]
where
\begin{align*}
  f_c(x) &\triangleq (1/n + (1 - 1/n)(1 + \alpha_t \beta)) x + 2 \alpha_t L / n \\
  &\leq (1 + (1 - 1/n)\sigma \beta/t) x + 2 \sigma L / t n \\
  &\leq \exp((1 - 1/n)\sigma \beta/t) x + 2 \sigma L / t n .
\end{align*}
In the last step, we use $1 + x \leq \exp(x)$.

We can then apply the loop rule to show:
\[
  \eprhl{c_\geq}{c_\geq}
  { t\lside = t\rside \land t\lside \geq t_0 \land w\lside = w\rside ; \| w\lside - w\rside \| }
  { t\lside = t\rside \land t\lside \geq t_0 ; \| w\lside - w\rside \| }
  { f }
\]
where
\begin{align*}
  f(x) &\triangleq
  x \cdot \prod_{r = t_0 + 1}^T \exp \left( (1 - 1/n) \frac{\sigma \beta}{r} \right)
  + \sum_{s = t_0 + 1}^T \frac{2\sigma L}{s n} \prod_{r = s + 1}^T
  \exp \left( (1 - 1/n) \frac{\sigma \beta}{r} \right) \\
  &= x \cdot \exp \left( \sigma \beta (1 - 1/n) \sum_{r = t_0 + 1}^T \frac{1}{r} \right)
  + \sum_{s = t_0 + 1}^T \frac{2\sigma L}{s n}
  \exp \left( \sigma \beta (1 - 1/n) \sum_{r = s + 1}^T \frac{1}{r} \right) \\
  &\leq x \cdot \exp \left( \sigma \beta (1 - 1/n) \log(T/t_0) \right)
  + \sum_{s = t_0 + 1}^T \frac{2\sigma L}{s n}
  \exp \left( \sigma \beta (1 - 1/n) \log(T/s) \right) \\
  &= x \cdot \exp \left( \sigma \beta (1 - 1/n) \log(T/t_0) \right)
  + \frac{2\sigma L}{n} T^{\beta \sigma(1 - 1/n)}
  \sum_{s = t_0 + 1}^T s^{- \beta \sigma(1 - 1/n) - 1} \\
  &\leq x \cdot \exp \left( \sigma \beta (1 - 1/n) \log(T/t_0) \right)
  + \frac{2\sigma L}{n} T^{\beta \sigma(1 - 1/n)}
  \cdot \frac{1}{\beta \sigma (1 - 1/n)} t_0^{-\beta \sigma (1 - 1/n)}
  \\
  &= x \cdot \exp \left( \sigma \beta (1 - 1/n) \log(T/t_0) \right)
  + \frac{2 L}{\beta(n - 1)} \left(\frac{T}{t_0}\right)^{\beta \sigma(1 - 1/n)}
  \\
  &\leq x \cdot \exp \left( \sigma \beta (1 - 1/n) \log(T/t_0) \right)
  + \frac{2 L}{\beta(n - 1)} \left(\frac{T}{t_0}\right)^{\beta \sigma} .
\end{align*}
Let the last term be $\rho$.  The first inequality uses $\sum_{t = a +
  1}^b 1/t \leq \log(b/a)$ and the second inequality uses $\sum_{t =
  a + 1}^b 1/t^c \leq a^{1 - c} / (c - 1)$ for $c > 1$; both facts
follow from bounding the sum by an integral. By applying the Lipschitz
assumption on $\ell$ and the \rname{Conseq} rule, we have:
\[
  \eprhl{c_\geq}{c_\geq}
  { t\lside = t\rside \land t\lside \geq t_0 \land w\lside = w\rside ; - }
  { t\lside = t\rside \land t\lside \geq t_0 ; | \ell(w, z)\lside - \ell(w,
    z)\rside |}
  { \const{L\rho} }
\]
for every example $z \in Z$.

In the other case, suppose $w\lside \neq w\rside$. Applying the rule of
consequence using the fact that the loss function is bounded in $[0, 1]$, we
have:
\[
  \eprhl{c_\geq}{c_\geq}
  { t\lside = t\rside \land t\lside \geq t_0 \land w\lside \neq w\rside ; - }
  { t\lside = t\rside \land t\lside \geq t_0 ; | \ell(w, z)\lside - \ell(w, z)\rside| }
  { \const{1} } .
\]
Applying the rule \rname{SeqCase-A} to link the two loops, we have:
\[
  \eprhl{\sgm}{\sgm}
  { \Adj(S\lside, S\rside) ; - }
  { \top ; | \ell(w, z)\lside - \ell(w, z)\rside| }
  { \const{t_0/n + L\rho} } .
\]
Note that setting
\[
  t_0 \geq \delta \triangleq \left( \frac{2L^2}{\beta (1 - 1/n)} \right)^{1/(q + 1)}
  T^{q/(q + 1)}
\]
gives $t_0 / n + L \rho \leq 2 t_0 / n$ since $\delta$ balances the two terms,
so we can conclude.

The proof uses an advanced sequential composition rule \rname{SeqCase-A}, shown
in \cref{fig:morerules}. This rule combines sequential composition with a case
analysis on an event that may depend on both memories.

\begin{figure}[h!b]
\[
\inferrule[SeqCase-A]
{ 
   \Peprhl{s_1}{s_2}{\pre;-}{\Theta; \ind{e}}{\const{\gamma}} \\\\
   \Peprhl{s'_1}{s'_2}{\Theta \land e; -}{\post; \preexp}{f} \\
   \Peprhl{s'_1}{s'_2}{\Theta \land \neg e; -}{\post; \preexp}{f_\neg} }
 { \Peprhl{s_1;s'_1}{s_2;s'_2}
     {\pre; -}{\post; \preexp}{\const{\gamma} \cdot f + f_\neg} }
\]
\caption{\label{fig:morerules} Advanced sequential case rule}
\end{figure}

\end{document}

\fi

\end{document}
